\documentclass[11pt]{article}
\usepackage[colorlinks = true, linkcolor = blue, filecolor = blue, urlcolor = blue, citecolor = blue]{hyperref}
\usepackage{graphicx}
\usepackage{color}
\usepackage{amsmath, amssymb, amsthm, comment, enumerate, ascmac}
\usepackage{latexsym, mathrsfs, mathtools, psfrag, setspace,lscape}
\usepackage{newtxtext}
\usepackage{natbib}
\usepackage[stable]{footmisc}
\usepackage{multirow}
\usepackage[margin = 1truein, nohead]{geometry}

\bibliographystyle{tandfx} 

\newcommand{\mbf}{\mathbf}
\newcommand{\mcl}{\mathcal}
\newcommand{\mbb}{\mathbb}
\newcommand{\mrm}{\mathrm}

\renewcommand{\hat}{\widehat}
\renewcommand{\tilde}{\widetilde}
\newcommand{\E}{\mathbb{E}}
\newcommand{\bgamma}{\boldsymbol \gamma}

\newtheorem{theorem}{Theorem}[section]

\newtheorem{example}[theorem]{Example}
\newtheorem{remark}[theorem]{Remark}

\newtheorem{lemma}[theorem]{Lemma}

\newtheorem{assumption}{Assumption}[section]

\makeatletter

\@addtoreset{figure}{section}

\@addtoreset{table}{section}
\makeatother

\oddsidemargin=0cm \evensidemargin=0cm
\numberwithin{equation}{section}
 
\allowdisplaybreaks[1]

\DeclareMathOperator*{\argmin}{\arg\!\min}
\DeclareMathOperator*{\argmax}{\arg\!\max}

\title{A Pairwise Strategic Network Formation Model with Group Heterogeneity: With an Application to International Travel}

\author{Tadao Hoshino\thanks{
    School of Political Science and Economics, Waseda University, 1-6-1 Nishi-waseda, Shinjuku-ku, Tokyo 169-8050, Japan. Email: \href{mailto:thoshino@waseda.jp}{thoshino@waseda.jp}.
    This work is supported financially by JSPS Grant-in-Aid for Scientific Research C-20K01597.
    }
}

\begin{document}
\maketitle
	
\begin{abstract}

    In this study, we consider a pairwise network formation model in which each dyad of agents strategically determines the link status between them.
    Our model allows the agents to have unobserved group heterogeneity in the propensity of link formation.
    For the model estimation, we propose a three-step maximum likelihood (ML) method.
    First, we obtain consistent estimates for the heterogeneity parameters at individual level using the ML estimator.
    Second, we estimate the latent group structure using the binary segmentation algorithm based on the results obtained from the first step.
    Finally, based on the estimated group membership, we re-execute the ML estimation. 
    Under certain regularity conditions, we show that the proposed estimator is asymptotically unbiased and distributed as normal at the parametric rate.
    As an empirical illustration, we focus on the network data of international visa-free travels.
    The results indicate the presence of significant strategic complementarity and a certain level of degree heterogeneity in the network formation behavior.
		
\bigskip
		
\noindent \textit{Keywords}: binary game; binary segmentation; degree heterogeneity; latent group structure; network formation.
\end{abstract}
	
\newpage 
\section{Introduction}\label{sec:introduction}

Empirical modeling of network formation is an important research topic that has been studied for several decades.
While most of these models has been developed in the mathematical statistics literature, as the importance of network structure in many economic activities has been increasingly recognized, there is currently a growing number of econometric studies that focus on network formation in conjunction with the significant advancement in the related econometric techniques.\footnote{
    For recent developments regarding econometric approaches for analyzing network formation, we refer readers to, for example, \cite{chandrasekhar2016econometrics} and \cite{de2020econometric}.
    }

Econometric studies on network formation can be classified into two types: those that attempt to explicitly incorporate the interaction of individuals in the realizing network structure endogenously affecting the network formation behavior (e.g., \citealp{leung2015two}; \citealp{mele2017structural}; \citealp{sheng2020structural}) and those that do not account for such simultaneous interactions but emphasize modeling a flexible form of individual heterogeneity (e.g., \citealp{graham2017econometric}; \citealp{jochmans2018semiparametric}; \citealp{dzemski2019empirical}).
For the former type, network formation is modeled as a game in which agents strategically form links to maximize their payoffs.
Although this game-theoretic approach is (economic) theoretically well-underpinned, we often encounter serious analytical difficulties due to the presence of multiple equilibria.
To circumvent these difficulties, we typically need to introduce some ad hoc behavioral assumptions into the network formation process, or we simply discontinue point-identifying the models and resort to partial identification.
Compared with the former, the latter is more ``descriptive'' than ``structural'', but has great flexibility in the model specification.
These models are relatively easy to implement and, thus, are appealing to empirical researchers.
However, they are not suitable for analyzing the interactions of agents in network formation, which should be an essential factor in economic and social network data.

These two types of econometric models have their own advantages.
Hence, it is an ingenuous idea to construct a new model that has the advantages of both approaches by combining them.
However, to my best knowledge, there are only a few papers that address this way of model extension (e.g., 
\citealp{graham2016homophily}; \citealp{graham2020testing}; \citealp{pelican2020optimal}).
\cite{graham2016homophily} gets around the multiple equilibria issue by considering the ``dynamic'' (rather than the instantaneous) interdependencies in network formation.
The latter two papers consider a quite general framework that incorporates both a general form of strategic interaction and unobserved degree heterogeneity into a single model.
However, they mainly focus on testing the presence of interactions, but not on the estimation of the models.

In this paper, we propose a new ``pairwise'' network formation model that is empirically tractable while retaining the nice properties of the above-mentioned approaches.
More specifically, we assume that each link connection is determined by the strategic interaction solely between the corresponding dyad of agents, without affecting or being affected by other dyads, rather than regarding the realized network as a consequence of a large $n$-player game.
Although ignoring such network externalities limits the range of applications of our model, it would still cover a fairly large number of interesting empirical situations, and, most importantly, the multiple equilibria problem can be greatly mitigated.
In addition, we allow each agent's payoff to depend on two unobserved preference parameters that represent his/her outgoing and incoming propensity, which we refer to as the sender and the receiver effect, respectively.

For estimating our pairwise network formation model, assuming that the model error terms follow some parametric distribution, we propose using the maximum likelihood (ML) method.
Although multiple equilibria can occur even in our binary game situation, we can address this issue in the same manner as in \cite{bresnahan1990entry} and \cite{berry1992estimation}.
As in \cite{dzemski2019empirical} and \cite{yan2019statistical}, we treat the agent-specific preference heterogeneity parameters as the fixed-effect parameters to be estimated.
However, note that if we include these heterogeneity parameters directly into the model, as the dimension of the parameters increases proportionally to the sample size, our ML estimator suffers from the \textit{incidental parameter} problem.
As will be confirmed in the numerical simulations reported later, the incidental parameter bias can be severe.
Thus, to avoid the incidental parameter problem, as a second novel part of this study, we focus on a situation where the individuals can be classified into several unknown groups and their specific effects are homogeneous within each group.
If the number of the latent groups is fixed, we can expect that the ML estimator becomes asymptotically unbiased at the parametric rate and hence the standard inference procedure can be applied.

In the literature on statistical network data analysis, uncovering such latent group structures in networks has been intensively studied (e.g., \citealp{newman2004finding}; \citealp{bickel2009nonparametric}; \citealp{fortunato2010community}; \citealp{karrer2011stochastic}; \citealp{rohe2011spectral}; \citealp{abbe2017community}).
Indeed, putting the strategic interaction effect aside, our network formation model can be regarded as a type of the \textit{stochastic block model}, one of the major modeling approaches in the above literature, with additional covariates, as in \cite{zhang2019node}.
In panel data analysis also, identification of unobserved grouped heterogeneity is one of the most active research areas (e.g., \citealp{bonhomme2015grouped}; \citealp{ke2016structure}; \citealp{su2016identifying}; \citealp{wang2018homogeneity}).
Although the application of these grouping methods to econometric network models has been relatively limited thus far, it is a promising approach, as discussed in \cite{bonhomme2020econometric}.
If we can recover the true group structure with probability approaching one (w.p.a.1) by using any method, the individual fixed-effect parameters can be estimated at a faster rate than the case of no grouping.
In this study, among several alternative methods, we adopt the binary segmentation (BS) method (see, e.g., \citealp{bai1997estimating}; \citealp{ke2016structure}; \citealp{lian2019homogeneity}; \citealp{wang2020identifying}).
Compared to the other grouping methods, the BS method has several favorable properties including fast computation speed and robustness as it does not require us to set initial values.

The whole estimation procedure is divided into three steps.
The first step is to obtain the ML estimator without considering the latent group structures.
Although this estimator suffers from the incidental parameter bias, it is still possible to produce consistent estimates for the heterogeneity parameters.
In the second step, we apply the BS method with respect to the estimated sender effect parameters and the receiver effect parameters separately to identify each agent's group memberships.
In the third step, we re-estimate the model using the ML method given the estimated group structure.
Under certain regularity conditions, we show that the proposed estimator is asymptotically unbiased and normal at the parametric rate.
Furthermore, the estimator is asymptotically equivalent to the ``oracle'' estimator that is obtained based on the (unknown) true group memberships.

To illustrate our model empirically, we investigate the formation of international visa-free travel networks, where the dependent variable of interest is defined as follows: $g_{i,j} = 1$ if country $i$ allows the citizens in country $j$ to visit $i$ without visas and $g_{i,j} = 0$ if not.
We apply our model framework to the network of 57 countries selected mainly from Asia, the Middle East, the former USSR, and Oceania.
As expected, we find the presence of a certain level of degree heterogeneity in terms of both the sender and the receiver effects.
Interestingly, there seems to be a negative correlation between the sender effects and the receiver effects (in other words, there is a tendency that a country's sender effect increases as its receiver effect decreases).
Our estimation result also suggests that there is a significant strategic complementarity in the network formation behavior.
Another interesting finding is that the countries are \textit{homophilous} -- tend to connect with similar others -- in terms of the political system.
These findings would highlight the usefulness of the proposed model and method.
	
\paragraph{Organization of the paper:}
The remainder of the paper is organized as follows.
In Section \ref{sec:model}, we formally introduce the model investigated in this study.
In this section, we demonstrate that our pairwise model exhibits multiple equilibria and discuss the conditions under which the model can be point-identified. 
Section \ref{sec:3stepML} provides a detailed explanation about our three-step ML estimator.
We also investigate the asymptotic properties of the proposed estimator in this section.
In Section \ref{sec:MC}, we present a set of Monte Carlo experiments to evaluate the finite sample performance of the proposed estimator.
Section \ref{sec:empiric} presents our empirical analysis, and, finally, Section \ref{sec:conclusion} concludes.
All the technical details are relegated to Appendix.

\paragraph{Notation:}
For a natural number $n$, $I_n$ denotes an $n \times n$ identity matrix.
$\mbf{1}\{\cdot\}$ denotes the indicator function, which is one if its argument is true and zero otherwise.
For a matrix $A$, we use $||A||$ to denote its Frobenius norm: $||A|| =\sqrt{\mrm{tr}\{AA^\top\}}$, where $\mrm{tr}\{\cdot\}$ is a trace of a matrix.
When $A$ is a square matrix, we use $\lambda_{\mrm{min}}(A)$ to denote its smallest eigenvalue.
For a vector $\mbf{a} = (a_1, \ldots, a_k)^\top$, $||a||_\infty$ denotes its maximum norm: $||\mbf{a}||_\infty = \max_{1 \le i \le k}|a_i|$.
For a general set $\mcl{X}$, we use $\mcl{X}^{\mathrm{int}}$ to denote its interior.
In addition, $|\mcl{X}|$ denotes the cardinality of $\mcl{X}$.
$c$ (possibly with subscript) denotes a generic positive constant whose exact value may vary per case.


\section{Model Setup and Identification}\label{sec:model}

\subsection{Pairwise strategic network formation model}

Suppose that we have a sample of $n$ agents that form social networks whose connections are represented by an $n \times n$ adjacency matrix $G_n = (g_{i,j})_{1 \le i,j \le n}$.
These agents can be individuals, firms, municipalities, or nations depending on the context.
The network is directed; that is, regardless of the value of $g_{j,i}$, we observe $g_{i,j} = 1$ if agent $i$ links to $j$ and $g_{i,j} = 0$ otherwise.
There are no self-loops; that is, the diagonal elements of $G_n$ are all zero.
Throughout the paper, we assume that the status of $(g_{i,j}, g_{j,i})$ is determined solely by the pair of agents $(i,j)$, without considering the status of other network links.
Specifically, for each pair $(i,j)$, suppose that $i$'s marginal payoff of forming a link with $j$ given $g_{j,i} = q$ is written as
\begin{align*}
    u_{i,j}(q) = Z_{i,j}^\top \beta_0 + \alpha_0 q + A_{0,i} + B_{0,j} - \epsilon_{i,j}, \;\; \text{for} \;\; i \neq j.
\end{align*}
Here, $Z_{i,j} \in \mbb{R}^{d_z}$ is a vector of observed covariates, $A_{0,i} \in \mbb{R}$ is agent $i$'s individual specific effect as a ``sender'', $B_{0,j}\in\mbb{R}$ is $j$'s individual specific effect as a ``receiver'', $\epsilon_{i,j} \in \mbb{R}$ is an unobservable payoff component, and $\beta_0 \in \mbb{R}^{d_z}$ and $\alpha_0 \in \mbb{R}$ are unknown coefficient vector and the interaction effect parameter, respectively.
The covariates $Z_{i,j}$ and $Z_{j,i}$ may contain common elements; however, in the later discussion, we require that they must have some agent specific elements that can vary across the partners.
The individual specific effects $A_{0,i}$ and $B_{0,j}$, which we call the sender and the receiver effect, respectively, can be interpreted as the level of $i$'s willingness to create connections with others and the popularity of $j$, respectively, that generate degree heterogeneity across the agents.
Following \cite{dzemski2019empirical} and \cite{yan2019statistical}, we treat $\{(A_{0,i}, B_{0,i})\}$ as fixed-effect parameters to be estimated.

We assume that the agents have complete information; that is, the realizations of $(Z_{i,j}, Z_{j,i})$ and $(\epsilon_{i,j}, \epsilon_{j,i})$ are common knowledge to both $i$ and $j$.
Then, if we assume that the observed network $G_n$ is formed by a collection of Nash equilibrium actions, we obtain the following econometric model:
\begin{align}\label{eq:model}
\begin{split}
    g_{i,j} & = \mbf{1}\left\{ Z_{i,j}^\top \beta_0 + \alpha_0 g_{j,i} + A_{0,i} + B_{0,j} \ge \epsilon_{i,j} \right\} \\
    g_{j,i} & = \mbf{1}\left\{ Z_{j,i}^\top \beta_0 + \alpha_0 g_{i,j} + A_{0,j} + B_{0,i} \ge \epsilon_{i,j} \right\}, \;\; \text{for} \;\; i \neq j.
\end{split}
\end{align}

The following are the two examples to which the above framework can be potentially applied.

\begin{example}[Online social networking]\upshape
    The analysis of online social networking behavior is an active research topic in network science.
    For some social networking sites, users can easily establish links to others (become a \textit{follower}) without mutual consent.
    In addition, whether a person becomes a follower of someone is often irrelevant to who else they are following.
    Thus, this would be a situation where our framework reasonably fits.
\end{example}

\begin{example}[International visa-free network]\upshape
    In the research on international migration and tourism, investigating the determinants and impacts of visa policies is one of the central interests (e.g., \citealp{neiman2009impact}; \citealp{neumayer2010visa}; \citealp{mckay2018tall}).
    As bilateral visa policies is naturally observed as a consequence of strategic (economic and/or political) interactions between the two countries, our model would be an appropriate analytical tool here.
    In our empirical study presented in Section \ref{sec:empiric}, by setting $g_{i,j} = 1$ if country $i$ allows visa-free entry for the citizens of country $j$ and $g_{i,j} = 0$ if not, we will show that the magnitude of the bilateral interaction in visa policies is significant. \label{ex:visa}
\end{example}

In these examples, we can naturally imagine that the strategic interaction effect $\alpha_0$ is positive (i.e., strategic complements).
We assume that strategic complementarity would be reasonable for most empirical situations of network formation games.
Then, throughout the paper, we impose this assumption: $\alpha_0 > 0$.
Under strategic complementarity, each pair's Nash equilibrium action can be summarized in Figure \ref{fig:nash}.
As shown in the figure, the space of $(\epsilon_{i,j}, \epsilon_{j,i})$ cannot be partitioned into non-overlapping regions associated with the four alternative realizations of $(g_{i,j}, g_{j,i})$.
That is, both $(g_{i,j}, g_{j,i}) = (1,1)$ and $(g_{i,j}, g_{j,i}) = (0,0)$ can occur in the shaded area in the figure, and the link status is not uniquely determined in this area (i.e., multiple equilibria exist).
This non-uniqueness of model-consistent decisions is called \textit{incompleteness} and has been extensively studied in the literature on simultaneous equation models for discrete outcomes  (e.g., \citealp{tamer2003incomplete}; \citealp{lewbel2007coherency}; \citealp{ciliberto2009market}; \citealp{chesher2020structural}).

\begin{figure}[!h]
	\centering
	\includegraphics[width = 10cm]{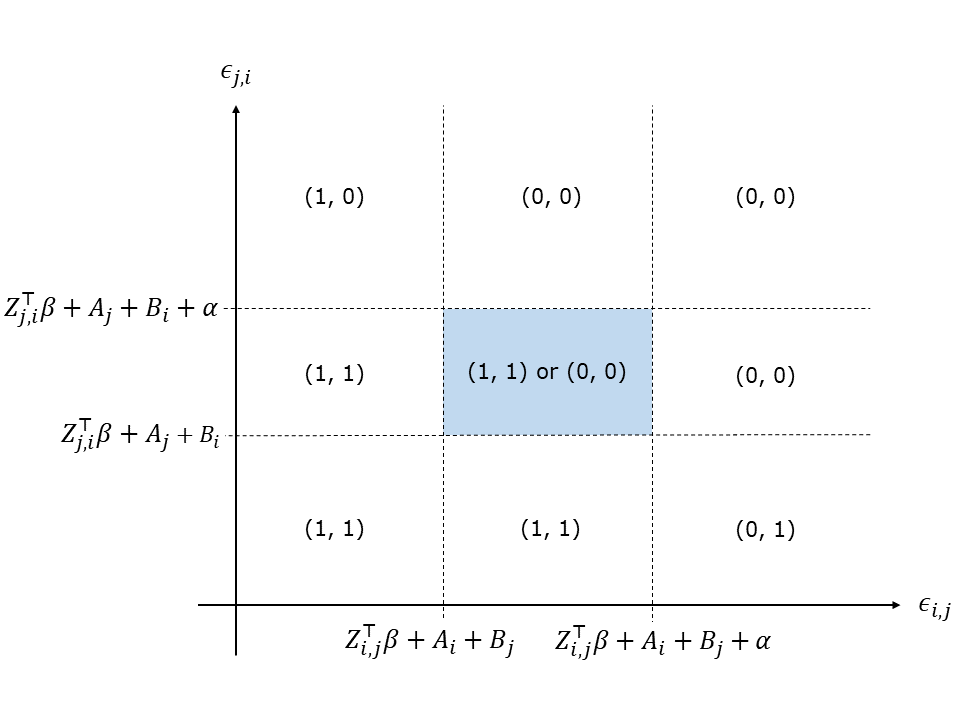}
	\caption{Pure strategy Nash equilibrium}
	\label{fig:nash}
\end{figure}

There are several approaches to handle this incompleteness issue in the literature.\footnote{
    Refer to \cite{de2013econometric} for a comprehensive survey on this topic.
    }
Among them, this study adopts the traditional approach developed by \cite{bresnahan1990entry} and \cite{berry1992estimation} that focuses only on the unique equilibrium outcomes.
That is, we consider estimating the model based only on the information about ``one-way links'' in the network.

In the following, we assume that $\{\epsilon_{i,j}\}$ are identically distributed with a known cumulative distribution function (CDF) $F$.
Further, we assume that the pairs $\{(\epsilon_{i,j}, \epsilon_{j,i})\}$ are independent and identically distributed (i.i.d.) across pairs, and their joint distribution is represented by $H(\cdot, \cdot; \rho_0)$ such that $\Pr(\epsilon_{i,j} \le a_1, \epsilon_{j,i} \le a_2) = H(a_1, a_2 ; \rho_0)$, where $\rho_0 \in \mbb{R}$ is a parameter controlling the correlation between $\epsilon_{i,j}$ and $\epsilon_{j,i}$.
Define $y^{(1,0)}_{i,j} \equiv \mbf{1}\{(g_{i,j}, g_{j,i}) = (1,0)\}$ and $y^{(0,1)}_{i,j} \equiv \mbf{1}\{(g_{i,j}, g_{j,i}) = (0,1)\}$.
Let $\chi_{n,a}$ be the $a$-th column of $I_n$, $\mbf{A}_0 = (A_{0,1}, \ldots, A_{0,n})^\top$, and $\mbf{B}_0 = (B_{0,1}, \ldots, B_{0,n})^\top$, so that we can write 
\begin{align*}
    Z_{i,j}^\top \beta_0 + A_{0,i} + B_{0,j}= W_{i,j}^\top \Pi_0, 
\end{align*}
where $W_{i,j} = (Z_{i,j}^\top, \chi_{n,i}^\top, \chi_{n,j}^\top)^\top$ and $\Pi_0 = (\beta_0^\top, \mbf{A}_0^\top, \mbf{B}_0^\top)^\top$.
In addition, we denote $\theta_0 = (\beta_0^\top, \alpha_0, \rho_0)^\top$ and $\bgamma_0 = (\mbf{A}_0^\top, \mbf{B}_0^\top)^\top$.
Then, the conditional probabilities of $\{y_{i,j}^{(1,0)} = 1\}$ and $\{y_{i,j}^{(0,1)} = 1\}$ are respectively given as follows:
\begin{align*}
    P^{(1,0)}_{i,j}(\theta_0, \bgamma_0) 
    & \equiv F(W_{i,j}^\top \Pi_0) - H(W_{i,j}^\top \Pi_0, W_{j,i}^\top \Pi_0 + \alpha_0; \rho_0), \\
    P^{(0,1)}_{i,j}(\theta_0, \bgamma_0) 
    & \equiv F(W_{j,i}^\top \Pi_0) - H(W_{i,j}^\top \Pi_0 + \alpha_0, W_{j,i}^\top \Pi_0; \rho_0).
\end{align*}
Here, note that the equalities $y^{(1,0)}_{i,j} = y^{(0,1)}_{j,i}$ and $P^{(1,0)}_{i,j}(\theta, \bgamma) = P^{(0,1)}_{j,i}(\theta, \bgamma)$ hold.
Thus, the likelihood function can be concentrated with respect to $(y_{i,j}^{(1,0)}, P_{i,j}^{(1,0)}(\theta, \bgamma))$; hereinafter, we omit the superscripts and denote $y_{i,j} = y_{i,j}^{(1,0)}$ and $P_{i,j}(\theta, \bgamma) = P_{i,j}^{(1,0)}(\theta, \bgamma)$ when there is no confusion.
Then, the log-likelihood function can be written as
\begin{align}\label{eq:llfunc}
    \begin{split}
    \mcl{L}_n(\theta, \bgamma) 
    & = \frac{2}{N}\sum_{i = 1}^n \sum_{j > i} \left[ y_{i,j} \ln P_{i,j}(\theta, \bgamma) + y_{j,i} \ln P_{j,i}(\theta, \bgamma) + (1 - y_{i,j} - y_{j,i}) \ln (1 - P_{i,j}(\theta, \bgamma) - P_{j,i}(\theta, \bgamma) )\right] \\
    & = \frac{1}{N}\sum_{i = 1}^n \sum_{j \neq i} \left[ y_{i,j} \ln P_{i,j}(\theta, \bgamma) + y_{j,i} \ln P_{j,i}(\theta, \bgamma) + (1 - y_{i,j} - y_{j,i}) \ln (1 - P_{i,j}(\theta, \bgamma) - P_{j,i}(\theta, \bgamma) )\right] \\
    & = \frac{1}{N} \sum_{i = 1}^n \sum_{j \neq i} \left[ 2 y_{i,j} \ln P_{i,j}(\theta, \bgamma)  + (1 - 2y_{i,j}) \ln (1 - P_{i,j}(\theta, \bgamma) - P_{j,i}(\theta, \bgamma) ) \right],
    \end{split}
\end{align}
where $N \equiv n(n-1)$.
As above, we can consider three equivalent representations for the log-likelihood function and switch between them according to analytical convenience.

\subsection{Discrete heterogeneity}

As mentioned in Introduction section, this study considers situations where the agents are grouped into several sub-samples, and the individual fixed effects are heterogeneous across these groups but are homogeneous within the groups in the following manner:
\begin{align}\label{eq:grouping}
    \begin{split}
    A_{0,i}
    = \sum_{k = 1}^{K^A} a_{0,k} \cdot \mbf{1}\{ i \in \mcl{C}^A_{0,k}\},
    \quad
    B_{0,i}
    = \sum_{k = 1}^{K^B} b_{0,k} \cdot \mbf{1}\{ i \in \mcl{C}^B_{0,k}\}.
    \end{split}
\end{align}
That is, the agents can be classified into $K^A$ groups $\mcl{C}^A_0 \equiv \{\mcl{C}^A_{0,1}, \ldots, \mcl{C}^A_{0,K^A}\}$ in terms of the sender effects $\{A_{0,i}\}$, where $K^A$ is the total number of groups, which form a partition of $\{1, \ldots, n\}$ into $K^A$ subsets.
Similarly, in terms of the receiver effects $\{B_{0,i}\}$, the agents can be grouped as $\mcl{C}^B_0 \equiv \{\mcl{C}^B_{0,1}, \ldots, \mcl{C}^B_{0,K^B}\}$.
When an individual is a member of the intersection $\mcl{C}_{0,k}^A \bigcap \mcl{C}_{0,l}^B$, his/her sender and receiver effects are equal to $a_{0,k}$ and $b_{0,l}$, respectively.
This intersection set would correspond to one ``community'' in the community detection literature.
The group where each individual belongs to remains unknown to us.
Meanwhile, it is often assumed in the literature that the number of groups is known to researchers (e.g., \citealp{bonhomme2015grouped}; \citealp{okui2020heterogeneous}).
Then, following these studies, we treat $K^A$ and $K^B$ as known values and assume that $K^A, K^B \ge 2$.\footnote{ 
    This assumption should be debatable.
    Rather than directly specifying the number of groups, in the literature on the BS algorithm for example, \cite{ke2016structure} and \cite{lian2019homogeneity} propose introducing an additional threshold parameter to detect the group structure.
    However, since the number of groups can vary only discretely with the threshold value, introducing the threshold parameter is essentially equivalent to selecting the number of groups.
    For the use of the BS algorithm, \cite{wang2020identifying} formally shows that Bayesian Information Criterion (BIC)-type criterion can consistently select the correct number of groups as the sample size increases.  
    }
We discuss how to choose $K^A$ and $K^B$ in practice in Section \ref{sec:empiric}.
Note that transforming $(\mbf{A}_0, \mbf{B}_0)$ to $(\mbf{A}_0 + c, \mbf{B}_0 - c)$ for any constant $c$ does not change the model \eqref{eq:model}.
Thus, without loss of generality, we assume that $a_{0,1} = 0$ for location normalization.

Under this setup, the full ML estimator solves
\begin{align}\label{eq:full_mle}
    \max_{\left(\theta, \mbf{a}, \mbf{b}, \mcl{C}^A, \mcl{C}^B \right) }\mcl{L}_n(\theta, \mbf{A}, \mbf{B}),
\end{align}
where $\mbf{a} \equiv (a_1, \ldots, a_{K^A})$ with $a_1 = 0$, $\mbf{b} \equiv (b_1, \ldots, b_{K^B})$, $\mcl{C}^A \equiv \{\mcl{C}^A_1, \ldots, \mcl{C}^A_{K^A}\}$, $\mcl{C}^B \equiv \{\mcl{C}^B_1, \ldots, \mcl{C}^B_{K^B}\}$, $A_i = \sum_{k = 1}^{K^A} a_k \cdot \mbf{1}\{ i \in \mcl{C}^A_k\}$, and $B_i = \sum_{k = 1}^{K^B} b_k \cdot \mbf{1}\{ i \in \mcl{C}^B_k\}$.
The maximization problem in \eqref{eq:full_mle} is clearly a combinatorial (NP-hard) optimization problem.
In the context of panel data models, several authors have proposed iterative k-means (like) algorithms to computationally obtain a (local) solution efficiently to the problems similar to \eqref{eq:full_mle} (e.g., \citealp{bonhomme2015grouped}; \citealp{liu2020identification}).
However, the iterative algorithm is still computationally demanding.
More importantly, it cannot be directly applied to our network model where each agent's heterogeneity parameters affect not only the value of his/her own likelihood function but also that of the others.

Hence, in this paper, we propose to decompose the maximization problem in \eqref{eq:full_mle} into three steps.
The first step is to estimate $\bgamma_0 = (\mbf{A}_0^\top, \mbf{B}_0^\top)^\top$ using the full ML estimator based on the log-likelihood function in \eqref{eq:llfunc} without explicitly considering the group structure.
Given the consistent estimates of these parameters, the second step is to estimate the group memberships $\mcl{C}_0^A$ and $\mcl{C}_0^B$ using the BS algorithm (e.g., \citealp{bai1997estimating}; \citealp{ke2016structure}; \citealp{lian2019homogeneity}; \citealp{wang2020identifying}).
The final step is to solve \eqref{eq:full_mle} with the group structure replaced by the estimated $\mcl{C}_0^A$ and $\mcl{C}_0^B$.

\subsection{Identification}

Before presenting the estimation procedure in detail, we discuss identification conditions for the parameters in model \eqref{eq:model}.
It is important to note that, even when individual heterogeneity parameters have only finite variations groupwisely, each individual's parameters must be point-identified separately to estimate the group structure consistently.
A practical reason for this is that our three-step estimator based on the BS algorithm requires preliminary consistent estimates of $\mbf{A}_0$ and $\mbf{B}_0$ in the estimation of the group structure.
Note that we can always estimate ``pseudo-true'' group memberships based on the maximum likelihood principle even when some elements of $\mbf{A}_0$ and $\mbf{B}_0$ are not point-identified.
However, they may not necessarily coincide with the true group memberships in general.\footnote{
    A more formal investigation on this issue is left as a future work.
    For a related discussion, see \cite{bonhomme2015grouped}.
    }

Here, again, we need some location normalization to identify $\mbf{A}_0$ and $\mbf{B}_0$.
In the following, similarly as above, we set $A_{0,1} = 0$ without loss of generality.
To facilitate the discussion, we also introduce several simplifying assumptions, some of which are mentioned previously.

\begin{assumption}\label{as:error}
    (i)  The payoff disturbances $\{\epsilon_{i,j}\}$ are identically distributed on the whole $\mbb{R}$ with a known strictly increasing  marginal CDF $F(\cdot)$.
    (ii) The pairs $\{(\epsilon_{i,j}, \epsilon_{j,i})\}$ are i.i.d. across dyads with joint CDF $H(\cdot, \cdot; \rho_0)$, and $H(\cdot, \cdot; \rho)$ is strictly increasing in each argument for all $\rho \in \mcl{R}$.
    (iii) $F(\cdot)$ is three times continuously differentiable, and $H(\cdot, \cdot; \cdot)$ is three times continuously differentiable with respect to all arguments.
\end{assumption}

Let $\Theta \equiv \mcl{B} \times \mcl{A} \times \mcl{R}$, $\mbb{A}_n \equiv \{0\} \times \mbb{A}^{n-1}$, $\mbb{B}_n \equiv \mbb{B}^n$, and $\mbb{C}_n \equiv \mbb{A}_n \times \mbb{B}_n$, where $\mcl{B} \subset \mbb{R}^{d_z}$, $\mcl{A} \subset \mbb{R}_{++}$, $\mcl{R} \subset \mbb{R}$, $\mbb{A} \subset \mbb{R}$, and $\mbb{B} \subset \mbb{R}$ are parameter spaces for $\beta$, $\alpha$, $\rho$, $A_i$'s, and $B_i$'s, respectively.

\begin{assumption}\label{as:para_space}
    (i) $\theta_0 \in \Theta^\mathrm{int}$, where $\Theta$ is compact.
    (ii) For all $i = 2, 3, \ldots$, $A_{0,i} \in \mbb{A}^{\mathrm{int}}$, and, for all $i = 1, 2, \ldots$, $B_{0,i} \in \mbb{B}^{\mathrm{int}}$, where $\mbb{A}$ and $\mbb{B}$ are compact.
\end{assumption}

\begin{assumption}\label{as:Z_bound}
    The covariates $\{Z_{i,j}\}$ are uniformly bounded.
\end{assumption}

In Assumption \ref{as:error}(i), we assume that the marginal CDF of the error term is known.
This assumption is typically adopted in the estimation of complete information games.
As shown by \cite{khan2018information}, when the marginal CDFs of $(\epsilon_{i,j}, \epsilon_{j,i})$ are unknown, it is generally impossible to estimate the interaction effect $\alpha_0$ at the parametric rate.
Assumption \ref{as:error}(ii) requires that the error terms are independent across dyads.
Note that the parameters $(A_{0,i}, B_{0,i})$ have the role of accommodating all unobserved payoff components in link formation behavior specific to $i$.
Therefore, assuming the independence within the remainders $\{\epsilon_{i,1} ,\ldots, \epsilon_{i,n} \}$ should not be too restrictive.
The other requirements in Assumption \ref{as:error} are standard in that they are satisfied in most of commonly used parametric models (such as logit and probit).
In Assumption \ref{as:para_space}(ii), we assume that the fixed-effect parameters $\{(A_{0,i}, B_{0,i})\}$ are bounded.
Although imposing boundedness on the degree heterogeneity parameters is commonly accepted in the literature on network formation models, some studies consider a more general framework where $||\mbf{A}_0||_\infty$ and $||\mbf{B}_0||_\infty$ can grow slowly (e.g., \citealp{yan2016asymptotics}; \citealp{yan2019statistical}).
The admissible parameter space for the correlation parameter $\rho$, $\mcl{R}$, depends on the choice of the functional form of $H$, which is typically $\mcl{R} = [-1,1]$.
Assumption \ref{as:Z_bound} should not be restrictive in practice.
Hereinafter, we fix the values of $\{Z_{i,j}\}$; that is, we interpret the following analysis as being conditional on the realization of $\{Z_{i,j}\}$.
Thus, any randomness in the model is considered to be due to the randomness of $\{\epsilon_{i,j}\}$. 

An important implication from Assumptions \ref{as:error}--\ref{as:Z_bound} is that the one-way link-formation probabilities $\{P_{i,j}(\theta, \bgamma)\}$ are uniformly bounded away from $0$ and $1$ for all possible parameter values.
In other words, we are assuming that our networks are dense such that the number of one-way links per agent will be about proportional to the number of sampled agents.
The plausibility of this assumption depends on the context of application.
For example, international trade networks and the visa-free travel networks given in Example \ref{ex:visa} and Section \ref{sec:empiric} may be regarded as dense networks.\footnote{
    \cite{graham2016homophily} and \cite{jochmans2018semiparametric} developed conditional likelihood methods that can be used to estimate the homophily parameters (i.e., $\beta_0$ in our context), even when the networks are sparse.
    However, their approach cannot be directly applied to our case because of the interdependence between $g_{i,j}$ and $g_{j,i}$. 
    }

\begin{assumption}\label{as:support}
    For any $(\beta_1, \alpha_1, \bgamma_1), (\beta_2, \alpha_2, \bgamma_2) \in \mcl{B} \times \mcl{A} \times \mbb{C}_n$ such that $(\beta_1, \alpha_1, \bgamma_1) \neq (\beta_2, \alpha_2, \bgamma_2)$, either or both (a) and (b) hold: 
    \begin{align*}
    (a) \;\; & \liminf_{n \to \infty} \frac{1}{N} \sum_{i = 1}^n \sum_{j \neq i} \mbf{1}\left\{ W_{i,j}^\top (\Pi_1 - \Pi_2) > 0, \;\;  W_{j,i}^\top (\Pi_1 - \Pi_2) + \alpha_1 - \alpha_2 < 0 \right\} > 0\\
    (b) \;\; & \liminf_{n \to \infty} \frac{1}{N} \sum_{i = 1}^n \sum_{j \neq i} \mbf{1}\left\{ W_{i,j}^\top (\Pi_1 - \Pi_2) < 0, \;\;  W_{j,i}^\top (\Pi_1 - \Pi_2) + \alpha_1 - \alpha_2 > 0 \right\} > 0,
    \end{align*}
    where $\Pi_1 = (\beta_1^\top, \bgamma_1^\top)^\top$, and $\Pi_2 = (\beta_2^\top, \bgamma_2^\top)^\top$.
\end{assumption}

Assumption \ref{as:support} is our main identification condition, which basically requires the following two conditions.
The first condition is a standard full-rank condition for $\{W_{i,j}\}$ and $\{W_{j,i}\}$.
The second condition is that at least either $Z_{i,j}$ or $Z_{j,i}$ should contain agent-specific covariates that have large enough supports and also have variations across all potential partners.
If no such variables exist, since the signs of $Z_{i,j}^\top (\beta_1 - \beta_2)$ and $Z_{j,i}^\top (\beta_1 - \beta_2)$ cannot differ for some parameter values for all $(i,j)$'s, Assumption \ref{as:support} does not hold.
It should be noted that this assumption is not inconsistent with Assumption \ref{as:Z_bound} under the compactness of the parameter space (i.e., Assumption \ref{as:para_space}).
While the existence of player-specific continuous variables with unbounded supports is typically required in the identification of non/semiparametric game models -- the so-called \textit{identification-at-infinity} argument (see, e.g., \citealp{tamer2003incomplete}; \citealp{kline2015identification}), we can develop our identification result under less stringent conditions owing to the full-parametric model specification.

\begin{theorem}[Identification]\label{thm:identification}
\begin{enumerate}[(i)]
    \item Suppose that Assumptions \ref{as:error}(i)--(ii) and \ref{as:para_space}--\ref{as:support} hold.
    Then, if $\rho_0$ is known, $(\beta_0, \alpha_0, \bgamma_0)$ can be point-identified on $\mcl{B} \times \mcl{A} \times \mbb{C}_n$.
    \item If $\rho_0$ is unknown, but it is a unique maximizer of $\mcl{L}^*_n(\rho)$, then $(\theta_0, \bgamma_0)$ can be point-identified on $\Theta \times \mbb{C}_n$, where
    \begin{align}\label{eq:rho_concentrate}
    \mcl{L}_n^*(\rho) \equiv  \E \mcl{L}_n((\tilde \beta_0(\rho)^\top, \tilde \alpha_0(\rho),\rho)^\top, \tilde \bgamma_0(\rho)),
\end{align}
and $(\tilde \beta_0(\rho), \tilde \alpha_0(\rho), \tilde \bgamma_0(\rho)) \equiv \argmax_{(\beta, \alpha,  \bgamma) \in \mcl{B} \times \mcl{A} \times \mbb{C}_n} \E \mcl{L}_n((\beta^\top, \alpha, \rho)^\top, \bgamma)$.
\end{enumerate}
\end{theorem}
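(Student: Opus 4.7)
The plan is a standard information-inequality argument, tailored to the incomplete-information trinomial structure of our outcome. For each ordered pair $(i,j)$, the observable $(y_{i,j}, y_{j,i})$ is trinomial with cell probabilities $(P_{i,j}(\theta_0,\bgamma_0), P_{j,i}(\theta_0,\bgamma_0), 1 - P_{i,j}(\theta_0,\bgamma_0) - P_{j,i}(\theta_0,\bgamma_0))$, so applying the non-negativity of Kullback--Leibler divergence dyad by dyad gives $\E[\mcl{L}_n(\theta_0,\bgamma_0) - \mcl{L}_n(\theta,\bgamma)] \ge 0$, with equality only if $P_{i,j}(\theta,\bgamma) = P_{i,j}(\theta_0,\bgamma_0)$ at every ordered pair. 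Point identification therefore reduces to showing that this collection of matching conditions forces $(\theta,\bgamma) = (\theta_0,\bgamma_0)$.

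For part (i), I fix $\rho = \rho_0$ and argue by contradiction. The key reformulation is to introduce $R(u, v) \equiv F(u) - H(u, v; \rho_0) = \Pr(\epsilon_{i,j} \le u,\, \epsilon_{j,i} > v)$, which under Assumption \ref{as:error} is strictly increasing in $u$ and strictly decreasing in $v$ on $\mbb{R}^2$, so that $P_{i,j}(\theta, \bgamma) = R(W_{i,j}^\top \Pi,\, W_{j,i}^\top \Pi + \alpha)$. Assume some $(\beta, \alpha, \bgamma) \neq (\beta_0, \alpha_0, \bgamma_0)$ reproduces every $P_{i,j}(\theta_0, \bgamma_0)$. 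By Assumption \ref{as:support}, for all $n$ sufficiently large at least one dyad satisfies configuration (a) or (b). In case (a), $W_{i,j}^\top \Pi > W_{i,j}^\top \Pi_0$ and $W_{j,i}^\top \Pi + \alpha < W_{j,i}^\top \Pi_0 + \alpha_0$, and the opposite monotonicities of $R$ in its two arguments then force $R(W_{i,j}^\top \Pi,\, W_{j,i}^\top \Pi + \alpha) > R(W_{i,j}^\top \Pi_0,\, W_{j,i}^\top \Pi_0 + \alpha_0)$, contradicting the matching condition at that dyad; case (b) yields the opposite strict inequality and the same contradiction. This establishes identification on $\mcl{B} \times \mcl{A} \times \mbb{C}_n$.

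For part (ii), I factor identification through the profile in $\rho$. By the definition of $\mcl{L}_n^*$, $\max_{(\theta,\bgamma)} \E\mcl{L}_n(\theta,\bgamma) = \max_\rho \mcl{L}_n^*(\rho)$, so uniqueness of $\rho_0$ as the maximizer of $\mcl{L}_n^*$ forces the $\rho$-coordinate of any maximizer of $\E\mcl{L}_n$ to equal $\rho_0$. Plugging $\rho = \rho_0$ into part (i) identifies the remaining components as $(\beta_0, \alpha_0, \bgamma_0)$; equivalently $(\tilde\beta_0(\rho_0), \tilde\alpha_0(\rho_0), \tilde\bgamma_0(\rho_0)) = (\beta_0, \alpha_0, \bgamma_0)$, so the overall maximizer of $\E\mcl{L}_n$ on $\Theta \times \mbb{C}_n$ is unique and equal to $(\theta_0, \bgamma_0)$.

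The delicate step is the strict-monotonicity argument in part (i): because $P_{i,j} = F - H$ is a difference of monotone functions, perturbing $\Pi$ alone can leave the sign of the change in $P_{i,j}$ undetermined, which is why the sign configuration in Assumption \ref{as:support} must simultaneously increase the first argument of $R$ and decrease the second (or vice versa). Packaging $F - H$ as $R(u,v) = \Pr(\epsilon_{i,j} \le u,\, \epsilon_{j,i} > v)$ makes the opposite monotonicities manifest and lets Assumption \ref{as:support} deliver the needed strict inequality at at least one dyad. A minor technical point is that the $\liminf > 0$ statement in Assumption \ref{as:support} only guarantees existence of a separating dyad for $n$ sufficiently large, which is all that is required for the identification conclusion.
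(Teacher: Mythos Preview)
Your proposal is correct and follows essentially the same route as the paper: an information-inequality (Jensen/KL) argument reduces identification to ruling out coincidence of all $P_{i,j}$, and then the strict monotonicity of $(u,v)\mapsto F(u)-H(u,v;\rho_0)$ combined with the sign configurations in Assumption~\ref{as:support} produces a separating dyad; part (ii) is handled identically via the profile in $\rho$. Your repackaging of $F-H$ as $R(u,v)=\Pr(\epsilon_{i,j}\le u,\ \epsilon_{j,i}>v)$ is a nice way to make the opposite monotonicities manifest, but the underlying argument is the same as the paper's.
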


These identification results are similar to those in Theorem 2 of \cite{aradillas2019inference}.
That is, the model parameters can be point-identified either (i) if the distribution of unobserved payoff disturbances is fully known or (ii) if $\rho_0$ uniquely maximizes the concentrated log-likelihood function \eqref{eq:rho_concentrate}.
In the literature on network formation models, it is often assumed that $\epsilon_{i,j}$ and $\epsilon_{j,i}$ are independent (e.g., \citealp{hoff2002latent}; \citealp{jochmans2018semiparametric}; \citealp{yan2019statistical}).
If they are independent, since $\rho_0 = 0$ is known, condition (i) is satisfied.
Condition (ii) clearly depends on the choice of $H$ function and is difficult to verify in general; however, this is directly empirically testable.
A more primitive sufficient condition for this is that $\mcl{L}_n^*(\rho)$ is strictly concave, which is satisfied when  $\partial^2 \mcl{L}_n^*(\rho)/(\partial \rho)^2$ is strictly negative under Assumption \ref{as:error}(iii).
We provide an explicit form of $\partial^2 \mcl{L}_n^*(\rho)/(\partial \rho)^2$ in Appendix \ref{subsec:rho_unique}.
Even when neither of condition (i) nor (ii) is satisfied, it remains possible to partially identify the parameters, as in \cite{aradillas2019inference}.
For example, if $\mcl{L}^*_n(\rho)$ has two peaks at $\rho_1$ and $\rho_2$, the resulting identified set is directly given by $\{(\tilde \beta_0(\rho), \tilde \alpha_0(\rho), \rho, \tilde \bgamma_0(\rho)) : \rho \in\{\rho_1, \rho_2\}\}$.

\begin{remark}[Other identification strategies]\upshape
    There are several other routes for identification of our model than the one in Theorem \ref{thm:identification}.
    For example, as our model is fully parametric, a classical parametric identification approach may be used based on the properties of the information matrix (e.g., \citealp{rothenberg1971identification}; \citealp{bjorn1984simultaneous}), although it is not easy to verify in practice.
    If one admits the existence of a player-specific continuous variable that has a positive density on the whole $\mbb{R}$, then the model can be easily identified by the identification-at-infinity approach in the same manner as in \cite{tamer2003incomplete}.
    Since assuming the existence of such unbounded variables is restrictive in practice, several authors have proposed other approaches based on some shape restrictions on the distribution of unobservables, without relying on the identification-at-infinity argument (e.g., \citealp{kline2016empirical}; \citealp{zhou2019identification}).
    Investigating whether their approaches can be applied to our model is an interesting topic, but it is left for a future work.
\end{remark}


\section{Three-Step ML Estimation}\label{sec:3stepML}

\subsection{First-step ML estimator}

The first step of the ML estimation aims to obtain consistent estimates of $\bgamma_0 = (\mbf{A}_0^\top, \mbf{B}_0^\top)^\top$.
Let
\begin{align}\label{eq:mle}
\begin{split}
      (\hat \theta_n, \hat \bgamma_n) = \argmax_{(\theta, \bgamma) \: \in \: \Theta \times \mbb{C}_n} \mcl{L}_n(\theta, \bgamma),
\end{split}
\end{align}
where $\hat \theta_n = (\hat \beta_n^\top, \hat \alpha_n, \hat \rho_n)^\top$, and $\hat \bgamma_n = (\hat{\mbf{A}}_n^\top, \hat{\mbf{B}}_n^\top)^\top$.
Below, we present the asymptotic properties of the initial full ML estimator in \eqref{eq:mle} with the main focus on $\hat \bgamma_n$.
Instead of introducing particular identification conditions, for generality, we directly assume that the true parameter $(\theta_0, \bgamma_0)$ is a unique maximizer of $\E \mcl{L}_n(\theta, \bgamma)$.

\begin{assumption}\label{as:identification}
    $\E \mcl{L}_n(\theta, \bgamma)$ is uniquely maximized at $(\theta_0, \bgamma_0) \in \Theta \times \mbb{C}_n$ for all sufficiently large $n$.
\end{assumption}

We first establish several consistency results in the next theorem.
In particular, we show that the individual specific effects can be uniformly consistently estimated.

\begin{theorem}\label{thm:consistency}
Suppose that Assumptions \ref{as:error}--\ref{as:Z_bound} and \ref{as:identification} hold.
Then, we have (i) $\hat \theta_n \overset{p}{\to} \theta_0$, (ii) $\frac{1}{n}\sum_{i=1}^n | \hat A_{n,i} - A_{0,i} | \overset{p}{\to} 0$, (iii) $\frac{1}{n}\sum_{i=1}^n | \hat B_{n,i} - B_{0,i} | \overset{p}{\to} 0$, and (iv) $||\hat \bgamma_n - \bgamma_0 ||_\infty \overset{p}{\to} 0$.
\end{theorem}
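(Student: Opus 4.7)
The plan is to establish consistency via the standard M-estimator route—uniform convergence of the criterion combined with well-separatedness of the population maximum—while carefully accommodating the fact that the dimension of $\bgamma_0$ grows linearly with $n$. The argument proceeds in three stages: uniform convergence of $\mcl{L}_n$ to $\E\mcl{L}_n$; an $L^2$-type consistency bound yielding parts (i)--(iii); and a refinement to an $\ell^\infty$ bound yielding part (iv).

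Under Assumptions \ref{as:error}--\ref{as:Z_bound}, each $P_{i,j}(\theta,\bgamma)$ is uniformly bounded away from $0$ and $1$ on the compact set $\Theta \times \mbb{C}_n$, so every dyadic summand of $\mcl{L}_n$ is bounded and jointly Lipschitz in its finite subvector of parameters $(\theta, A_i, A_j, B_i, B_j)$. The $n(n-1)/2$ unordered-pair contributions are independent by Assumption \ref{as:error}(ii), so Hoeffding's inequality gives the pointwise tail bound $\Pr\{|\mcl{L}_n - \E\mcl{L}_n| > \eta\} \le 2\exp(-c n^2 \eta^2)$. A sup-norm $\varepsilon$-net of $\Theta \times \mbb{C}_n$ has cardinality at most $(C/\varepsilon)^{d_z + 2 + 2n}$, and a union bound combined with the Lipschitz property lets the exponential rate $n^2 \eta^2$ swamp the dimensional cost $2n \log(1/\varepsilon)$, yielding $\sup_{\Theta \times \mbb{C}_n} |\mcl{L}_n(\theta,\bgamma) - \E\mcl{L}_n(\theta,\bgamma)| \overset{p}{\to} 0$. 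Next, under Assumption \ref{as:support} together with the smoothness/monotonicity in Assumption \ref{as:error}, the Hessian of $-\E\mcl{L}_n$ with respect to $(\theta, \bgamma)$ at the truth has diagonal entries of order $1/n$ along $\bgamma$ and $O(1)$ along $\theta$, with off-diagonal entries between distinct $\bgamma$-coordinates of order $1/n^2$, so its minimum eigenvalue is bounded below by $c/n$. A local Taylor expansion plus the global uniqueness in Assumption \ref{as:identification} yields the well-separatedness inequality $\E\mcl{L}_n(\theta_0, \bgamma_0) - \E\mcl{L}_n(\theta, \bgamma) \ge c\{\|\theta - \theta_0\|^2 + (1/n) \|\bgamma - \bgamma_0\|^2\}$. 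Combining this with the M-estimator inequality $\mcl{L}_n(\hat\theta_n, \hat\bgamma_n) \ge \mcl{L}_n(\theta_0, \bgamma_0)$ and uniform convergence gives $\|\hat\theta_n - \theta_0\|^2 + (1/n) \|\hat\bgamma_n - \bgamma_0\|^2 = o_p(1)$, which is exactly (i) together with an $L^2$-form of (ii)--(iii); the stated averages in (ii) and (iii) then follow by Cauchy--Schwarz.

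For the uniform bound (iv) I would pass to the $2n$ individual first-order conditions $\partial \mcl{L}_n / \partial A_i = 0$ and $\partial \mcl{L}_n / \partial B_i = 0$. At $(\theta_0, \bgamma_0)$, each such score is $1/N$ times a sum of $n-1$ independent, mean-zero, bounded terms and hence has magnitude $O_p(\sqrt{\log n}/n^{3/2})$ uniformly in $i$ by Hoeffding plus a union bound over $2n$ indices. A stochastic-equicontinuity step that uses the $L^2$-rate from the previous stage transfers this bound to the scores evaluated at $(\hat\theta_n, \hat\bgamma_n)$, where they actually vanish. Inverting through the individual-level curvature $|\partial^2 \mcl{L}_n / \partial A_i^2| \asymp 1/n$ and exploiting the fact that the off-diagonal Hessian entries are only $O(1/n^2)$ and hence asymptotically negligible relative to the diagonal yields $\|\hat\bgamma_n - \bgamma_0\|_\infty = O_p(\sqrt{\log n / n}) = o_p(1)$, which is (iv).

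The main obstacle is this last step. The $L^2$-consistency established earlier only controls the typical size of $\hat A_{n,i} - A_{0,i}$, whereas (iv) requires the worst-case deviation across all $n$ coordinates. The refinement rests on the near-diagonal structure of the Hessian at the truth, so that inverting the score system can effectively be analyzed coordinate-wise, reducing the $\ell^\infty$ control to independent coordinate-wise concentration inequalities. Making this reduction rigorous while simultaneously handling the fact that the cross-dependence enters through $(\hat\theta_n, \hat\bgamma_n)$, which is only $L^2$-consistent at this point, is the technical heart of the proof.
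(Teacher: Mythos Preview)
Your M-estimator plan is sound in spirit, but there is a genuine gap: both the quadratic well-separatedness inequality $\E\mcl{L}_n(\theta_0,\bgamma_0)-\E\mcl{L}_n(\theta,\bgamma)\ge c\{\|\theta-\theta_0\|^2+n^{-1}\|\bgamma-\bgamma_0\|^2\}$ and the Hessian-inversion step you use for (iv) require a uniform lower bound on the smallest eigenvalue of the (rescaled) Hessian in $\bgamma$. That is exactly Assumption \ref{as:hessian}, which is \emph{not} among the hypotheses of Theorem \ref{thm:consistency}; the paper introduces it only later for the rate result, Theorem \ref{thm:conv_rate}. You also invoke Assumption \ref{as:support}, which is likewise absent from the theorem's hypothesis list---identification here is supplied by the high-level Assumption \ref{as:identification}. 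Your heuristic that ``$1/n$ on the diagonal, $1/n^2$ off-diagonal'' delivers the eigenvalue bound is not automatic: there are $O(n)$ off-diagonal entries per row, so their row-sum is also $O(1/n)$, and whether the diagonal strictly dominates is a substantive condition---precisely why the paper imposes Assumption \ref{as:hessian} as a separate hypothesis rather than deriving it.

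The paper's route avoids curvature conditions entirely. For (ii)--(iii) it argues by contradiction: if a non-negligible fraction of the $\hat A_{n,i}$ or $\hat B_{n,i}$ stay bounded away from truth, Assumption \ref{as:identification} alone forces a fixed gap $\E\mcl{L}_n(\theta_0,\bgamma_0)-\E\mcl{L}_n(\theta_0,\hat{\mbf A}_n,\hat{\mbf B}_n)>4\eta(c)$, and a short chain of inequalities using uniform convergence and the MLE definition yields a contradiction---no quadratic lower bound needed. For (iv) it profiles one coordinate at a time: $\hat A_{n,i}$ maximizes $A_i \mapsto \mcl{L}_n(\hat\theta_n, A_i, \hat{\mbf A}_{n,-i}, \hat{\mbf B}_n)$, and the $L^1$-consistency from (ii)--(iii) is exactly what makes this profiled criterion uniformly (in $i$ and in $A_i$) close to the oracle profile $\E\mcl{L}_n(\theta_0, A_i, \mbf A_{0,-i}, \mbf B_0)$, so another soft well-separatedness step gives $\max_i |\hat A_{n,i}-A_{0,i}|=o_P(1)$. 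Your score-inversion strategy for (iv) is essentially the paper's proof of Theorem \ref{thm:conv_rate}(iii), one theorem later and under the additional Hessian assumption. On uniform convergence your covering argument is fine; the paper instead writes $\mcl{L}_n-\E\mcl{L}_n=\frac{2}{N}\sum_{i,j}(y_{i,j}-\E y_{i,j})\psi_{i,j}(\theta,\bgamma)$ with $\psi_{i,j}$ uniformly bounded and reduces the sup to a maximum over $i$ of row-averages controlled by Hoeffding plus a union bound over $n$ indices.
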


Next, we derive the convergence rate of $\hat \bgamma_n$.
To this end, it is convenient to re-define $\hat \theta_n$ and $\theta_0$ as
\begin{align*}
    \hat \theta_n = \argmax_{\theta \in \Theta} \mcl{L}_n(\theta, \tilde{\bgamma}_n(\theta)) \;\; \text{and} \;\;
    \theta_0 = \argmax_{\theta \in \Theta} \E \mcl{L}_n(\theta, \tilde{\bgamma}_0(\theta)),
\end{align*}
respectively, where $\tilde \bgamma_n(\theta) \equiv \argmax_{\bgamma \in \mbb{C}_n} \mcl{L}_n(\theta, \bgamma)$ and $\tilde \bgamma_0(\theta) \equiv \argmax_{\bgamma \in \mbb{C}_n} \E \mcl{L}_n(\theta, \bgamma)$ for any given $\theta \in \Theta$, assuming that they are well-defined.
Further, we define $\bgamma_{-1} = (A_2, \ldots, A_n, B_1, \ldots, B_n)^\top$, 
\begin{align*}
    \underset{(2n-1) \times (2n-1)}{\mcl{H}_{n, \bgamma\bgamma} (\theta, \bgamma)}
    \equiv  \frac{\partial^2 \mcl{L}_n(\theta, \bgamma)}{ \partial \bgamma_{-1} \partial \bgamma_{-1}^\top }, \qquad
    \underset{(d_z + 2) \times (d_z + 2)}{\mcl{H}_{n, \theta\theta} (\theta, \bgamma)}
    \equiv  \frac{\partial^2 \mcl{L}_n(\theta, \bgamma)}{ \partial \theta \partial \theta^\top}
    , \qquad
    \underset{(2n-1) \times (d_z + 2)}{\mcl{H}_{n, \bgamma\theta} (\theta, \bgamma)}
    \equiv  \frac{\partial^2 \mcl{L}_n(\theta, \bgamma)}{ \partial \bgamma \partial \theta^\top},
\end{align*}
and $\mcl{H}_{n, \theta\bgamma} (\theta, \bgamma) \equiv \mcl{H}_{n, \bgamma\theta} (\theta, \bgamma)^\top$.
The exact form of $\mcl{H}_{n, \bgamma\bgamma} (\theta, \bgamma)$ can be found in \eqref{eq:hessian} in Appendix \ref{sec:notations}.
Finally, let
\begin{align*}
    \underset{(d_z + 2) \times (d_z + 2)}{\mcl{I}_{n,\theta\theta}(\theta, \bgamma)} \equiv  \mcl{H}_{n, \theta \theta}(\theta, \bgamma) - \mcl{H}_{n, \theta \bgamma} (\theta, \bgamma) \left[ \mcl{H}_{n,\bgamma\bgamma} (\theta, \bgamma) \right]^{-1} \mcl{H}_{n, \bgamma \theta } (\theta, \bgamma).
\end{align*}
This $\mcl{I}_{n,\theta\theta}(\theta, \bgamma)$ matrix serves as the Hessian matrix for the concentrated ML estimator $\hat \theta_n$ (see, e.g., \citealp{amemiya1985advanced}).
Now, we introduce the following assumptions.

\begin{assumption}\label{as:unique}
    $\tilde \bgamma_0(\theta)$ uniquely exists uniformly on $\{\Theta: ||\theta - \theta_0|| \le \varepsilon \}$, where $\varepsilon > 0$ is an arbitrary small constant.
\end{assumption}

\begin{assumption}\label{as:hessian}
    For an arbitrary small constant $\varepsilon > 0$, there exist constants $c_{\bgamma}, c_\theta > 0$ that may depend on $\varepsilon$ such that (i) $\lambda_\mathrm{min}\left(- n \cdot \mcl{H}_{n, \bgamma\bgamma} (\theta, \bgamma)\right) > c_{\bgamma}$ and (ii) $\lambda_\mathrm{min} \left(- \mcl{I}_{n, \theta \theta}(\theta, \bgamma)  \right) > c_\theta$ w.p.a.1 uniformly on $\{\Theta \times \mbb{C}_n : ||\theta - \theta_0|| \le \varepsilon, \: ||\bgamma - \bgamma_0||_\infty \le \varepsilon \}$.
\end{assumption}

Assumptions \ref{as:unique} and \ref{as:hessian} should be fairly reasonable in practice.
Then, under these additional assumptions, we can derive the $\ell_1$-norm and max-norm convergence rates for $\hat \bgamma_n$, as shown in the next theorem.

\begin{theorem}\label{thm:conv_rate}
    Suppose that Assumptions \ref{as:error}--\ref{as:Z_bound} and \ref{as:identification}--\ref{as:hessian} hold.
    Then, we have (i) $\frac{1}{n}\sum_{i=1}^n | \hat A_{n,i} - A_{0,i} | = O_P(n^{-1/2})$, (ii) $\frac{1}{n}\sum_{i=1}^n | \hat B_{n,i} - B_{0,i} | = O_P(n^{-1/2})$, and (iii) $|| \hat \bgamma_n - \bgamma_0 ||_\infty = O_P(\sqrt{\ln n / n})$.
\end{theorem}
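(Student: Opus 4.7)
The plan is to combine the first-order condition $\partial\mcl{L}_n(\hat\theta_n, \hat\bgamma_n)/\partial\bgamma_{-1} = 0$ with two different Taylor expansions: a global vector-valued one that delivers the $\ell_2$ rate underlying parts (i) and (ii), and a coordinate-by-coordinate scalar one that upgrades to the $\ell_\infty$ rate in (iii). The crucial structural fact I would exploit is that diagonal entries of $-\mcl{H}_{n,\bgamma\bgamma}$ are of order $1/n$ while off-diagonal entries are of order $1/n^2$: each off-diagonal entry involves only the single dyad containing both indices, whereas each diagonal entry sums over all $n-1$ partners (both scaled by $1/N$). This is what allows the coordinate argument to work and is the main obstacle, since Assumption \ref{as:hessian}(i) only controls the \emph{spectral} norm of $[-\mcl{H}_{n,\bgamma\bgamma}]^{-1}$ and does not directly translate into an $\ell_\infty$-norm bound.

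For parts (i) and (ii), I would expand the vector FOC to obtain
\begin{equation*}
    0 \;=\; \frac{\partial \mcl{L}_n(\theta_0, \bgamma_0)}{\partial \bgamma_{-1}} + \mcl{H}_{n,\bgamma\bgamma}(\tilde\theta, \tilde\bgamma)(\hat\bgamma_n - \bgamma_0)_{-1} + \mcl{H}_{n,\bgamma\theta}(\tilde\theta, \tilde\bgamma)(\hat\theta_n - \theta_0)
\end{equation*}
with $(\tilde\theta, \tilde\bgamma)$ on the relevant segment. Each score component $\partial\mcl{L}_n(\theta_0,\bgamma_0)/\partial A_i$ is $N^{-1}$ times a sum of $n-1$ mean-zero terms that are independent across dyads by Assumption \ref{as:error}(ii), hence has variance $O(n^{-3})$; summing over $2n-1$ components yields $\E\|\partial \mcl{L}_n(\theta_0,\bgamma_0)/\partial \bgamma_{-1}\|_2^2 = O(n^{-2})$. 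A concentrated-likelihood argument under Assumption \ref{as:hessian}(ii) produces $\|\hat\theta_n - \theta_0\| = O_P(n^{-1/2})$ (or better), and the operator norm of $\mcl{H}_{n,\bgamma\theta}$ is $O(n^{-1/2})$ since each of its $O(n)$ entries is $O(1/n)$. Combining with $\|[-\mcl{H}_{n,\bgamma\bgamma}]^{-1}\|_{\mrm{op}} \le n/c_{\bgamma}$ from Assumption \ref{as:hessian}(i) gives $\|\hat\bgamma_n - \bgamma_0\|_2 = O_P(1)$, from which a Cauchy--Schwarz step yields $n^{-1}\sum_i|\hat A_{n,i}-A_{0,i}| \le n^{-1/2}\|\hat{\mbf{A}}_n - \mbf{A}_0\|_2 = O_P(n^{-1/2})$ and analogously for $\hat{\mbf{B}}_n$.

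For part (iii), I would switch to the scalar FOC $\partial\mcl{L}_n/\partial A_i = 0$ along the line on which only $A_i$ varies, with all other components frozen at $(\hat\theta_n, \hat\bgamma_n)$. A one-dimensional Taylor expansion in $A_i$ around $A_{0,i}$ yields
\begin{equation*}
    \hat A_{n,i} - A_{0,i} \;=\; -\,\frac{\left.\partial \mcl{L}_n/\partial A_i\right|_{A_i = A_{0,i},\,\mrm{others}=\widehat{\cdot}}}{\left.\partial^2 \mcl{L}_n/\partial A_i^2\right|_{A_i = \bar A_i,\,\mrm{others}=\widehat{\cdot}}}.
\end{equation*}
The denominator is bounded below by $c_{\bgamma}/n$ uniformly in $i$, obtained by sandwiching $e_i^\top(-\mcl{H}_{n,\bgamma\bgamma})e_i$ via Assumption \ref{as:hessian}(i) on a neighborhood of $(\theta_0, \bgamma_0)$ that is reached w.p.a.1 by Theorem \ref{thm:consistency}. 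Taylor-expanding the numerator around $(\theta_0, \bgamma_0)$: its leading term $\partial \mcl{L}_n(\theta_0,\bgamma_0)/\partial A_i$ is uniformly in $i$ of order $\sqrt{\ln n}/n^{3/2}$ via a Hoeffding-type maximal inequality over the $2n-1$ independent-across-dyad sums; the remainder picks up off-diagonal Hessian entries of size $O(1/n^2)$ multiplied by $|\hat A_{n,j} - A_{0,j}|$ or $|\hat B_{n,j} - B_{0,j}|$, so summing and invoking the $\ell_1$-rates from (i)--(ii) gives uniform $O_P(1/n^{3/2})$, which is negligible beside the leading term (the $\hat\theta_n - \theta_0$ contribution being even smaller). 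Dividing by the $c_{\bgamma}/n$ lower bound yields $\max_i|\hat A_{n,i} - A_{0,i}| = O_P(\sqrt{\ln n/n})$, and the same argument applied to $B_{0,i}$ completes (iii).
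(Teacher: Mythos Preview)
Your proposal is correct and matches the paper's two-stage strategy: a vector first-order-condition expansion with Assumption~\ref{as:hessian}(i) yields $\|\hat\bgamma_n-\bgamma_0\|_2=O_P(1)$ and hence the $\ell_1$ rates, and then a coordinate-wise expansion exploiting the $O(1/n)$-diagonal versus $O(1/n^2)$-off-diagonal structure of $\mcl{H}_{n,\bgamma\bgamma}$ delivers the $\ell_\infty$ rate. The one step you compress too much is the input $\|\hat\theta_n-\theta_0\|=O_P(n^{-1/2})$: this is not immediate, because the concentrated score $\mcl{S}_{n,\theta}(\theta_0,\tilde\bgamma_n(\theta_0))$ already involves the profiled nuisance $\tilde\bgamma_n(\theta_0)$, so the paper first proves the analogues of (i)--(iii) for $\tilde\bgamma_n(\theta_0)-\bgamma_0$ with $\theta$ held at the truth, and only then extracts the rate for $\hat\theta_n$; this preliminary bootstrapping is what caps $\|\hat\theta_n-\theta_0\|$ at $O_P(n^{-1/2})$ rather than anything ``better.'' Your use of the Rayleigh bound $e_i^\top(-n\mcl{H}_{n,\bgamma\bgamma})e_i\ge c_\bgamma$ for the denominator in (iii) is a mild shortcut relative to the paper, which instead proves a uniform law of large numbers for the diagonal sums $\frac{1}{n-1}\sum_{j\neq k}h_{11,k,j}$; both are valid.
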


The max-norm convergence rate obtained in Theorem \ref{thm:conv_rate} is consistent with the result of Theorem 3 in \cite{graham2017econometric} and that of Theorem 3.1 in \cite{yan2019statistical}.

\subsection{Binary segmentation algorithm}\label{subsec:bs}

Given the consistent estimates of $\mbf{A}_0$ and $\mbf{B}_0$, we use the BS algorithm to estimate the group structure.
We first sort $\hat{\mbf{A}}_n$ and $\hat{\mbf{B}}_n$ in ascending order and write the order statistics as
\begin{align*}
   \hat A_{n,(1)} \le \hat A_{n,(2)} \le \cdots \le \hat A_{n,(n)}, 
   \;\; \text{and} \;\;
   \hat B_{n,(1)} \le \hat B_{n,(2)} \le \cdots \le \hat B_{n,(n)}.
\end{align*}
In the following, we mainly describe the estimation of the group membership for the sender effects, $\mcl{C}^A_0$.
The exactly the same procedure described below can be used to estimate $\mcl{C}^B_0$.

An concept behind the BS algorithm is quite simple.
If $\{A_{0,i}\}$ are heterogeneous across $K^A$ latent groups but are homogeneous within the groups, there should exist $K^A - 1$ ``break points'' in the sorted $\{A_{0,i}\}$.
Since $\hat{\mbf{A}}_n$ is uniformly consistent for $\mbf{A}_0$, these break points appear also in the following sequence: $\hat A_{n,(1)}, \ldots , \hat A_{n,(n)}$ w.p.a.1.
For $1 \le i < j \le n$, we define $\hat \Delta^A(i,j)$ as the sum of squared variations over $\{\hat A_{n,(i)}, \ldots, \hat A_{n,(j)}\}$; namely,
\begin{align*}
    \hat \Delta^A(i,j)
    & \equiv \sum_{l = i}^j (\hat A_{n,(l)} - \bar A_{n,i,j})^2,
    \;\; \text{where} \;\; \bar A_{n,i,j} \equiv \frac{1}{j - i + 1}\sum_{l = i}^j \hat A_{n,(l)}.
\end{align*}
Further, we define
\begin{align*}
    \hat S^A_{i,j}(\kappa) 
    & \equiv \left\{ \begin{array}{ll}
        \frac{1}{j - i + 1} \left( \hat \Delta^A(i,\kappa) + \hat \Delta^A(\kappa + 1, j) \right) & \text{if} \;\; j < \kappa \\
        \frac{1}{j - i + 1} \hat \Delta^A(i,j) & \text{if} \;\; j = \kappa.
    \end{array}\right.
\end{align*}
That is, $\hat S^A_{i,j}(\kappa)$ provides the total variance of $\{\hat A_{n,(i)}, \ldots, \hat A_{n,(j)}\}$ when a break point is placed at $\kappa$.
Assuming that $K^A \ge 2$, the BS algorithm proceeds as follows:

\begin{flushleft}
\textbf{Step 1 $(K^A = 2)$:}
\end{flushleft}

    We find the first break point, say $\hat t_1$, by
    \begin{align*}
        \hat t_1 = \argmin_{1 \le \kappa < n} \hat S^A_{1,n}(\kappa).
    \end{align*}
    Then, we can partition $\{\hat A_{n,(i)}\}$ into the following two subsets: $\{\hat A_{n,(i)}\} = \{\hat A_{n,(i)}\}_{i=1}^{\hat t_1} \bigcup \{\hat A_{n,(i)}\}_{i = \hat t_1 + 1}^n$.
    If $K^A = 2$, assuming that $a_{0,1} < a_{0,2}$ without loss of generality, we obtain $\hat{\mcl{C}}_{n,1}^A \equiv \{i : \hat A_{n,(1)} \le \hat A_{n,i} \le \hat A_{n,(\hat t_1)}\}$ and $\hat{\mcl{C}}_{n,2}^A \equiv \{i : \hat A_{n,(\hat t_1 + 1)} \le \hat A_{n,i} \le \hat A_{n,(n)}\}$ as the estimators of $\mcl{C}_{0,1}^A$ and $\mcl{C}_{0,2}^A$, respectively, and the algorithm stops.
    If $K^A > 2$, we proceed to the next step.

\begin{flushleft}
\textbf{Step 2 $(K^A = 3)$:}
\end{flushleft}

    Now, if $K^A = 3$, there exists one more break point either in $\{\hat A_{n,(i)}\}_{i=1}^{\hat t_1}$ or in $\{\hat A_{n,(i)}\}_{i = \hat t_1 + 1}^n$.
    In other words, either one of the two converges to a sequence of constants as the sample size increases.
    Then, when we compute $\hat S^A_{1,\hat t_1}(\hat t_1)$ and $\hat S^A_{\hat t_1 + 1, n}(n)$, and if $\hat S^A_{1,\hat t_1}(\hat t_1) > \hat S^A_{\hat t_1 + 1, n}(n)$ for example, the break point is likely to lie in the former subset.
    Thus, the second break point, say $\hat t_2$, can be found by
    \begin{align*}
        \hat t_2 = \begin{cases}
            \argmin_{1 \le \kappa < \hat t_1} \hat S^A_{1,\hat t_1}(\kappa) 
                & \text{if} \;\; \hat S^A_{1,\hat t_1}(\hat t_1) > \hat S^A_{\hat t_1 + 1, n}(n) \\
            \argmin_{\hat t_1 + 1 \le \kappa < n} \hat S^A_{\hat t_1 + 1, n}(\kappa) 
                & \text{if} \;\; \hat S^A_{1,\hat t_1}(\hat t_1) \le \hat S^A_{\hat t_1 + 1, n}(n).
        \end{cases}
    \end{align*}
    Then, we add $\hat t_2$ to the set of break points, and (with a little abuse of notation) we sort and re-label the points to ensure that $\hat t_1 < \hat t_2$.
    The resulting partition is given by $\{\hat A_{n,(i)}\} = \{\hat A_{n,(i)}\}_{i=1}^{\hat t_1} \bigcup \{\hat A_{n,(i)}\}_{i = \hat t_1 + 1}^{\hat t_2}\bigcup \{\hat A_{n,(i)}\}_{i = \hat t_2 + 1}^n$.
    Setting $a_{0,1} < a_{0,2} < a_{0,3}$ without loss of generality, we can define $\hat{\mcl{C}}_{n,1}^A \equiv \{i : \hat A_{n,(1)} \le \hat A_{n,i} \le \hat A_{n,(\hat t_1)}\}$, $\hat{\mcl{C}}_{n,2}^A \equiv \{i : \hat A_{n,(\hat t_1 + 1)} \le \hat A_{n,i} \le \hat A_{n,(\hat t_2)}\}$, and $\hat{\mcl{C}}_{n,3}^A \equiv \{i : \hat A_{n,(\hat t_2 + 1)} \le \hat A_{n,i} \le \hat A_{n,(n)}\}$ as the estimators of $\mcl{C}_{0,1}^A$, $\mcl{C}_{0,2}^A$, and $\mcl{C}_{0,3}^A$, respectively.

\begin{flushleft}
\textbf{Step 3 $(K^A \ge 4)$:}
\end{flushleft}

    When $K^A = 4$, if $\max\{\hat S^A_{1,\hat t_1}(\hat t_1), \hat S^A_{\hat t_1 + 1, \hat t_2}(\hat t_2), \hat S^A_{\hat t_2 + 1, n}(n)\} = \hat S^A_{1,\hat t_1}(\hat t_1)$ for example, the third break point should exist in $\{\hat A_{n,(i)}\}_{i=1}^{\hat t_1}$.
    Then, following the same procedure as above, we can obtain $\hat t_3 = \argmin_{1 \le \kappa < \hat t_1} \hat S^A_{1,\hat t_1}(\kappa) $.
    We repeat these steps until $K^A$ groups are detected with $K^A - 1$ break points.
    Finally, letting $\hat t_0 = 0$ and $\hat t_{K^A} = n$ so that $\hat t_0 < \hat t_1 < \cdots < \hat t_{K^A - 1} < \hat t_{K^A}$, each $k$-th group can be estimated by 
    \begin{align*}
        \hat{\mcl{C}}^A_{n,k} \equiv \{i : \hat A_{n,(\hat t_{k - 1} + 1)} \le \hat A_{n,i} \le \hat A_{n,(\hat t_k)}\}
    \end{align*}
    for $k = 1, \ldots, K^A$.
    Then, $\hat{\mcl{C}}^A_n \equiv \{\hat{\mcl{C}}^A_{n,1}, \ldots, \hat{\mcl{C}}^A_{n, K^A}\}$ is our estimator for the true group structure $\mcl{C}_0^A$.

\bigskip

In the same manner as above, we define $\hat{\mcl{C}}^B_n \equiv \{\hat{\mcl{C}}^B_{n,1}, \ldots, \hat{\mcl{C}}^B_{n, K^B}\}$ for the estimation of $\mcl{C}_0^B$.
Note that the identification of group membership can be achieved only up to ``label swapping''.
Thus, without loss of generality, we can label the groups according to the values of the group effects such that $a_{0,1} < a_{0,2} <\cdots < a_{0,K^A}$; that is, we define the $k$-th group as the group with the $k$-th smallest sender effect.
Similarly, we set $b_{0,1} < b_{0,2} <\cdots < b_{0,K^B}$.
To investigate the asymptotic properties of the BS algorithm, we introduce the following assumption.

\begin{assumption}\label{as:bs}
    (i) There exist constants $c_A, c_B > 0$ such that $\min_{k \neq k'}|a_{0,k} - a_{0,k'}| > c_A$ and $\min_{k \neq k'}|b_{0,k} - b_{0,k'}| > c_B$. (ii) $|\mcl{C}_{0,k}^A|/n \to \tau_k^A \in (0, 1)$ for all $k = 1, \ldots , K^A$ and $|\mcl{C}_{0,k}^B|/n \to \tau_k^B \in (0, 1)$ for all $k = 1, \ldots , K^B$.
\end{assumption}

Assumption \ref{as:bs} is parallel to Assumption A2 in \cite{wang2020identifying}.
Similar assumptions are commonly used in the literature of panel data models with latent group structure.
The following theorem provides the consistency result of the BS algorithm:

\begin{theorem}\label{thm:grouping}
    Suppose that Assumptions \ref{as:error}--\ref{as:Z_bound} and \ref{as:identification}--\ref{as:bs} hold.
    Then, we have $\Pr(\hat{\mcl{C}}^A_n = \mcl{C}_0^A) \to 1$ and $\Pr(\hat{\mcl{C}}^B_n = \mcl{C}_0^B) \to 1$.
\end{theorem}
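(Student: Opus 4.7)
The plan is to combine the uniform consistency of $\hat{\mbf{A}}_n$ from Theorem \ref{thm:conv_rate} with an argmin-consistency analysis of the BS split criterion. First I would use $||\hat \bgamma_n - \bgamma_0||_\infty = O_P(\sqrt{\ln n / n}) = o_P(1)$ together with the separation $\min_{k \neq k'}|a_{0,k} - a_{0,k'}| > c_A$ in Assumption \ref{as:bs}(i) to argue that, w.p.a.1, the sorted sequence $\hat A_{n,(1)} \le \cdots \le \hat A_{n,(n)}$ is group-consistent: for any $k < k'$, every index $i \in \mcl{C}^A_{0,k}$ precedes every index $j \in \mcl{C}^A_{0,k'}$ in the sort, because the within-group gap shrinks to zero while the between-group gap stays bounded away from zero. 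Hence the only positions at which a true group boundary can fall in the sorted sequence are the cumulative sizes $n^A_k \equiv \sum_{\ell \le k}|\mcl{C}^A_{0,\ell}|$, and exact recovery $\hat{\mcl{C}}^A_n = \mcl{C}_0^A$ reduces to showing that the set of BS break points $\{\hat t_1, \ldots, \hat t_{K^A-1}\}$ equals $\{n^A_1, \ldots, n^A_{K^A-1}\}$ w.p.a.1.

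For the initial split, I would derive a uniform law of large numbers
\begin{align*}
    \sup_{r \in [\eta, 1-\eta]} \left| \hat S^A_{1,n}(\lfloor r n \rfloor) - S^*(r) \right| \overset{p}{\to} 0
\end{align*}
for any small $\eta > 0$, where $S^*(r)$ is the population within-block variance when splitting the ordered population at quantile $r$, available via Assumption \ref{as:bs}(ii) and Theorem \ref{thm:conv_rate}. Because $\mbf{A}_0$ takes only $K^A$ distinct values, $S^*(r)$ is piecewise-rational and attains its minimum over $(0,1)$ at one of the true break quantiles $r^A_k \to \sum_{\ell \le k}\tau^A_\ell$, with a strictly positive gap separating the minimum from the infimum over any closed set disjoint from $\{r^A_1, \ldots, r^A_{K^A-1}\}$. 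A standard argmin-consistency argument then gives $\hat t_1/n \overset{p}{\to} r^A_{k^*}$ for some $k^*$, and combined with group-consistency of the sort, this forces $\hat t_1 = n^A_{k^*}$ on a high-probability event.

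The recursive structure of the BS algorithm is handled by induction. Conditional on the event that the first $m$ chosen break points are exact true break points, the algorithm partitions the data into $m+1$ contiguous blocks, each a consecutive union of true groups. The block-selection rule picks the block whose within-block variance $\hat S^A_{i,j}(j)$ is largest. By Theorem \ref{thm:conv_rate}, any single-group block has within-block variance $O_P(\ln n / n) = o_P(1)$, while any multi-group block has within-block variance bounded below by a positive constant (from $c_A$ and the proportion bound), so a multi-group block is selected w.p.a.1. Applying the argmin-consistency argument of the previous paragraph inside that block identifies one of its internal true break points exactly. After $K^A - 1$ iterations all true break points are recovered, yielding $\Pr(\hat{\mcl{C}}^A_n = \mcl{C}_0^A) \to 1$. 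The identical argument with $\hat{\mbf{B}}_n$ in place of $\hat{\mbf{A}}_n$ handles $\mcl{C}_0^B$.

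The main obstacle is the argmin step: certifying that $S^*(r)$ is minimized only at the true break quantiles with a quantitative gap from every other $r$, both for the full-sample split and for each split restricted to a contiguous union of true groups appearing at a later recursion. This is a deterministic analysis of the piecewise-rational between-block variance function determined by $\{(a_{0,k}, \tau^A_k)\}_{k=1}^{K^A}$; once the gap is in hand, the remainder reduces to routine uniform convergence powered by Theorem \ref{thm:conv_rate} and Assumption \ref{as:bs}.
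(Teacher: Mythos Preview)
Your high-level structure matches the paper's—sort, use uniform consistency to get a group-consistent ordering, then analyze the BS criterion block by block—but there is a real gap at the step where you pass from argmin-consistency to exact recovery. The uniform law of large numbers you invoke gives at best
\[
\sup_{\kappa}\bigl|\hat S^A_{1,n}(\kappa) - S^*(\kappa/n)\bigr| = O_P\bigl(\sqrt{\ln n/n}\bigr),
\]
since the cross terms between the signal $A_{0,(l)}$ and the errors $u_{n,(l)} \equiv \hat A_{n,(l)} - A_{0,(l)}$ are of that order. Argmin-consistency over the population criterion then yields only $\hat t_1/n \overset{p}{\to} r^A_{k^*}$, i.e.\ $|\hat t_1 - n^A_{k^*}| = o_P(n)$. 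For $m$ within $o(n)$ of $n^A_{k^*}$ the population gap $S^*(m/n) - S^*(r^A_{k^*})$ is of order $|m - n^A_{k^*}|/n$, which is dominated by the $O_P(\sqrt{\ln n/n})$ remainder whenever $|m - n^A_{k^*}| = o(\sqrt{n\ln n})$. So the argument as written cannot exclude $\hat t_1$ landing a few positions away from $n^A_{k^*}$, and group-consistency of the sort does not help here: it tells you where the true boundaries sit in the ordered sequence, but says nothing about where the empirical minimizer falls relative to them.

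The paper closes this gap by a different mechanism. Rather than bounding $\hat S_{1,n}(m)$ and $\hat S_{1,n}(t^0)$ separately against a population limit, it decomposes the \emph{difference} as $\hat S_{1,n}(m) - \hat S_{1,n}(t^0) = [\mu(m) - \mu(t^0)] + [r(m) - r(t^0)]$ and shows by direct calculation that both brackets factor as $\frac{|m - t^0|}{n}$ times, respectively, a constant bounded away from zero (via Assumption \ref{as:bs}) and a term that is $o_P(1)$ uniformly in $m$. The cancellation in $r(m) - r(t^0)$—most terms involving $u_{n,(l)}$ appear in both and drop out—is what makes the noise scale with $|m - t^0|/n$ rather than with $\sqrt{\ln n/n}$, and this is precisely what delivers $\Pr(\hat t_1 = t^0) \to 1$ instead of merely $\hat t_1/n \overset{p}{\to} r^A_{k^*}$. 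To repair your plan you would need to supplement the global argmin localization with a local analysis of this type; the deterministic gap in $S^*$ that you flag as the main obstacle is necessary but not sufficient.
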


Although the proof of Theorem \ref{thm:grouping} is almost analogous to \cite{ke2016structure} and \cite{wang2020identifying}, for completeness, we provide it in Appendix \ref{app:proofs}.

\begin{remark}[Fine-tuning]\upshape\label{rem:fine}
    \cite{bai1997estimating} shows that the BS method tends to over/underestimate the location of break points depending on the share of each group and the gaps between the values of the group effects.
    To account for this problem, he proposes the following repartitioning method.
    Let $\{\hat t_0, \ldots , \hat t_{K^A}\}$ be the estimated break points obtained by the standard BS method.
    Then, we replace the initial estimate $\hat t_k$ with $\hat t_k^\mathrm{repart} \equiv \argmin_{\hat t_{k-1} + 1 \le \kappa < \hat t_{k + 1}} \hat S^A_{\hat t_{k-1} + 1, \hat t_{k + 1}}(\kappa)$ for all $k = 1, \ldots, K^A - 1$.
    Letting $\hat{\mcl{C}}^{A,\mathrm{repart}}_n$ be the resulting repartitioned estimator of $\mcl{C}_0^A$, given the result of Theorem \ref{thm:grouping}, it is straightforward to see that  $\hat{\mcl{C}}^{A,\mathrm{repart}}_n$ is also consistent for $\mcl{C}_0^A$.
    Note that the above repartitioning procedure can be implemented recursively until convergence.
    The numerical simulations in Section \ref{sec:MC} indicate that using this repartitioning method remarkably improves the probability of correctly predicting the group memberships.
\end{remark}

\subsection{Post-grouping ML estimator}

In the final step, we solve \eqref{eq:full_mle} approximately by using $(\hat{\mcl{C}}^A_n, \hat{\mcl{C}}^B_n)$ in the place of $(\mcl{C}_0^A, \mcl{C}_0^B)$.
Recalling that $a_1$ is pinned at $a_1 = 0$, let $\delta \equiv (\theta^\top, a_2, \ldots, a_{K^A}, b_1, \ldots, b_{K^B})^\top$, and $\mbb{D}\equiv \Theta \times \mbb{A}^{K^A - 1} \times \mbb{B}^{K^B}$ for the parameter space of $\delta$.
We denote $\delta_0$ as the true value of $\delta$.
Then, our final ML estimator for $\delta_0$ is defined as
\begin{align*}
    \hat \delta_n = \argmax_{\delta \in \mbb{D}} \hat{\mcl{L}}_n(\delta),
\end{align*}
where $\hat{\mcl{L}}_n(\delta) \equiv \mcl{L}_n\left(\theta, \left\{ \sum_{k = 1}^{K^A} a_k \cdot \mbf{1}\{ i \in \hat{\mcl{C}}^A_{n,k}\} \right\}, \left\{ \sum_{k = 1}^{K^B} b_k \cdot \mbf{1}\{ i \in \hat{\mcl{C}}^B_{n,k}\} \right\} \right)$.
Similarly, we define
\begin{align*}
    \hat \delta_n^{\text{oracle}} = \argmax_{\delta \in \mbb{D}} \mcl{L}_n(\delta),
\end{align*}
where $\mcl{L}_n(\delta) \equiv \mcl{L}_n\left(\theta, \left\{ \sum_{k = 1}^{K^A} a_k \cdot \mbf{1}\{ i \in \mcl{C}^A_{0,k}\} \right\}, \left\{ \sum_{k = 1}^{K^B} b_k \cdot \mbf{1}\{ i \in \mcl{C}^B_{0,k}\} \right\} \right)$; that is, $\hat \delta_n^{\text{oracle}}$ is the ``oracle'' estimator that is computed based on the true $\mcl{C}_0^A$ and $\mcl{C}_0^B$.
Since $\hat \delta_n^{\text{oracle}}$ is the standard parametric ML estimator, the estimator follows a normal distribution asymptotically at the parametric rate, and its asymptotic covariance matrix is given by the inverse Fisher Information matrix.
Meanwhile, we have shown in Theorem \ref{thm:grouping} that the estimated group memberships $(\hat{\mcl{C}}^A_n, \hat{\mcl{C}}^B_n)$ are equal to $(\mcl{C}_0^A, \mcl{C}_0^B)$ w.p.a.1.
Therefore, we can claim that the final ML estimator $\hat \delta_n$ has asymptotically the same statistical performance as the oracle estimator $\hat \delta_n^{\text{oracle}}$, and, thus, it is asymptotically fully efficient.
We formally state this result in the next theorem.

\begin{theorem}\label{thm:normality}
    Suppose that Assumptions \ref{as:error}--\ref{as:Z_bound} and \ref{as:identification}--\ref{as:bs} hold.
    In addition, we assume that $\mcl{I}_{\delta\delta} \equiv - \lim_{n \to \infty} \E \left[ \partial^2 \mcl{L}_n(\delta_0)/(\partial \delta \partial \delta^\top) \right]$ exists and is positive definite.
    Then, $\hat \delta_n$ and $\hat \delta_n^{\mathrm{oracle}}$ have the same asymptotic distribution: $\sqrt{\frac{N}{2}} (\hat \delta_n^{\mathrm{oracle}} - \delta_0) \overset{d}{\to} N(\mbf{0}_{d_z + K^A + K^B + 1}, \mcl{I}_{\delta\delta}^{-1})$.
\end{theorem}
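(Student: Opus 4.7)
The plan is to reduce the whole claim to a standard parametric maximum likelihood argument for the oracle estimator $\hat \delta_n^{\mathrm{oracle}}$ by first showing that $\hat\delta_n$ and $\hat\delta_n^{\mathrm{oracle}}$ coincide with probability approaching one. Define the event $E_n \equiv \{\hat{\mcl{C}}_n^A = \mcl{C}_0^A\} \cap \{\hat{\mcl{C}}_n^B = \mcl{C}_0^B\}$. On $E_n$ the two concentrated log-likelihoods $\hat{\mcl{L}}_n(\delta)$ and $\mcl{L}_n(\delta)$ are identical as functions of $\delta$, so their argmaxes on $\mbb{D}$ agree, i.e., $\hat\delta_n = \hat\delta_n^{\mathrm{oracle}}$ on $E_n$. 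Theorem \ref{thm:grouping} gives $\Pr(E_n) \to 1$, which yields $\Pr(\hat\delta_n = \hat\delta_n^{\mathrm{oracle}}) \to 1$ and hence $\sqrt{N/2}(\hat\delta_n - \hat\delta_n^{\mathrm{oracle}}) = o_P(1)$. It therefore suffices to establish the stated asymptotic distribution for $\hat\delta_n^{\mathrm{oracle}}$.

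For $\hat\delta_n^{\mathrm{oracle}}$ the setup is textbook: once the true group labels are used, $\delta$ has the fixed dimension $d_z + K^A + K^B + 1$ while the sample size $N$ diverges, and (by Assumption \ref{as:error}(ii)) the dyadic score contributions indexed by unordered pairs $\{i,j\}$ are mutually independent. I would first show consistency $\hat\delta_n^{\mathrm{oracle}} \overset{p}{\to} \delta_0$ using the compactness in Assumption \ref{as:para_space}, the identification hypothesis underlying Assumption \ref{as:identification} restricted to the grouped parameterization, and a uniform law of large numbers for the average of $\ln P_{i,j}$-type summands (uniformly bounded away from $0,1$ by the discussion following Assumption \ref{as:Z_bound}, and smooth by Assumption \ref{as:error}(iii)).

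Given consistency, I would perform the usual Taylor expansion of the first-order condition $0 = \partial \mcl{L}_n(\hat\delta_n^{\mathrm{oracle}})/\partial\delta$ about $\delta_0$:
\begin{equation*}
\sqrt{N/2}\,(\hat\delta_n^{\mathrm{oracle}} - \delta_0) = \bigl[ -\mcl{H}_n(\bar\delta) \bigr]^{-1} \sqrt{N/2}\,\frac{\partial \mcl{L}_n(\delta_0)}{\partial \delta},
\end{equation*}
where $\mcl{H}_n(\cdot)$ is the Hessian of $\mcl{L}_n$ in $\delta$ and $\bar\delta$ lies on the segment between $\hat\delta_n^{\mathrm{oracle}}$ and $\delta_0$. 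For the score term, write $\partial \mcl{L}_n(\delta_0)/\partial\delta$ as a normalized sum of $N/2$ independent mean-zero dyadic scores with uniformly bounded variances; a standard Lindeberg--L\'evy CLT applied to pairs then yields $\sqrt{N/2}\,\partial \mcl{L}_n(\delta_0)/\partial\delta \overset{d}{\to} N(\mbf{0}, \mcl{I}_{\delta\delta})$, using the information-matrix identity that follows from Assumption \ref{as:error} and interchange of differentiation and expectation. For the Hessian, a uniform law of large numbers on a shrinking neighborhood of $\delta_0$, combined with consistency of $\hat\delta_n^{\mathrm{oracle}}$ and the assumed limit $\mcl{I}_{\delta\delta}$, gives $-\mcl{H}_n(\bar\delta) \overset{p}{\to} \mcl{I}_{\delta\delta}$, which is invertible by hypothesis. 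Slutsky then delivers $\sqrt{N/2}(\hat\delta_n^{\mathrm{oracle}} - \delta_0) \overset{d}{\to} N(\mbf{0}, \mcl{I}_{\delta\delta}^{-1})$, and the asymptotic equivalence from the first paragraph transfers the conclusion to $\hat\delta_n$.

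The main obstacle is bookkeeping rather than conceptual: one has to verify the information-matrix identity and the interchange of expectation and differentiation for the pairwise likelihood built from $P_{i,j}(\theta,\bgamma) = F - H$, and to check that the dyadic score is mean zero and square-integrable uniformly in $n$. These follow from the smoothness in Assumption \ref{as:error}(iii), the boundedness of $\{Z_{i,j}\}$ (Assumption \ref{as:Z_bound}) together with the compact parameter space (Assumption \ref{as:para_space}), which jointly keep $P_{i,j}$ and $1 - P_{i,j} - P_{j,i}$ bounded away from $0$, so the log-likelihood and its first three derivatives admit integrable envelopes. Everything else is a standard Wald-type MLE argument, and the reduction to the oracle estimator via Theorem \ref{thm:grouping} does all the heavy lifting.
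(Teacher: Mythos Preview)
Your proposal is correct and follows essentially the same route as the paper: reduce to the oracle via Theorem \ref{thm:grouping} (the paper phrases this as $\Pr(\sqrt{N/2}(\hat\delta_n - \delta_0)\in C) = \Pr(\sqrt{N/2}(\hat\delta_n^{\mathrm{oracle}} - \delta_0)\in C) + o(1)$, which is equivalent to your $E_n$ argument), then run a standard Taylor expansion of the score for the finite-dimensional oracle MLE. One small correction: the dyadic scores $s^\delta_{i,j}(\delta_0)$ are independent across pairs but not identically distributed (the $Z_{i,j}$ are treated as fixed and heterogeneous), so ``Lindeberg--L\'evy'' is a misnomer; the paper invokes a CLT for uniformly bounded independent summands (Billingsley, Example 27.4), which is what you should cite as well.
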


Recall that the asymptotic equivalence between $\hat \delta_n$ and $\hat \delta_n^{\text{oracle}}$ relies on the dense network structure where each agent's specific effects can be point-identified.
When the networks are not dense,  $\hat \delta_n$ is generally inconsistent, while $\hat \delta_n^{\text{oracle}}$ may be still consistent (potentially with a slower convergence rate).
Finally, note that the above discussions hold true if the repartitioned estimator $(\hat{\mcl{C}}^{A, \mathrm{repart}}_n, \hat{\mcl{C}}^{B, \mathrm{repart}}_n)$ is used instead of $(\hat{\mcl{C}}^A_n, \hat{\mcl{C}}^B_n)$.


\section{Monte Carlo Experiments}\label{sec:MC}

In this section, we examine the finite sample performance of the three-step ML estimator. 
We consider the following data-generating process for the Monte Carlo experiments:
\begin{align*}
    \begin{split}
       u_{i,j}(q) & = Z_{i,j,1} \beta_{0,1} + Z_{i,j,2} \beta_{0,2} + \alpha_0 q + \sum_{k = 1}^{K^A} a_{0,k} \cdot \mbf{1}\{ i \in \mcl{C}^A_{0,k}\} + \sum_{k = 1}^{K^B} b_{0,k} \cdot \mbf{1}\{ i \in \mcl{C}^B_{0,k}\} - \epsilon_{i,j} , \;\; \text{for} \;\; i \neq j,
    \end{split}
\end{align*}
where $Z_{i,j,1} = |X_i - X_j|$ with $X_i \overset{i.i.d.}{\sim} \mathrm{Uniform}[-1,1]$, $Z_{i,j,2} \overset{i.i.d.}{\sim} N(0, 1)$, $(\epsilon_{i,j}, \epsilon_{j,i})$ is i.i.d. across dyads as the standard bivariate normal with correlation coefficient $\rho_0 = 0.6$, $(\beta_{0,1}, \beta_{0,2}, \alpha_0) = (-1.2, 1.6, 0.6)$, and $K^A = K^B = 3$.
For the groupwise heterogeneity parameters, we consider $(a_{0,1},a_{0,2},a_{0,3}) = (0, r, 2r)$ and $ (b_{0,1},b_{0,2},b_{0,3}) = (-0.4 - r, -0.4, -0.4 + r)$ for $r \in \{0.4, 0.7, 1.0\}$.
The smaller (larger) $r$ becomes, the more difficult (easier) the identification of the group structure.
The group memberships are determined randomly while maintaining the equal size of each group.
Exceptionally for observation $1$, $1 \in \mcl{C}_{0,1}^A$ (so that $A_{0,1} = 0$) is fixed throughout the experiments.
For each model setup, we consider two sample sizes: $n \in \{54,75\}$; thus, the size of each group is 18 in the former case and is 25 in the latter.
The number of Monte Carlo repetitions is set to 500 for each single experiment.
For the estimation of the group memberships, for comparison, we use both the standard BS method without repartitions and the repartitioned BS method.

We first report the simulation results of estimating the common parameters $(\alpha_0, \beta_{0,1}, \beta_{0,2}, \rho_0)$.
Table \ref{table:common} presents the bias and RMSE (root mean squared error) for the following four estimators: the initial ML estimator given in \eqref{eq:mle} (1st-step ML), the three-step ML estimator based on the BS method with no iterations (BS0) and that with two iterations (BS2), and the oracle estimator based on the true group membership (Oracle). 
For estimating $(\beta_{0,1}, \beta_{0,2})$, as expected, the 1st-step ML estimator is largely biased for all scenarios due to the incidental parameter problem.
Although the three-step estimators (i.e., BS0 and BS2) also have some biases when $r = 0.4$ and $n = 54$, the biases disappear as either $r$ or $n$ increases.
Thus, these biases are probably due to frequent misclassification of group memberships under small $r$ and $n$.
In terms of RMSE, although we can observe a certain gap between the oracle estimator and the three-step estimators, the gaps can be reduced by increasing $r$ and $n$, which is consistent with our theory. 
Interestingly, even when using the 1st-step ML estimator, the strategic interaction effect and the error correlation parameter can be estimated with almost no bias.

\begin{table}[!h]
    \begin{center}
    \caption{Simulation Results: Estimation of Common Parameters}
    \begin{small}
    \begin{tabular}{cclcccccccc}
            \hline \hline
            &  &  & \multicolumn{2}{c}{$\alpha_0$} & \multicolumn{2}{c}{$\beta_{0,1}$} & \multicolumn{2}{c}{$\beta_{0,2}$} & \multicolumn{2}{c}{$\rho_0$} \\
            $r$ & $n$ & Estimator & Bias & RMSE & Bias & RMSE & Bias & RMSE & Bias & RMSE \\
            \hline
            0.4 & 54 & 1st-step ML & 0.027  & 0.194  & -0.192  & 0.358  & 0.222  & 0.257  & 0.047  & 0.196  \\
            &  & BS0 & 0.056  & 0.174  & -0.113  & 0.237  & 0.120  & 0.162  & -0.059  & 0.176  \\
            &  & BS2 & 0.057  & 0.176  & -0.128  & 0.254  & 0.134  & 0.175  & -0.050  & 0.176  \\
            &  & Oracle & 0.006  & 0.131  & -0.008  & 0.082  & 0.009  & 0.092  & 0.001  & 0.126  \\
            & 75 & 1st-step ML & 0.032  & 0.122  & -0.102  & 0.261  & 0.152  & 0.172  & 0.022  & 0.118  \\
            &  & BS0 & 0.043  & 0.116  & -0.056  & 0.176  & 0.080  & 0.108  & -0.054  & 0.125  \\
            &  & BS2 & 0.043  & 0.117  & -0.063  & 0.190  & 0.091  & 0.117  & -0.043  & 0.120  \\
            &  & Oracle & -0.001  & 0.093  & -0.002  & 0.061  & 0.004  & 0.066  & 0.004  & 0.091  \\
            \hline
            0.7 & 54 & 1st-step ML & 0.030  & 0.221  & -0.175  & 0.351  & 0.221  & 0.263  & 0.061  & 0.234  \\
            &  & BS0 & 0.020  & 0.176  & -0.056  & 0.220  & 0.056  & 0.125  & -0.072  & 0.193  \\
            &  & BS2 & 0.035  & 0.184  & -0.074  & 0.235  & 0.084  & 0.143  & -0.069  & 0.197  \\
            &  & Oracle & 0.007  & 0.132  & -0.010  & 0.079  & 0.010  & 0.094  & 0.002  & 0.128  \\
            & 75 & 1st-step ML & 0.031  & 0.127  & -0.104  & 0.267  & 0.152  & 0.173  & 0.031  & 0.126  \\
            &  & BS0 & 0.017  & 0.114  & -0.017  & 0.169  & 0.025  & 0.074  & -0.074  & 0.138  \\
            &  & BS2 & 0.025  & 0.118  & -0.038  & 0.183  & 0.048  & 0.086  & -0.058  & 0.132  \\
            &  & Oracle & -0.004  & 0.091  & -0.001  & 0.058  & 0.004  & 0.065  & 0.006  & 0.091  \\
            \hline
            1.0 & 54 & 1st-step ML & 0.027  & 0.234  & -0.178  & 0.356  & 0.227  & 0.267  & 0.082  & 0.268  \\
            &  & BS0 & -0.007  & 0.182  & 0.003  & 0.209  & -0.018  & 0.108  & -0.120  & 0.233  \\
            &  & BS2 & 0.013  & 0.186  & -0.036  & 0.214  & 0.029  & 0.114  & -0.092  & 0.224  \\
            &  & Oracle & 0.006  & 0.136  & -0.011  & 0.080  & 0.014  & 0.094  & 0.000  & 0.139  \\
            & 75 & 1st-step ML & 0.033  & 0.145  & -0.111  & 0.274  & 0.155  & 0.179  & 0.045  & 0.148  \\
            &  & BS0 & -0.009  & 0.130  & 0.029  & 0.173  & -0.039  & 0.085  & -0.103  & 0.172  \\
            &  & BS2 & 0.010  & 0.120  & -0.012  & 0.158  & 0.012  & 0.079  & -0.063  & 0.137  \\
            &  & Oracle & -0.001  & 0.094  & -0.002  & 0.059  & 0.006  & 0.066  & 0.005  & 0.097  \\
            \hline \hline
    \end{tabular}
    \end{small}
    \label{table:common}
    \end{center}
    \begin{small}
        Note.
        1st-step ML: the initial ML estimator,
        BS0: the three-step ML estimator based on the BS method with no repartitions,
        BS2: the three-step ML estimator based on the repartitioned BS method with two iterations, 
        Oracle: the oracle estimator based on the true group membership.
    \end{small}
\end{table}

The simulation results of estimating the group memberships are summarized in Table \ref{table:groupMC}.
Here, we compare the performance of BS0 and BS2 in terms of the ratio of correct group classification.
First of all, the results indicate that the repartitioned BS method (i.e., BS2) clearly outperforms the standard BS method without repartitions (i.e., BS0).  
As expected, as $r$ gets smaller, correctly predicting the group membership becomes significantly more difficult.
If the gaps between the values of the group effects are sufficiently large and the sample size is not small, BS2 can attain almost 90\% of correct classification.\footnote{
    One might view that the results reported in Table \ref{table:groupMC} are not particularly good for the BS algorithm.
    A main reason for that would be that our model is a bivariate binary response model, whereas most of the previous studies using the BS algorithm has focused on models with a continuous outcome.
}
We cannot observe any clear difference between the estimation of $\mcl{C}^A_0$ and that of $\mcl{C}^B_0$.

\begin{table}[!h]
    \begin{center}
    \caption{Simulation Results: BS Algorithm}
    \begin{small}
    \begin{tabular}{ccccc}
            \hline \hline
            &  &  & \multicolumn{2}{c}{Correct classification ratio}  \\
            $r$ & $n$ & Estimator & $\mcl{C}^A_0$ & $\mcl{C}^B_0$ \\
            \hline
            0.4 & 54 & BS0 & 0.533  & 0.533  \\
             &  & BS2 & 0.564  & 0.558  \\
             & 75 & BS0 & 0.569  & 0.571  \\
             &  & BS2 & 0.612  & 0.608  \\
            0.7 & 54 & BS0 & 0.638  & 0.645  \\
             &  & BS2 & 0.699  & 0.696  \\
             & 75 & BS0 & 0.706  & 0.699  \\
             &  & BS2 & 0.798  & 0.787  \\
            1.0 & 54 & BS0 & 0.720  & 0.712  \\
             &  & BS2 & 0.801  & 0.797  \\
             & 75 & BS0 & 0.782  & 0.783  \\
             &  & BS2 & 0.900  & 0.898  \\
            \hline \hline
    \end{tabular}
    \end{small}
    \label{table:groupMC}
    \end{center}
    \begin{small}
        Note.
        BS0: the three-step ML estimator based on the BS method with no repartitions,
        BS2: the three-step ML estimator based on the repartitioned BS method with two iterations.
    \end{small}
\end{table}


\section{ Empirical Application to International Visa-Free Travels}\label{sec:empiric}

As an empirical application of our model and method, we analyze the network of international visa-free travels.
The dependent variable of interest is $G_n = (g_{i,j})_{1 \le i,j \le n}$, where $g_{i,j} = 1$ if country $i$ allows the citizens in country $j$ to visit $i$ without visas, and $g_{i,j} = 0$ otherwise.
Since the bilateral relationship about visa-free policy is expected to be complementary, this would fit into our model framework.

In this empirical study, we consider 57 countries selected mainly from Asia, the Middle East, the former USSR, and Oceania.\footnote{
    The list of countries used in this empirical study is as follows: Armenia, Australia, Azerbaijan, Bahrain, Bangladesh, Belarus, Bhutan, Brunei, Cambodia, China, Cyprus, Estonia, Fiji, Georgia, Hong Kong, India, Indonesia, Iran, Iraq, Israel, Japan, Jordan, Kazakhstan, Kiribati, Kuwait, Kyrgyzstan, Laos, Latvia, Lebanon, Lithuania, Malaysia, Moldova, Mongolia, Myanmar, Nauru, Nepal, New Zealand, Oman, Pakistan, Papua New Guinea, Philippines, Qatar, Russia, Saudi Arabia, Singapore, South Korea, Sri Lanka, Tajikistan, Thailand, Tonga, Turkey, UAE, Ukraine, Uzbekistan, Vanuatu, Viet Nam, and Yemen.
    These countries are selected based on geographical proximity and ease of data collection.
    }
The information about the visa policy of each country is taken from \textit{Henly and Partners: Passport Index 2020} (\url{https://www.henleypassportindex.com/passport}).\footnote{
    Based on their definition, we categorize electronic travel authorization (eTA) and on-arrival visa as visa-free access.
    }
The total number of dyads in this network is $57(57 - 1)/2 = 1596$.
From Table \ref{table:dist_g}, which summarizes the distribution of the link connections, we can observe that the number of country pairs with one-way links is smaller than that with mutual links or no links.
This would suggest the presence of complementarity in the network formation process.  
According to the above-mentioned passport index, Japan, Singapore, and South Korea are the top three countries among the 57 countries in terms of the number of all countries with visa-free access.
For our restricted sample network, South Korea has the largest in-degree $\sum_i g_{i,\text{South Korea}} = 49$.
For the out-degree, Nepal has the largest value $\sum_j g_{\text{Nepal},j} = 55$; that is, Nepal allows 55 countries (out of 57) to visit Nepal only with on-arrival visas.
More detailed information can be found in Table \ref{table:degree} in Appendix \ref{subsec:empiric}.

\begin{table}[!h]
    \begin{center}
    \caption{Distribution of $\{(g_{i,j}, g_{j,i}): 1 \le i < j \le n \}$}
    \begin{tabular}{c|cc}
            \hline \hline
                          & $g_{j,i} = 0$ &  $g_{j,i} = 1$ \\ \hline
            $g_{i,j} = 0$ &	 495          &        343     \\
            $g_{i,j} = 1$ &  349          &        409     \\
        \hline \hline
    \end{tabular}
    \label{table:dist_g}
    \end{center}
\end{table}

The network for all the 57 countries is quite complicated and difficult to grasp the entire picture.
As one illustration of our data, Figure \ref{fig:asia} presents the sub-network obtained by restricting the vertices to the Eastern and Southeastern Asian countries.
The left panel in the figure shows the whole shape of this sub-network.
(Note that the direction of the arrows in the figure is ``not'' the direction of visa-free access, but it represents that the target country is allowed to visit the country at the arrow's origin without visas.)
The right panel shows the network created by leaving only one-way links from the left one; in other words, this is $(g_{i,j}\cdot (1 - g_{j,i}))_{i,j \in \text{Brunei}, \ldots, \text{Viet Nam}}$.
From this figure, we can expect the existence of a certain level of degree heterogeneity.
More specifically, Cambodia, for example, has five outgoing one-way links in this sub-network, suggesting that this country would have a larger sender effect $A$.
In contrast, countries such as Japan and South Korea would exhibit a larger receiver effect $B$.

\begin{figure}[!h]
	\centering
	\includegraphics[width = 18cm]{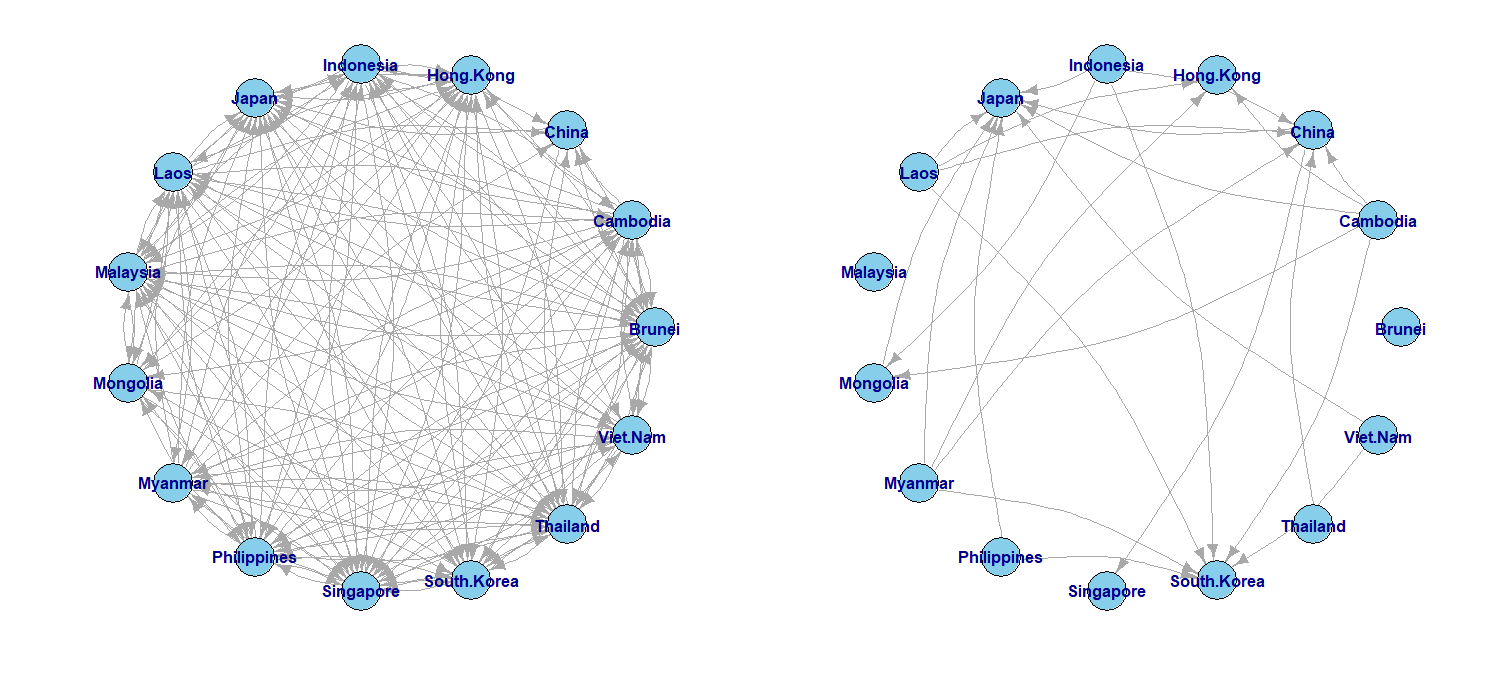}
    \caption{Eastern and Southeastern Asian sub-network}
    (Left panel: the whole sub-network, right panel: only one-way links.)
	\label{fig:asia}
\end{figure}

For estimating the network formation model, we consider five covariates; for their definitions, see Table \ref{table:def}.
The summary statistics of the covariates are provided in Table \ref{table:summary} in Appendix \ref{subsec:empiric}.
With these variables, we consider the following payoff function:
\begin{align*}
    u_{i,j}(q) 
    & = (\ln \textit{gdp\_pc}_i)(\ln \textit{gdp\_pc}_j) \beta_{0,1} + |\textit{free}_i - \textit{free}_j| \beta_{0,2} + \mbf{1}\{\textit{region}_i = \textit{region}_j\} \beta_{0,3} + \ln (\textit{export}_{ij} + 1)\beta_{0,4} \\
    & \quad + \ln (\textit{import}_{ij} + 1) \beta_{0,5} + \alpha_0 q + A_{0,i} + B_{0,j} - \epsilon_{i,j}, \;\; \text{for} \;\; i \neq j,
\end{align*}
where we assume that $(\epsilon_{i,j}, \epsilon_{j,i})$ have the standard bivariate normal distribution with correlation coefficient $\rho_0$.
To estimate our network formation model with grouped degree heterogeneity, we first need to determine the number of groups for the sender effects $\{A_{0,i}\}$ and that for the receiver effects $\{B_{0,i}\}$, $K^A$ and $K^B$, respectively.
Then, following \cite{ke2016structure} and \cite{wang2020identifying}, the optimal $(K^A, K^B)$ is selected as the minimizer of the BIC criterion: $-2\hat{\mcl{L}}_n(\hat \delta_n) + (6 + K^A + K^B)\ln (1596)$.
Then, as a result of searching over the models with $(K^A, K^B) \in \{2, \ldots, 7\}^2$, we find that the model with $(K^A, K^B) = (7, 6)$ achieves the smallest BIC (see Table \ref{table:BIC} in Appendix \ref{subsec:empiric} for more detailed information), and this is the model reported here.
For comparison, we estimate not only our proposed model, which we call the grouped heterogeneity model, but also a dyadic bivariate probit model without strategic interaction and degree heterogeneities as a benchmark.
For the estimation of the group memberships, we employ the repartitioning method with two iterations (i.e., BS2 in the previous section).

\begin{table}[!h]
    \begin{center}
    \caption{Definitions of Variables}
    \begin{tabular}{ll}
            \hline \hline
      \multicolumn{1}{c}{Variables} & \multicolumn{1}{c}{Definitions} \\ \hline
      $\textit{gdp\_pc}_i$ &	GDP per capita in 2018 (1,000 USD) \\
      $\textit{free}_i$	& Freedom rating (1 = Most Free, 7 = Least Free)$^a$ \\
      $\textit{region}_i$ & Categorical variable: East Asia, Southeast Asia, Central Asia, Europe, Middle East, or Oceania. \\
      $\textit{export}_{ij}$ & Total export value from country $i$ to $j$ in 2018 (million USD)$^b$ \\
      $\textit{import}_{ij}$ & Total import value from country $i$ to $j$ in 2018 (million USD)$^b$ \\
        \hline \hline
    \end{tabular}
    \label{table:def}
    \end{center}
    Sources: (a) \textit{Freedom in the World 2018}, Freedom House (\url{https://freedomhouse.org/}); (b) IMF DATA (\url{https://data.imf.org/}). 
\end{table}

The estimation results are summarized in Table \ref{table:result}.
First of all, as expected, our proposed model suggests that there is a significant strategic complementarity in the network formation behavior.
We can also find a certain level of degree heterogeneity in terms of both the sender and the receiver effects.
Comparing the grouped heterogeneity model and the benchmark model, the log-likelihood value for the former is apparently significantly larger than that for the latter.
This large difference in the degree of model fitting also demonstrates the significance of strategic effect and unobserved heterogeneity (note however that these models are not nested).
For specific parameter estimates, we can observe several non-negligible differences between the two models.
For example, the effect of the export amount is predicted to be positive in the grouped heterogeneity model, whereas the benchmark model predicts a significantly negative impact.
The error correlation parameter is not significantly different from zero in our model but is weakly positively significant in the benchmark model.
This result would be understandable since the benchmark model can account for the interdependence of the links only through the error correlation.
Additionally, there are several interesting findings.
For both models, if countries $i$ and $j$ are located in the same region, they become more likely to allow visa-free access, as expected.
Not only in terms of geographical proximity, but we can also observe significant homophily in terms of the political system. 

\begin{table}[!h]
    \begin{center}
    \caption{Estimation Results}
    \begin{tabular}{lcccc}
        \hline \hline
        & \multicolumn{2}{c}{\underline{Grouped Heterogeneity Model}} & \multicolumn{2}{c}{\underline{Benchmark Model}}  \\
        & Estimate & $t$-value & Estimate & $t$-value \\
        \hline
        Intercept &  &  & -0.208  & -1.196  \\
        Strategic effect: $\alpha$ & 0.482  & 3.842  &  &  \\
        $(\ln \textit{gdp\_pc}_j) (\ln \textit{gdp\_pc}_i)$: $\beta_1$ & 0.063  & 4.234  & 0.047  & 2.366  \\
        $|\textit{free}_i - \textit{free}_j|$: $\beta_2$ & -0.229  & -7.044  & -0.075  & -1.692  \\
        $\mbf{1}\{ \textit{region}_i = \textit{region}_j\} $: $\beta_3$ & 1.402  & 9.496  & 0.691  & 4.400  \\
        $\ln (\textit{export}_{ij} + 1)$: $\beta_4$ & 0.046  & 1.829  & -0.073  & -3.604  \\
        $\ln (\textit{import}_{ij} + 1)$: $\beta_5$ & 0.056  & 2.365  & 0.137  & 6.535  \\
        Error correlation: $\rho$ & -0.080  & -0.450  & 0.088  & 1.656  \\
        &  &  &  &  \\
        \multicolumn{1}{c}{\underline{Sender effects: $A$}} &  &  &  &  \\
        Group 1: $a_1$ & 0  & - &  &  \\
        Group 2: $a_2$ & 1.060  & 4.552  &  &  \\
        Group 3: $a_3$ & 2.051  & 10.199  &  &  \\
        Group 4: $a_4$ & 2.846  & 13.572  &  &  \\
        Group 5: $a_5$ & 3.813  & 16.302  &  &  \\
        Group 6: $a_6$ & 4.790  & 16.642  &  &  \\
        Group 7: $a_7$ & 6.105  & 15.387  &  &  \\
        \multicolumn{1}{c}{\underline{Receiver effects: $B$}} &  &  &  &  \\
        Group 1: $b_1$ & -5.670  & -14.388  &  &  \\
        Group 2: $b_2$ & -4.590  & -16.159  &  &  \\
        Group 3: $b_3$ & -3.902  & -14.559  &  &  \\
        Group 4: $b_4$ & -3.446  & -14.240  &  &  \\
        Group 5: $b_5$ & -2.561  & -10.952  &  &  \\
        Group 6: $b_6$ & -1.558  & -6.722  &  &  \\
        &  &  &  &  \\
        Log-likelihood & \multicolumn{2}{c}{-810.798} & \multicolumn{2}{c}{-1531.496} \\
        \# Observations & \multicolumn{2}{c}{1,596} & \multicolumn{2}{c}{1,596} \\
    \hline \hline
    \end{tabular}
    \label{table:result}
    \end{center}
\end{table}

For the estimation results of country-specific effects, we report the estimated group memberships in Table \ref{table:group}.
As expected from the above discussion, countries such as Cambodia are indeed classified into the highest group (i.e., Group 7) in terms of the sender effect.
The other two countries that have Group-7 sender effect are Nepal and Sri Lanka.
For the receiver effect, these two countries are classified as Group 2, and Cambodia is in Group 3.
Overall, interestingly, there seems to be a weak negative correlation between the sender effects and the receiver effects.
As expected from the above discussion, Japan and South Korea indeed belong to the group with the highest receiver effect (i.e., Group 6).
The magnitudes of the receiver effects seem to roughly correlate with the size of the countries' economies (with some exceptions, such as China, India, and Russia).

\begin{table}[!h]
    \begin{center}
    \caption{Estimated Group Memberships}
    \begin{small}
    \begin{tabular}{l|p{7cm}|p{7cm}}
        \hline\hline
     & \multicolumn{1}{|c|}{\textbf{Sender Effect}: $A$} & \multicolumn{1}{|c}{\textbf{Receiver Effect}: $B$}\\
        \hline
        Group 1	
        & Australia,
        China,
        Iraq,
        Nauru,
        Oman,
        Russia
        & Iraq,
        Pakistan
         \\
        \hline
        Group 2	
        & Bhutan,
        India,
        Japan,
        New Zealand,
        UAE        
        & Bangladesh,
        Iran,
        Jordan,
        Lebanon,
        Nepal,
        Sri Lanka,
        Yemen
         \\
        \hline
        Group 3	
        & Bahrain,
        Cyprus,
        Estonia,
        Georgia,
        Kiribati,
        Kuwait,
        Latvia,
        Lithuania,
        Mongolia,
        Myanmar,
        Papua New Guinea,
        Qatar,
        Saudi Arabia,
        South Korea,
        Tonga,
        Turkey,
        Ukraine,
        Viet Nam,
        Yemen
        &
        Armenia,
        Cambodia,
        Laos,
        Myanmar,
        Qatar,
        Russia,
        Viet Nam
        \\
        \hline
        Group 4 &
        Azerbaijan,
        Belarus,
        Brunei,
        Fiji,
        Hong Kong,
        Israel,
        Kazakhstan,
        Kyrgyzstan,
        Moldova,
        Pakistan,
        Singapore,
        Thailand,
        Uzbekistan        
        & Belarus,
        Bhutan,
        China,
        Fiji,
        Georgia,
        India,
        Indonesia,
        Kazakhstan,
        Kiribati,
        Kyrgyzstan,
        Moldova,
        Mongolia,
        Nauru,
        Papua New Guinea,
        Philippines,
        Tajikistan,
        Thailand,
        Tonga,
        Ukraine,
        Uzbekistan
        \\
        \hline
        Group 5	&
        Armenia,
        Bangladesh,
        Jordan,
        Lebanon,
        Malaysia,
        Philippines,
        Tajikistan,
        Vanuatu
        & Azerbaijan,
        Bahrain,
        Kuwait,
        Latvia,
        Lithuania,
        Oman,
        Saudi Arabia,
        Turkey,
        UAE,
        Vanuatu
        \\
        \hline
        Group 6	&
        Indonesia,
        Iran,
        Laos
        & Australia,
        Brunei,
        Cyprus,
        Estonia,
        Hong Kong,
        Israel,
        Japan,
        Malaysia,
        New Zealand,
        Singapore,
        South Korea
        \\
        \hline
        Group 7	&
        Cambodia,
        Nepal,
        Sri Lanka
        &        \\
        \hline\hline
    \end{tabular}
    \end{small}
    \label{table:group}
    \end{center}
\end{table} 


\section{Conclusion}\label{sec:conclusion}

This paper proposed a network formation model with pairwise strategic interaction and grouped degree heterogeneity.
Assuming some parametric form for the error distribution, we proved that the model parameters can be identified under the availability of agent-specific covariates that have large supports and also have variations across all potential partners.
For estimating the model, based on the same idea as in \cite{bresnahan1990entry} and \cite{berry1992estimation}, we proposed the three-step ML procedure: 
in the first-step, the model is estimated without considering the group structure; subsequently, we estimate the group memberships using the BS algorithm given the estimates for the heterogeneity parameters obtained in the first step; and, finally, based on the estimated group memberships, we re-estimate the model. 
Under certain regularity conditions, we showed that the proposed estimator is asymptotically unbiased and distributed as normal at the parametric rate.
The results of the Monte Carlo simulations show that our estimator performs reasonably well in finite samples. 
An empirical application to international visa-free travel networks indicates the usefulness of the proposed
model.

Several limitations and extensions are as follows.
First, our approach can be used only in pairwise network formation games with no network externalities to/from the rest of the links, and this limits the empirical applicability.
Therefore, it would be worthwhile to extend our results to network formation models with general network externalities involving more than two agents.
However, we conjecture that we would resort to partial identification to achieve this.
Second, our approach requires that the degree heterogeneity parameters have discrete support, although, in reality, it is possible that they are continuous.
To address this issue, it is of interest to modify our model in a similar manner to \cite{bonhomme2017discretizing} and investigate the three-step ML estimator in which $K^A$ and $K^B$ grows slowly to infinity.
Third, as our model is a dyadic binary game model, where a pairwise network formation model is its special case, we can consider its ordered-response game version as a natural extension.
For example, we might be interested in analyzing bilateral military relations: non-alliance, quasi-alliance, or alliance.
We expect that such extension can be relatively easily achieved by adopting the ML estimator discussed in \cite{aradillas2019inference}.
Finally, related to the empirical application in this study, we might be interested in investigating the causal effect of visa policies between two countries on the flows of tourists between them; this is a dyadic treatment evaluation problem when the treatment variable is determined strategically.
To deal with such situations, combining the results of this study and the marginal treatment effect framework developed in \cite{hoshino2020treatment} would be beneficial. 
We leave these topics for future research.


\clearpage
\small
\appendix

\begin{center}
	\LARGE Appendix
\end{center}

\section{Notations}\label{sec:notations}

\paragraph{Variables and parameters}
\begin{align*}
    W_{i,j} 
    & \equiv (Z_{i,j}^\top, \chi_{n,i}^\top, \chi_{n,j}^\top)^\top \\
    \theta
    & \equiv (\beta^\top, \alpha, \rho)^\top \\
    \mbf{A}
    & \equiv (A_1, \ldots, A_n)^\top, \quad \mbf{A}_{-1} \equiv (A_2, \ldots, A_n)^\top, \quad  \mbf{B}
    \equiv (B_1, \ldots, B_n)^\top \\
    \bgamma 
    & \equiv (\mbf{A}^\top, \mbf{B}^\top)^\top, \quad \bgamma_{-1} \equiv (\mbf{A}_{-1}^\top \mbf{B}^\top)^\top \\
    \Pi 
    & \equiv (\beta^\top, \bgamma^\top)^\top \\
    \pi_{i,j} 
    & \equiv Z_{i,j}^\top\beta + A_i + B_j = W_{i,j}^\top \Pi \\
    \delta 
    & \equiv (\theta^\top, a_2, \ldots, a_{K^A}, b_1, \ldots , b_{K^B})^\top.
\end{align*}
\paragraph{Functions and derivatives}
Throughout this appendix, for notational simplicity, we denote $\partial_a g(a) = \partial g(a) / (\partial a)$, $\partial^2_{ab} g(a, b) = \partial^2 g(a,b) / (\partial a \partial b)$, and so fourth.
\begin{align*}
    P_{i,j}(\theta, \bgamma) 
    & \equiv F(W_{i,j}^\top \Pi) - H(W_{i,j}^\top \Pi, W_{j,i}^\top \Pi + \alpha; \rho) \\
    \ell_{i,j}(\theta, \bgamma)
    & \equiv y_{i,j} \ln P_{i,j}(\theta, \bgamma) + y_{j,i} \ln P_{j,i}(\theta, \bgamma) + (1 - y_{i,j} - y_{j,i}) \ln (1 - P_{i,j}(\theta, \bgamma) - P_{j,i}(\theta, \bgamma) ) \\
    \mcl{L}_n(\theta, \bgamma)
    & \equiv \frac{2}{N} \sum_{i = 1}^n \sum_{j > i} \ell_{i,j}(\theta, \bgamma)\\
    p_{1,i,j}(\theta, \bgamma) 
    & \equiv \partial_{\pi_{i,j}} P_{i,j}(\theta, \bgamma) = f(W_{i,j}^\top \Pi) - H_1(W_{i,j}^\top \Pi, W_{j,i}^\top \Pi + \alpha; \rho) \\
    p_{2,i,j}(\theta, \bgamma)
    & \equiv \partial_{\pi_{j,i}} P_{i,j}(\theta, \bgamma) = - H_2(W_{i,j}^\top \Pi, W_{j,i}^\top \Pi + \alpha; \rho) \\
    H_\rho(\cdot, \cdot, ; \rho) 
    & \equiv \partial_\rho H(\cdot, \cdot; \rho) ,
\end{align*}
where $f$ is the derivative of $F$, and $H_l(\cdot, \cdot ; \rho)$ is the derivative of $H(\cdot, \cdot ; \rho)$ with respect to the $l$-th argument.
\begin{align*}
    \partial_{A_k} P_{i,j}(\theta, \bgamma)
    & = p_{1,i,j}(\theta, \bgamma)\mbf{1}\{ i = k \} + p_{2,i,j}(\theta, \bgamma)\mbf{1}\{ j = k \} \\
    \partial_{B_k} P_{i,j}(\theta, \bgamma)
    & = p_{1,i,j}(\theta, \bgamma)\mbf{1}\{ j = k \} + p_{2,i,j}(\theta, \bgamma)\mbf{1}\{ i = k \} \\
    s_{1,i,j}(\theta, \bgamma) 
    & \equiv \partial_{\pi_{i,j}} \ell_{i,j}(\theta, \bgamma) = \frac{y_{i,j} p_{1,i,j}(\theta, \bgamma)}{P_{i,j}(\theta, \bgamma)} + \frac{y_{j,i} p_{2,j,i}(\theta, \bgamma)}{P_{j,i}(\theta, \bgamma)} - \frac{(1 - y_{i,j} - y_{j,i})[p_{1,i,j}(\theta, \bgamma) + p_{2,j,i}(\theta, \bgamma)]}{1 - P_{i,j}(\theta, \bgamma) - P_{j,i}(\theta, \bgamma)} \\
    s_{2,i,j}(\theta, \bgamma) 
    & \equiv \partial_{\pi_{j,i}} \ell_{i,j}(\theta, \bgamma) = \frac{y_{i,j} p_{2,i,j}(\theta, \bgamma)}{P_{i,j}(\theta, \bgamma)} + \frac{y_{j,i} p_{1,j,i}(\theta, \bgamma)}{P_{j,i}(\theta, \bgamma)} - \frac{(1 - y_{i,j} - y_{j,i})[p_{2,i,j}(\theta, \bgamma) + p_{1,j,i}(\theta, \bgamma)]}{1 - P_{i,j}(\theta, \bgamma) - P_{j,i}(\theta, \bgamma)} \\
    s^{A_k}_{i,j}(\theta, \bgamma) 
    & \equiv \partial_{A_k} \ell_{i,j}(\theta, \bgamma) = s_{1,i,j}(\theta, \bgamma)\mbf{1}\{ i = k \} + s_{2,i,j}(\theta, \bgamma)\mbf{1}\{ j = k \} \\
    s^{B_k}_{i,j}(\theta, \bgamma) 
    & \equiv \partial_{B_k} \ell_{i,j}(\theta, \bgamma) = s_{1,i,j}(\theta, \bgamma)\mbf{1}\{ j = k \} + s_{2,i,j}(\theta, \bgamma)\mbf{1}\{ i = k \} \\
    s^\mbf{A}_{i,j}(\theta, \bgamma) 
    & \equiv \partial_{\mbf{A}_{-1}} \ell_{i,j}(\theta, \bgamma) = s_{1,i,j}(\theta, \bgamma)\chi_{n,i,-1} + s_{2,i,j}(\theta, \bgamma)\chi_{n,j,-1} \\
    s^\mbf{B}_{i,j}(\theta, \bgamma) 
    & \equiv \partial_\mbf{B} \ell_{i,j}(\theta, \bgamma) = s_{1,i,j}(\theta, \bgamma)\chi_{n,j} + s_{2,i,j}(\theta, \bgamma)\chi_{n,i} \\
    \xi_{i,j}(\theta, \bgamma) 
    & \equiv \partial_\theta P_{i,j}(\theta, \bgamma) = \left[ p_{1,i,j}(\theta, \bgamma)Z_{i,j}^\top + p_{2,i,j}(\theta, \bgamma)Z_{j,i}^\top, \; p_{2,i,j}(\theta, \bgamma), \; - H_\rho(W_{i,j}^\top \Pi, W_{j,i}^\top \Pi + \alpha; \rho)\right]^\top \\
    s^\theta_{i,j}(\theta, \bgamma) 
    & \equiv  \partial_\theta \ell_{i,j}(\theta, \bgamma) = \frac{y_{i,j} \xi_{i,j}(\theta, \bgamma)}{P_{i,j}(\theta, \bgamma)} + \frac{y_{j,i} \xi_{j,i}(\theta, \bgamma)}{P_{j,i}(\theta, \bgamma)} - \frac{(1 - y_{i,j} - y_{j,i})[\xi_{i,j}(\theta, \bgamma) + \xi_{j,i}(\theta, \bgamma)]}{1 - P_{i,j}(\theta, \bgamma) - P_{j,i}(\theta, \bgamma)},
\end{align*}
where $\chi_{n,i,-1}$ and $\chi_{n,j,-1}$ are $(n - 1) \times 1$ vectors defined by removing the first element of $\chi_{n,i}$ and $\chi_{n,j}$, respectively.
Using these notations, we can write
\begin{align*}
    \mcl{S}_{n, \theta}(\theta, \bgamma) 
    & \equiv \partial_\theta \mcl{L}_n(\theta, \bgamma) = \frac{2}{N}\sum_{i = 1}^n \sum_{j > i}  s^\theta_{i,j}(\theta, \bgamma) \\
    \mcl{S}_{n,\bgamma}(\theta, \bgamma) 
    & \equiv \partial_{\bgamma_{-1}} \mcl{L}_n(\theta, \bgamma) = \frac{2}{N} \sum_{i = 1}^n \sum_{j > i} \left[ s_{i,j}^\mbf{A} (\theta, \bgamma)^\top,  \: s_{i,j}^\mbf{B} (\theta, \bgamma)^\top \right]^\top.
\end{align*}
Further, writing $\ell_{i,j}(\delta) \equiv \ell_{i,j}\left(\theta, \left\{ \sum_{k = 1}^{K^A} a_k \cdot \mbf{1}\{ i \in \mcl{C}^A_{0,k}\} \right\}, \left\{ \sum_{k = 1}^{K^B} b_k \cdot \mbf{1}\{ i \in \mcl{C}^B_{0,k}\} \right\} \right)$ so that $\mcl{L}_n(\delta) = \frac{2}{N}\sum_{i = 1}^n \sum_{j > i} \ell_{i,j}(\delta) $, we define
\begin{align*}
    s^{a_k}_{i,j}(\delta) 
    & \equiv \partial_{a_k} \ell_{i,j}(\delta) = s_{1,i,j}(\delta)\mbf{1}\{ i \in \mcl{C}_{0,k}^A \} + s_{2,i,j}(\delta)\mbf{1}\{ j \in \mcl{C}_{0,k}^A \} \\
    s^{b_k}_{i,j}(\delta) 
    & \equiv \partial_{b_k} \ell_{i,j}(\delta) = s_{1,i,j}(\delta)\mbf{1}\{ j \in \mcl{C}_{0,k}^B \} + s_{2,i,j}(\delta)\mbf{1}\{ i \in \mcl{C}_{0,k}^B \} \\
    s^\delta_{i,j}(\delta) 
    & \equiv \partial_{\delta} \ell_{i,j}(\delta) =  \left[ s^\theta_{i,j}(\delta)^\top, \; s^{a_2}_{i,j}(\delta), \ldots, s^{a_{K^A}}_{i,j}(\delta), s^{b_1}_{i,j}(\delta), \ldots, s^{b_{K^B}}_{i,j}(\delta) \right]^\top,
\end{align*}
where the definitions of $s_{1,i,j}(\delta)$, $s_{2,i,j}(\delta)$, and $s^\theta_{i,j}(\delta)$ should be clear from the context.

\paragraph{Hessian matrix}
Define
\begin{align*}
    h_{11,i,j}(\theta, \bgamma) 
    & \equiv \partial_{\pi_{i,j}} s_{1,i,j}(\theta, \bgamma)\\
    h_{12,i,j}(\theta, \bgamma) 
    & \equiv \partial_{\pi_{j,i}} s_{1,i,j}(\theta, \bgamma) = \partial_{\pi_{i,j}} s_{2,i,j}(\theta, \bgamma)  \\
    h_{22,i,j}(\theta, \bgamma)
    & \equiv \partial_{\pi_{j,i}} s_{2,i,j}(\theta, \bgamma).
\end{align*}
It is easy to see that
\begin{align}\label{eq:Eh}
    \begin{split}
    \E h_{11,i,j}(\theta_0,\bgamma_0)
    & = - \frac{p^2_{1,i,j}}{P_{i,j}} - \frac{p^2_{2,j,i}}{P_{j,i}} - \frac{[p_{1,i,j} + p_{2,j,i}]^2}{1 - P_{i,j} - P_{j,i}}\\
    \E h_{12,i,j}(\theta_0,\bgamma_0) 
    & = - \frac{p_{1,i,j} p_{2,i,j}}{P_{i,j}} - \frac{p_{1,j,i}p_{2,j,i}}{P_{j,i}} - \frac{[p_{1,i,j} + p_{2,j,i}][p_{2,i,j} + p_{1,j,i}]}{1 - P_{i,j} - P_{j,i}},
    \end{split}
\end{align}
where we have used $p_{1,i,j}$, $p_{2,i,j}$, and $P_{i,j}$ to denote $p_{1,i,j}(\theta_0,\bgamma_0)$, $p_{2,i,j}(\theta_0,\bgamma_0)$, and $P_{i,j}(\theta_0,\bgamma_0)$, respectively, for simplicity.
Hereinafter, when the dependence on the parameters $(\theta, \bgamma)$ is suppressed, it means that the functions are evaluated at the true value $(\theta_0,\bgamma_0)$.

Note that, since $\ell_{i,j}(\theta, \bgamma) = \ell_{j,i}(\theta, \bgamma)$, we have $h_{11,i,j}(\theta, \bgamma) = \partial^2_{\pi_{i,j} \pi_{i,j}} \ell_{i,j}(\theta, \bgamma) = \partial^2_{\pi_{i,j} \pi_{i,j}} \ell_{j,i}(\theta, \bgamma) = h_{22,j,i}(\theta, \bgamma)$ and $h_{12,i,j}(\theta, \bgamma) = h_{12,j,i}(\theta, \bgamma)$.
By tedious calculations, we have
\begin{align*}
    \partial_{A_l A_k}^2 \mcl{L}_n(\theta, \bgamma)
    & = \frac{2}{N}\sum_{j \neq k} h_{11,k,j}(\theta, \bgamma)\mbf{1}\{l = k\} + \frac{2}{N} h_{12,l,k}(\theta, \bgamma)\mbf{1}\{l \neq k\} \;\; (\text{for} \; l,k \ge 2)\\
    \partial_{B_l B_k}^2 \mcl{L}_n(\theta, \bgamma)
    & = \frac{2}{N}\sum_{j \neq k} h_{11,j,k}(\theta, \bgamma)\mbf{1}\{l = k\} + \frac{2}{N} h_{12,l,k}(\theta, \bgamma)\mbf{1}\{l \neq k\} \;\; (\text{for} \; l,k \ge 1)\\
    \partial_{A_l B_k}^2 \mcl{L}_n(\theta, \bgamma)
    & = \frac{2}{N}\sum_{j \neq k} h_{12,k,j}(\theta, \bgamma)\mbf{1}\{l = k\} + \frac{2}{N} h_{11,l,k}(\theta, \bgamma)\mbf{1}\{l \neq k\} \;\; (\text{for} \; l \ge 2, \: k \ge 1).
\end{align*}
Hence,
\begin{align}\label{eq:hessian}
    \begin{split}
    \mcl{H}_{n, \mbf{A}\mbf{A}} (\theta, \bgamma) 
    & \equiv \partial_{\mbf{A}_{-1} \mbf{A}_{-1}^\top}^2 \mcl{L}_n(\theta, \bgamma) 
    = \frac{2}{N} \left( 
        \begin{array}{ccc}    
            \sum_{j \neq 2} h_{11,2,j}(\theta, \bgamma) & \cdots & h_{12,2,n}(\theta, \bgamma) \\
            \vdots & \ddots & \vdots \\
            h_{12,n,2}(\theta, \bgamma) & \cdots &  \sum_{j \neq n} h_{11,n,j}(\theta, \bgamma) 
        \end{array}
        \right) \\
    \mcl{H}_{n, \mbf{B}\mbf{B}} (\theta, \bgamma)
    & \equiv \partial_{\mbf{B} \mbf{B}^\top}^2 \mcl{L}_n(\theta, \bgamma)
    = \frac{2}{N} \left( 
        \begin{array}{ccc}    
            \sum_{j \neq 1} h_{11,j,1}(\theta, \bgamma) & \cdots & h_{12,1,n}(\theta, \bgamma) \\
            \vdots & \ddots & \vdots \\
            h_{12,n,1}(\theta, \bgamma) & \cdots &  \sum_{j \neq n} h_{11,j,n}(\theta, \bgamma) 
        \end{array}
        \right) \\
    \mcl{H}_{n, \mbf{A}\mbf{B}} (\theta, \bgamma)
    & \equiv \partial_{\mbf{A}_{-1} \mbf{B}^\top}^2 \mcl{L}_n(\theta, \bgamma)
    = \frac{2}{N} \left( 
        \begin{array}{cccc}    
            h_{11,2,1}(\theta, \bgamma) & \sum_{j \neq 2} h_{12,2,j}(\theta, \bgamma) & \cdots & h_{11,2,n}(\theta, \bgamma) \\
            \vdots & \vdots & \ddots & \vdots \\
            h_{11,n,1}(\theta, \bgamma) & h_{11,n,2}(\theta, \bgamma) & \cdots &  \sum_{j \neq n} h_{12,n,j}(\theta, \bgamma) 
        \end{array}
        \right) \\
    \mcl{H}_{n, \bgamma\bgamma} (\theta, \bgamma)
    & = \left( 
        \begin{array}{cc}    
            \mcl{H}_{n, \mbf{A}\mbf{A}} (\theta, \bgamma) & \mcl{H}_{n, \mbf{A}\mbf{B}} (\theta, \bgamma) \\
            \mcl{H}_{n, \mbf{B}\mbf{A}} (\theta, \bgamma) & \mcl{H}_{n, \mbf{B}\mbf{B}} (\theta, \bgamma)
        \end{array}
        \right)
    \end{split}
\end{align}


\section{Proofs of Theorems}\label{app:proofs}

\textbf{Proof of Theorem \ref{thm:identification}}

(i) We first confirm that the true parameter vector $(\theta_0, \bgamma_0)$ is a maximizer of $\E\mcl{L}_n(\theta, \bgamma)$.
We can observe that
\begin{align}\label{eq:ll_diff}
    \begin{split}
        \E\mcl{L}_n(\theta, \bgamma) - \E \mcl{L}_n(\theta_0, \bgamma_0) 
        & = \frac{1}{N}\sum_{i = 1}^n \sum_{j \neq i} \E\left\{ \ln \left[ P_{i,j}(\theta, \bgamma)^{y_{i,j}} P_{j,i}(\theta, \bgamma)^{y_{j,i}} [1 - P_{i,j}(\theta, \bgamma) - P_{j,i}(\theta, \bgamma)]^{1 - y_{i,j} - y_{j,i}} \right] \right. \\
        & \quad \left. - \ln \left[ P_{i,j}^{y_{i,j}} P_{j,i}^{y_{j,i}} [1 - P_{i,j} - P_{j,i}]^{1 - y_{i,j} - y_{j,i}} \right] \right\} \\
        &  = \frac{1}{N}\sum_{i = 1}^n \sum_{j \neq i} \E\left\{ \ln  \frac{ P_{i,j}(\theta, \bgamma)^{y_{i,j}} P_{j,i}(\theta, \bgamma)^{y_{j,i}} [1 - P_{i,j}(\theta, \bgamma) - P_{j,i}(\theta, \bgamma)]^{1 - y_{i,j} - y_{j,i}} }{P_{i,j}^{y_{i,j}} P_{j,i}^{y_{j,i}} [1 - P_{i,j} - P_{j,i}]^{1 - y_{i,j} - y_{j,i}} } \right\} \\
        & \le  \frac{1}{N}\sum_{i = 1}^n \sum_{j \neq i} \ln \E\left\{ \frac{ P_{i,j}(\theta, \bgamma)^{y_{i,j}} P_{j,i}(\theta, \bgamma)^{y_{j,i}} [1 - P_{i,j}(\theta, \bgamma) - P_{j,i}(\theta, \bgamma)]^{1 - y_{i,j} - y_{j,i}} }{P_{i,j}^{y_{i,j}} P_{j,i}^{y_{j,i}} [1 - P_{i,j} - P_{j,i}]^{1 - y_{i,j} - y_{j,i}} } \right\},
    \end{split}
    \end{align}
    where the last inequality follows from Jensen's inequality.
    Further,
    \begin{align*}
       & \E\left\{ \frac{ P_{i,j}(\theta, \bgamma)^{y_{i,j}} P_{j,i}(\theta, \bgamma)^{y_{j,i}} [1 - P_{i,j}(\theta, \bgamma) - P_{j,i}(\theta, \bgamma)]^{1 - y_{i,j} - y_{j,i}} }{P_{i,j}^{y_{i,j}} P_{j,i}^{y_{j,i}} [1 - P_{i,j} - P_{j,i}]^{1 - y_{i,j} - y_{j,i}} } \right\} \\
       & = \E[y_{i,j}] \frac{ P_{i,j}(\theta, \bgamma)}{P_{i,j}} + \E [y_{j,i}] \frac{ P_{j,i}(\theta, \bgamma)}{P_{j,i}} + \E[1 - y_{i,j} - y_{j,i}] \frac{ 1 - P_{i,j}(\theta, \bgamma) - P_{j,i}(\theta, \bgamma) }{1 - P_{i,j} - P_{j,i}} = 1,
    \end{align*}
implying that the left-hand side term of \eqref{eq:ll_diff} is less than or at most equal to zero for any given $(\theta, \bgamma)$.
Then, since $\rho_0$ is known, it is sufficient to show that 
\begin{align*}
    \frac{1}{N}\sum_{i = 1}^n \sum_{j \neq i}\mbf{1}\left\{ P_{i,j}((\beta^\top, \alpha, \rho_0)^\top, \bgamma) \neq  P_{i,j}(\theta_0, \bgamma_0) \right\} > 0
\end{align*} 
for sufficiently large $n$ and for all $(\beta, \alpha, \bgamma) \in \mcl{B} \times \mcl{A} \times \mbb{C}_n$ such that $(\beta, \alpha, \bgamma) \neq (\beta_0, \alpha_0, \bgamma_0)$.
The existence of pairs satisfying $P_{i,j}((\beta^\top, \alpha, \rho_0)^\top, \bgamma) \neq  P_{i,j}(\theta_0, \bgamma_0)$ contributes to a non-negligible difference between $\E\mcl{L}_n(\theta, \bgamma)$ and $\E \mcl{L}_n(\theta_0, \bgamma_0)$, allowing us to distinguish $(\theta, \bgamma)$ and $(\theta_0, \bgamma_0)$.
Here, by Assumptions \ref{as:error}(i) and (ii), $F(a) - H(a, b; \rho_0)$ is strictly increasing in $a$ and decreasing in $b$, respectively.
Therefore, we have
\begin{align*}
    W_{i,j}^\top \Pi >  W_{i,j}^\top \Pi_0, \;\;  W_{j,i}^\top \Pi + \alpha < W_{j,i}^\top \Pi_0 + \alpha_0 
    & \Longrightarrow P_{i,j}((\beta^\top, \alpha, \rho_0)^\top, \bgamma) > P_{i,j}(\theta_0, \bgamma_0)\\
    W_{i,j}^\top \Pi <  W_{i,j}^\top \Pi_0, \;\;  W_{j,i}^\top \Pi + \alpha > W_{j,i}^\top \Pi_0 + \alpha_0 
    & \Longrightarrow P_{i,j}((\beta^\top, \alpha, \rho_0)^\top, \bgamma) < P_{i,j}(\theta_0, \bgamma_0).
\end{align*}
Then, Assumption \ref{as:support} gives the desired result.

\bigskip
(ii) Since $(\theta_0, \bgamma_0)$ is a maximizer of $\E \mcl{L}_n(\theta, \bgamma)$ as confirmed above, $\rho_0$ must be a maximizer of $\mcl{L}_n^*(\rho)$, where the definition of $\mcl{L}_n^*(\rho)$ can be found in \eqref{eq:rho_concentrate},
and $(\tilde \beta_0(\rho_0), \tilde \alpha_0(\rho_0), \tilde \bgamma_0(\rho_0)) = (\beta_0, \alpha_0, \bgamma_0)$ holds.
For all $\rho \in \mcl{R}$, $\E \mcl{L}_n((\beta^\top, \alpha,\rho)^\top, \bgamma) - \E \mcl{L}_n((\tilde \beta_0(\rho)^\top, \tilde \alpha_0(\rho),\rho)^\top, \tilde \bgamma_0(\rho)) \le 0$ holds by definition.
Then, by the same argument as in the proof of (i), we can identify $(\tilde \beta_0(\rho), \tilde \alpha_0(\rho), \tilde \bgamma_0(\rho))$ uniquely for all $\rho \in \mcl{R}$ as Assumption \ref{as:support} is independent of the value of $\rho$.
Thus, if $\rho_0$ is identified as a unique maximizer of $\mcl{L}_n^*(\rho)$, all the parameters of the model are identified.\qed

\bigskip
\textbf{Proof of Theorem \ref{thm:consistency}}

(i) First, note that Assumptions \ref{as:error}--\ref{as:Z_bound} imply that there exist constants $\kappa_1, \kappa_2 \in (0, 1)$ such that $P_{i,j}(\theta, \bgamma) \in (\kappa_1, 1 - \kappa_1)$ and $1 - P_{i,j}(\theta, \bgamma) - P_{j,i}(\theta, \bgamma) \in (\kappa_2, 1 - \kappa_2)$
for all possible parameter values.
Observe that
\begin{align*}
\begin{split}
    & \mcl{L}_n(\theta, \bgamma) - \E \mcl{L}_n(\theta, \bgamma) \\
    & = \frac{1}{N}\sum_{i = 1}^n \sum_{j \neq i} \left[ 2 (y_{i,j} - \E y_{i,j}) \ln P_{i,j}(\theta, \bgamma) - 2(y_{i,j}  - \E y_{i,j}) \ln \left(1 - P_{i,j}(\theta, \bgamma) - P_{j,i}(\theta, \bgamma) \right) \right] \\
    & = \frac{2}{N}\sum_{i = 1}^n \sum_{j \neq i} (y_{i,j} - \E y_{i,j}) \psi_{i,j}(\theta, \bgamma),
\end{split}    
\end{align*}
where $\psi_{i,j}(\theta, \bgamma) \equiv \ln\left[ P_{i,j}(\theta, \bgamma) / \left(1 - P_{i,j}(\theta, \bgamma) - P_{j,i}(\theta, \bgamma)\right) \right]$.
Let $\bar \psi \equiv \ln ( (1 - \kappa_1)/\kappa_2)$, so that
\begin{align*}
 - (1 - \kappa_1) \bar \psi  < (y_{i,j} - \E y_{i,j}) \psi_{i,j}(\theta, \bgamma) < ( 1 - \kappa_1 )\bar \psi,
\end{align*}
where the inequalities are uniform in $(\theta, \bgamma) \in \Theta \times \mbb{C}_n$.
By the triangle inequality,
\begin{align*}
    \left| \mcl{L}_n(\theta, \bgamma) - \E \mcl{L}_n(\theta, \bgamma) \right| \le \frac{2}{n}\sum_{i=1}^n\left | \frac{1}{n-1} \sum_{j \neq i}  (y_{i,j} - \E y_{i,j}) \psi_{i,j}(\theta, \bgamma) \right|.
\end{align*}
Further, by Hoeffding's inequality,
\begin{align*}
    \Pr\left( \left | \frac{1}{n-1} \sum_{j \neq i}  (y_{i,j} - \E y_{i,j}) \psi_{i,j}(\theta, \bgamma) \right| > t \right)
    & \le 2 \exp\left( -\frac{2 (n - 1)^2 t^2}{\sum_{j \neq i} (2 (1 - \kappa_1 )\bar \psi)^2}\right) \\
    & = 2 \exp\left( -\frac{ (n - 1) t^2}{ 2 ( 1 - \kappa_1 )^2 \bar \psi^2}\right).
\end{align*}
Hence, Boole's inequality gives
\begin{align*}
    \Pr\left( \max_{1 \le i \le n} \left | \frac{1}{n-1} \sum_{j \neq i}  (y_{i,j} - \E y_{i,j}) \psi_{i,j}(\theta, \bgamma) \right| > t \right)
    & \le 2 n \exp\left( -\frac{ (n - 1) t^2}{ 2 ( 1 - \kappa_1 )^2 \bar \psi^2}\right).
\end{align*}
Setting $t = C \sqrt{\ln n/n}$ for a sufficiently large constant $C > 0$, we have
\begin{align*}
    2 n \exp\left( -\frac{ (n - 1) t^2}{ 2 (1 - \kappa_1 )^2 \bar \psi^2}\right)
    & =  2 n \exp\left( -\frac{n - 1}{ 2 (1 - \kappa_1 )^2 \bar \psi^2}\frac{C^2 \ln n}{ n}\right) \\
    & =  2 \exp\left( \ln n - \left(\frac{C^2 (n - 1)/n }{  2 ( 1 - \kappa_1 )^2 \bar \psi^2 }\right) \ln n \right) \to 0 \;\; \text{as} \;\; n \to \infty.
\end{align*}
This implies that
\begin{align}\label{eq:unif_conv}
   \sup_{(\theta, \bgamma) \: \in \: \Theta \times \mbb{C}_n} \left| \mcl{L}_n(\theta, \bgamma) - \E \mcl{L}_n(\theta, \bgamma) \right| = O_P\left( \sqrt{ \frac{\ln n}{n} }\right).
\end{align}
Then, with Assumption \ref{as:identification}, the rest of the proof follows from the same argument as in the proof of Theorem 2 in \cite{graham2017econometric}.
\qed

\bigskip

(ii) (iii) We prove the result by contradiction. 
Suppose that there exists a positive constant $c$ such that 
\begin{align*}
    \max\left\{ \frac{1}{n}\sum_{i=1}^n \left|\hat A_{n,i} - A_{0,i} \right|, \; \frac{1}{n}\sum_{i=1}^n \left|\hat B_{n,i} - B_{0,i} \right| \right\} \ge c > 0
\end{align*}
w.p.a.1.
Under Assumption \ref{as:para_space}(ii), this implies that there is a non-vanishing potion of observations with either or both $\hat A_{n,i}$ and $\hat B_{n,i}$ being not in the neighborhood of $A_{0,i}$ and $B_{0,i}$, respectively.
Therefore, by Assumption \ref{as:identification}, there exist a constant $\eta(c) > 0$ and $n(c) < \infty$ such that 
\begin{align}\label{eq:ineq_Aave1}
   4\eta(c) < \E \mcl{L}_n(\theta_0, \mbf{A}_0, \mbf{B}_0) - \E \mcl{L}_n(\theta_0, \hat{\mbf{A}}_n, \hat{\mbf{B}}_n)
\end{align}
for all $n \ge n(c)$.
Note that \eqref{eq:unif_conv} implies that 
\begin{align} \label{eq:ineq_Aave2}
    \E \mcl{L}_n(\theta_0, \mbf{A}_0, \mbf{B}_0) < \mcl{L}_n(\theta_0, \mbf{A}_0, \mbf{B}_0) + \eta(c)
\end{align}
w.p.a.1.
By the definition of the ML estimator,
\begin{align} \label{eq:ineq_Aave3}
    \mcl{L}_n(\theta_0, \mbf{A}_0, \mbf{B}_0) < \mcl{L}_n(\hat \theta_n, \hat{\mbf{A}}_n, \hat{\mbf{B}}_n) + \eta(c).
\end{align}
In addition, by the continuous mapping theorem and result (i), we have
\begin{align} \label{eq:ineq_Aave4}
    \mcl{L}_n(\hat \theta_n, \hat{\mbf{A}}_n,  \hat{\mbf{B}}_n) < \mcl{L}_n(\theta_0, \hat{\mbf{A}}_n,  \hat{\mbf{B}}_n) + \eta(c)
\end{align}
w.p.a.1.
Now, combining the inequalities \eqref{eq:ineq_Aave1}--\eqref{eq:ineq_Aave4} gives
\begin{align*}
    \E \mcl{L}_n(\theta_0, \hat{\mbf{A}}_n, \hat{\mbf{B}}_n) 
    & < \E \mcl{L}_n(\theta_0, \mbf{A}_0, \mbf{B}_0) - 4\eta(c) \\
    & < \mcl{L}_n(\theta_0, \mbf{A}_0, \mbf{B}_0) - 3\eta(c) \\
    & < \mcl{L}_n(\hat \theta_n, \hat{\mbf{A}}_n, \hat{\mbf{B}}_n) - 2\eta(c) \\
    & < \mcl{L}_n(\theta_0, \hat{\mbf{A}}_n, \hat{\mbf{B}}_n) - \eta(c)
\end{align*}
w.p.a.1.
The last line implies that $\eta(c) <  \mcl{L}_n(\theta_0, \hat{\mbf{A}}_n, \hat{\mbf{B}}_n) - \E \mcl{L}_n(\theta_0, \hat{\mbf{A}}_n, \hat{\mbf{B}}_n)$ w.p.a.1; however, this contradicts with \eqref{eq:unif_conv}.
Hence, as the choice of $c$ is arbitrary, we obtain the desired result.
\qed

\bigskip

(iv) Note that, for each $i$ ($i \neq 1$), it holds that
\begin{align*}
    \hat A_{n,i} = \argmax_{A_i \in \mbb{A}} \mcl{L}_n(\hat \theta_n, A_i, \hat{\mbf{A}}_{n,-i}, \hat{\mbf{B}}_n),
    \qquad A_{0,i} = \argmax_{A_i \in \mbb{A}} \E \mcl{L}_n(\theta_0, A_i, \mbf{A}_{0,-i}, \mbf{B}_0),
\end{align*}
where $\mbf{A}_{-i} \equiv (A_1, \ldots, A_{i-1}, A_{i+1}, \ldots, A_n)^\top$, and $\mcl{L}_n( \theta, A_i, \mbf{A}_{-i}, \mbf{B}) = \mcl{L}_n(\theta, \mbf{A}, \mbf{B})$.
Pick any $c > 0$, and let $\mbb{A}^c_i \equiv \{A \in \mbb{A} : |A - A_{0,i}| \ge c\}$.
Define $\varepsilon_n(c)$ as follows: 
\[
    \varepsilon_n(c) \equiv \min_{2 \le i \le n}\left[ \E \mcl{L}_n(\theta_0, A_{0,i}, \mbf{A}_{0,-i},\mbf{B}_0) -  \max_{A_i \in \mbb{A}^c_i} \E \mcl{L}_n(\theta_0, A_i, \mbf{A}_{0,-i},\mbf{B}_0)\right].
\]
By Assumption \ref{as:identification}, there exists $n(c) < \infty$ such that $\varepsilon_n(c)$ is strictly larger than zero for all $n \ge n(c)$.
By the definition of $\hat A_{n,i}$, we have
\begin{align}\label{eq:ineq_Ainf1}
    \mcl{L}_n(\hat \theta_n, \hat A_{n,i}, \hat{\mbf{A}}_{n,-i}, \hat{\mbf{B}}_n) > \mcl{L}_n(\hat \theta_n, A_{0,i}, \hat{\mbf{A}}_{n,-i}, \hat{\mbf{B}}_n) -  \varepsilon_n(c) /5.
\end{align}
By the triangle inequality, 
\begin{align*}
    & \left| \mcl{L}_n(\hat \theta_n, A_i, \hat{\mbf{A}}_{n,-i}, \hat{\mbf{B}}_n) - \mcl{L}_n(\theta_0, A_i, \mbf{A}_{0,-i}, \mbf{B}_0)  \right| \\
    & \le \left| \mcl{L}_n(\hat \theta_n, A_i, \hat{\mbf{A}}_{n,-i}, \hat{\mbf{B}}_n) - \mcl{L}_n(\theta_0, A_i, \hat{\mbf{A}}_{n,-i}, \hat{\mbf{B}}_n)  \right|  + \left| \mcl{L}_n(\theta_0, A_i, \hat{\mbf{A}}_{n,-i}, \hat{\mbf{B}}_n) - \mcl{L}_n(\theta_0, A_i, \mbf{A}_{0,-i}, \hat{\mbf{B}}_n) \right|\\
    & \quad + \left| \mcl{L}_n(\theta_0, A_i, \mbf{A}_{0,-i}, \hat{\mbf{B}}_n) - \mcl{L}_n(\theta_0, A_i, \mbf{A}_{0,-i}, \mbf{B}_0) \right|\\
    & \le \left| \partial_{\mbf{A}_{-i}^\top } \mcl{L}_n(\theta_0, A_i, \bar{\mbf{A}}_{n,-i}, \hat{\mbf{B}}_n)[\hat{\mbf{A}}_{n,-i} - \mbf{A}_{0,-i}] \right| + \left| \partial_{\mbf{B}^\top } \mcl{L}_n(\theta_0, A_i, \mbf{A}_{0,-i}, \bar{\mbf{B}}_n)[\hat{\mbf{B}}_n - \mbf{B}_0] \right| + o_P(1),
\end{align*}
where the second inequality follows from the mean value expansion with result (i), $\bar{\mbf{A}}_{n,-i} \in [\hat{\mbf{A}}_{n,-i}, \mbf{A}_{0,-i}]$, and $\bar{\mbf{B}}_n \in [\hat{\mbf{B}}_n, \mbf{B}_0]$. 
Here, the first term on the right-hand side has the following form: 
\begin{align*}
\begin{split}
    & \partial_{\mbf{A}_{-k}^\top} \mcl{L}_n(\theta_0, A_k, \bar{\mbf{A}}_{n,-k}, \hat{\mbf{B}}_n) [\hat{\mbf{A}}_{n,-k} - \mbf{A}_{0,-k}] \\
    & = \frac{2}{N}\sum_{i = 1}^n \sum_{j > i} \partial_{\mbf{A}_{-k}^\top} \ell_{i,j}(\theta_0, A_k, \bar{\mbf{A}}_{n,-k}, \hat{\mbf{B}}_n) [\hat{\mbf{A}}_{n,-k} - \mbf{A}_{0,-k}] \\
    & = \frac{2}{N}\sum_{i = 1}^n \sum_{j > i} s_{1,i,j}(\theta_0, A_k, \bar{\mbf{A}}_{n,-k}, \hat{\mbf{B}}_n) \chi_{n,i,-k}^\top [\hat{\mbf{A}}_{n,-k} - \mbf{A}_{0,-k}] + \frac{2}{N}\sum_{i = 1}^n \sum_{j > i} s_{2,i,j}(\theta_0, A_k, \bar{\mbf{A}}_{n,-k}, \hat{\mbf{B}}_n) \chi_{n,j,-k}^\top [\hat{\mbf{A}}_{n,-k} - \mbf{A}_{0,-k}] \\
    & = \frac{2}{N}\sum_{i = 1}^n \sum_{j > i} s_{1,i,j}(\theta_0, A_k, \bar{\mbf{A}}_{n,-k}, \hat{\mbf{B}}_n) [\hat{A}_{n,i} - A_{0,i}]\mbf{1}\{i \neq k\} + \frac{2}{N}\sum_{i = 1}^n \sum_{j > i} s_{2,i,j}(\theta_0, A_k, \bar{\mbf{A}}_{n,-k}, \hat{\mbf{B}}_n) [\hat{A}_{n,j} - A_{0,j}]\mbf{1}\{j \neq k\} \\
    & = \frac{2}{N}\sum_{i = 1}^n \sum_{j \neq i} s_{1,i,j}(\theta_0, A_k, \bar{\mbf{A}}_{n,-k}, \hat{\mbf{B}}_n) [\hat{A}_{n,i} - A_{0,i}]\mbf{1}\{i \neq k\},
\end{split}
\end{align*}
where $\chi_{n,i,-k}$ and $\chi_{n,j,-k}$ are $(n - 1) \times 1$ vectors defined by removing the $k$-th element of $\chi_{n,i}$ and $\chi_{n,j}$, respectively, and the last equality holds because $s_{2,i,j}(\theta, \mbf{A}, \mbf{B}) = \partial_{\pi_{j,i}} \ell_{i,j}(\theta, \mbf{A}, \mbf{B}) = \partial_{\pi_{j,i}} \ell_{j,i}(\theta, \mbf{A}, \mbf{B}) = s_{1,j,i}(\theta, \mbf{A}, \mbf{B})$.
Then, for a constant $c > 0$ independent of $A_k$ and $k$, we have
\begin{align*}
    \left| \partial_{\mbf{A}_{-k}^\top} \mcl{L}_n(\theta_0, A_k, \bar{\mbf{A}}_{n,-k}, \hat{\mbf{B}}_n) [\hat{\mbf{A}}_{n,-k} - \mbf{A}_{0,-k}] \right|
    & \le \frac{c}{n}\sum_{i = 1}^n \left| \hat A_{n,i} - A_{0,i} \right| = o_P(1)
\end{align*}
by result (ii).
Based on the same argument, we can also show that $\left| \partial_{\mbf{B}^\top } \mcl{L}_n(\theta_0, A_i, \mbf{A}_{0,-i}, \bar{\mbf{B}}_n)[\hat{\mbf{B}}_n - \mbf{B}_0] \right| = o_P(1)$, implying that 
\begin{align*}
    \left| \mcl{L}_n(\hat \theta_n, A_i, \hat{\mbf{A}}_{n,-i}, \hat{\mbf{B}}_n) - \mcl{L}_n(\theta_0, A_i, \mbf{A}_{0,-i}, \mbf{B}_0)  \right| = o_P(1)
\end{align*}
uniformly in $A_i \in \mbb{A}$ and $i$.
Similarly, we can show that $\left| \E \mcl{L}_n(\hat \theta_n, A_i, \hat{\mbf{A}}_{n,-i}, \hat{\mbf{B}}_n) - \E \mcl{L}_n(\theta_0, A_i, \mbf{A}_{0,-i}, \mbf{B}_0) \right| = o_P(1)$.
Hence, the following inequalities hold w.p.a.1:
\begin{align}
    \mcl{L}_n(\theta_0, A_i, \mbf{A}_{0,-i}, \mbf{B}_0)
    & > \mcl{L}_n(\hat \theta_n, A_i, \hat{\mbf{A}}_{n,-i},\hat{\mbf{B}}_n) -  \varepsilon_n(c) /5 \label{eq:ineq_Ainf2}\\
    \E \mcl{L}_n(\hat \theta_n, A_i, \hat{\mbf{A}}_{n,-i},\hat{\mbf{B}}_n)
    & > \E \mcl{L}_n(\theta_0, A_i, \mbf{A}_{0,-i},\mbf{B}_0) -  \varepsilon_n(c) /5 \label{eq:ineq_Ainf3}
\end{align}
uniformly in $A_i \in \mbb{A}$ and $i$.
In addition, \eqref{eq:unif_conv} implies that 
\begin{align}
    \E \mcl{L}_n(\theta_0, \hat A_{n,i}, \mbf{A}_{0,-i}, \mbf{B}_0) 
    & > \mcl{L}_n(\theta_0, \hat A_{n,i}, \mbf{A}_{0,-i}, \mbf{B}_0)  -  \varepsilon_n(c) /5 \label{eq:ineq_Ainf4}\\
    \mcl{L}_n(\hat \theta_n, A_{0,i}, \hat{\mbf{A}}_{0,-i},\hat{\mbf{B}}_n)
    & > \E \mcl{L}_n(\hat \theta_n, A_{0,i}, \hat{\mbf{A}}_{0,-i},\hat{\mbf{B}}_n) -  \varepsilon_n(c) /5 \label{eq:ineq_Ainf5}
\end{align}
w.p.a.1.
Then, combining the inequalities \eqref{eq:ineq_Ainf1} and \eqref{eq:ineq_Ainf2}--\eqref{eq:ineq_Ainf5} yields
\begin{align*}
    \E \mcl{L}_n(\theta_0, \hat A_{n,i}, \mbf{A}_{0,-i}, \mbf{B}_0) 
    & > \mcl{L}_n(\theta_0, \hat A_{n,i}, \mbf{A}_{0,-i}, \mbf{B}_0)  -  \varepsilon_n(c) /5 \\
    & > \mcl{L}_n(\hat \theta_n, \hat A_{n,i}, \hat{\mbf{A}}_{n,-i}, \hat{\mbf{B}}_n) -  2\varepsilon_n(c) /5\\
    & > \mcl{L}_n(\hat \theta_n, A_{0,i}, \hat{\mbf{A}}_{n,-i}, \hat{\mbf{B}}_n) -  3\varepsilon_n(c) /5 \\
    & > \E \mcl{L}_n(\hat \theta_n, A_{0,i}, \hat{\mbf{A}}_{n,-i}, \hat{\mbf{B}}_n)  -  4\varepsilon_n(c) /5\\
    & > \E \mcl{L}_n(\theta_0, A_{0,i}, \mbf{A}_{0,-i}, \mbf{B}_0) -  \varepsilon_n(c) \\
    & = \max_{A_i \in \mbb{A}^c_i} \E \mcl{L}_n(\theta_0, A_i, \mbf{A}_{0,-i}, \mbf{B}_0) \\
    & \quad + \underbrace{\left[ \E \mcl{L}_n(\theta_0, A_{0,i}, \mbf{A}_{0,-i}, \mbf{B}_0) - \max_{A_i \in \mbb{A}^c_i} \E \mcl{L}_n(\theta_0, A_i, \mbf{A}_{0,-i}, \mbf{B}_0) \right] -  \varepsilon_n(c)}_{\ge \: 0} \\
   & \ge  \max_{A_i \in \mbb{A}^c_i} \E \mcl{L}_n(\theta_0, A_i, \mbf{A}_{0,-i}, \mbf{B}_0)
\end{align*}
w.p.a.1 for all $i$.
The last line implies that $\hat A_{n,i} \notin \mbb{A}_i^c$.
As the choice of $c$ is arbitrary, this further implies that $\max_{1 \le i \le n}|\hat A_{n,i} - A_{0,i}| \overset{p}{\to}0$.
Analogously, we can also show that $\max_{1 \le i \le n}|\hat B_{n,i} - B_{0,i}| \overset{p}{\to}0$.
\qed

\bigskip

\begin{lemma}\label{lem:h_LLN}
    For any $(\theta, \bgamma) \in \Theta \times \mbb{C}_n$ such that $||\theta - \theta_0|| = o(1)$ and $||\bgamma - \bgamma_0||_\infty = o(1)$, 
    \begin{enumerate}
        \item[(i)] $\max_{1 \le k \le n}\left| \frac{1}{n - 1} \sum_{j \neq k} \left( h_{11,k,j}(\theta, \bgamma) - \E h_{11,k,j}(\theta_0, \bgamma_0) \right) \right| = o_P(1)$,
        \item[(ii)] $\max_{1 \le k \le n}\left| \frac{1}{n - 1} \sum_{j \neq k} \left( h_{11,j,k}(\theta, \bgamma) - \E h_{11,j,k}(\theta_0, \bgamma_0) \right) \right| = o_P(1)$.
    \end{enumerate} 
\end{lemma}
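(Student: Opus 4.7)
The strategy is a standard two-piece decomposition: bound the ``parameter drift'' term by Lipschitz continuity of $h_{11,k,j}$ in $(\theta,\bgamma)$, and bound the ``stochastic'' term at the true parameter by a Hoeffding inequality plus a union bound over $k$, exactly in the spirit of the proof of Theorem \ref{thm:consistency}(i). Throughout I will use that Assumptions \ref{as:error}--\ref{as:Z_bound} force $P_{i,j}(\theta,\bgamma)$ and $1-P_{i,j}(\theta,\bgamma)-P_{j,i}(\theta,\bgamma)$ to stay in a compact subinterval of $(0,1)$ uniformly in $(\theta,\bgamma)\in\Theta\times\mbb{C}_n$, which together with Assumption \ref{as:error}(iii) guarantees $h_{11,k,j}(\theta,\bgamma)$ is uniformly bounded and continuously differentiable in its arguments.

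For the parameter-drift part, I would write
\begin{align*}
    \frac{1}{n-1}\sum_{j\neq k}\bigl[h_{11,k,j}(\theta,\bgamma) - h_{11,k,j}(\theta_0,\bgamma_0)\bigr]
\end{align*}
and observe that $h_{11,k,j}$ depends on $\bgamma$ only through the four coordinates $(A_k,A_j,B_k,B_j)$. A mean value expansion, combined with a uniform bound on the partial derivatives (which follows from the smoothness in Assumption \ref{as:error}(iii), boundedness of $\{Z_{i,j}\}$ in Assumption \ref{as:Z_bound}, and compactness of the parameter space in Assumption \ref{as:para_space}), yields a constant $c>0$, independent of $k$ and $j$, such that
\begin{align*}
    \bigl|h_{11,k,j}(\theta,\bgamma) - h_{11,k,j}(\theta_0,\bgamma_0)\bigr|
    \le c\bigl(\|\theta-\theta_0\| + |A_k-A_{0,k}| + |A_j-A_{0,j}| + |B_k-B_{0,k}| + |B_j-B_{0,j}|\bigr).
\end{align*}
Averaging over $j\neq k$ gives an upper bound of $c(\|\theta-\theta_0\| + 2\|\bgamma-\bgamma_0\|_\infty) = o(1)$ uniformly in $k$, which takes care of the drift.

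For the stochastic part at $(\theta_0,\bgamma_0)$, I would exploit that for fixed $k$ the family $\{h_{11,k,j}(\theta_0,\bgamma_0)\}_{j\neq k}$ is a collection of $n-1$ independent, uniformly bounded random variables (independence follows from Assumption \ref{as:error}(ii), since $h_{11,k,j}(\theta_0,\bgamma_0)$ is a function of $(\epsilon_{k,j},\epsilon_{j,k})$ only, and the dyadic pairs are i.i.d.). Hoeffding's inequality gives, for each $k$ and each $t>0$,
\begin{align*}
    \Pr\!\left(\left|\frac{1}{n-1}\sum_{j\neq k}\bigl[h_{11,k,j}(\theta_0,\bgamma_0) - \E h_{11,k,j}(\theta_0,\bgamma_0)\bigr]\right| > t\right) \le 2\exp\!\bigl(-c(n-1)t^2\bigr),
\end{align*}
and Boole's inequality then bounds the maximum over $k$ by $2n\exp(-c(n-1)t^2)$. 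Choosing $t = C\sqrt{\ln n/n}$ with $C$ sufficiently large makes this probability vanish, so the maximum is $O_P(\sqrt{\ln n/n}) = o_P(1)$. Combining the two pieces via the triangle inequality yields (i).

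Part (ii) is identical in structure: replace $h_{11,k,j}$ by $h_{11,j,k}$, which again depends on $(\theta,\bgamma)$ through the same four scalar coordinates and through $(\epsilon_{j,k},\epsilon_{k,j})$, so the Lipschitz bound and the independence-across-$j$ argument carry over verbatim. The only place one must be slightly careful is the indexing of the $\chi_n$-vectors in the Lipschitz bound; once that is verified, the proof is complete. I do not expect any genuine obstacle beyond bookkeeping, since the uniform boundedness and smoothness provided by Assumptions \ref{as:error}--\ref{as:Z_bound} make both the Lipschitz constant and the Hoeffding bound uniform in $k,j$.
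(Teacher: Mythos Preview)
Your proposal is correct and follows essentially the same approach as the paper: a triangle-inequality split into a parameter-drift term handled by a mean value expansion with uniform Lipschitz bounds, and a stochastic term at $(\theta_0,\bgamma_0)$ handled by Hoeffding's inequality plus a union bound over $k$ to obtain $O_P(\sqrt{\ln n/n})$. The only cosmetic difference is that the paper carries out the mean value expansion via the full vectors $\mbf{A}_{-1}$ and $\mbf{B}$ using the $\chi_{n,\cdot}$ selectors, whereas you exploit directly that $h_{11,k,j}$ depends on $\bgamma$ only through $(A_k,A_j,B_k,B_j)$; both routes yield the same uniform bound $c(\|\theta-\theta_0\|+O(1)\|\bgamma-\bgamma_0\|_\infty)$.
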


\begin{proof}
We only prove (i) since (ii) is completely analogous.
By the triangle inequality,
\begin{align}\label{eq:triangle1}
    \begin{split}
    \left| \frac{1}{n - 1} \sum_{j \neq k} \left( h_{11,k,j}(\theta, \bgamma) - \E h_{11,k,j}(\theta_0, \bgamma_0) \right) \right| 
    & \le \left| \frac{1}{n - 1} \sum_{j \neq k} \left( h_{11,k,j}(\theta, \bgamma) - h_{11,k,j}(\theta_0, \bgamma_0) \right) \right| \\
    & \quad + \left| \frac{1}{n - 1} \sum_{j \neq k} \left( h_{11,k,j}(\theta_0, \bgamma_0) - \E h_{11,k,j}(\theta_0, \bgamma_0) \right) \right|.
    \end{split}
\end{align}
With Assumption \ref{as:error}(iii), the mean value expansion gives
\begin{align*}
    h_{11,k,j}(\theta, \bgamma) - h_{11,k,j}(\theta_0, \bgamma_0) 
    & = h_{11,k,j}(\theta, \bgamma) - h_{11,k,j}(\theta_0, \bgamma) + h_{11,k,j}(\theta_0, \bgamma) -  h_{11,k,j}(\theta_0, \bgamma_0) \\
    & = \partial_{\theta^\top} h_{11,k,j}(\bar \theta, \bgamma) [\theta - \theta_0] + \partial_{\mbf{A}_{-1}^\top} h_{11,k,j}(\theta, \bar \bgamma) [\mbf{A}_{-1} - \mbf{A}_{0,-1}] + \partial_{\mbf{B}^\top} h_{11,k,j}(\theta, \bar \bgamma) [\mbf{B} - \mbf{B}_0],
\end{align*}
where $\bar \theta \in [\theta, \theta_0]$, and $\bar \bgamma \in [\bgamma, \bgamma_0]$.
Further, letting $h_{111,k,j}(\theta, \bgamma) \equiv \partial_{\pi_{k,j}} h_{11,k,j}(\theta, \bgamma)$ and $h_{112,k,j}(\theta, \mbf{A}) \equiv \partial_{\pi_{j,k}} h_{11,k,j}(\theta, \bgamma)$, we have
\begin{align*}
    \partial_{\mbf{A}_{-1}^\top} h_{11,k,j}(\theta, \bar \bgamma) [\mbf{A}_{-1} - \mbf{A}_{0,-1}]
    & = h_{111,k,j}(\theta, \bar \bgamma)\chi_{n,k,-1}^\top [\mbf{A}_{-1} - \mbf{A}_{0,-1}] + h_{112,k,j}(\theta, \bar\gamma)\chi_{n,j,-1}^\top [\mbf{A}_{-1} - \mbf{A}_{0,-1}]\\
    \partial_{\mbf{B}^\top} h_{11,k,j}(\theta, \bar \bgamma) [\mbf{B} - \mbf{B}_0]
    & = h_{111,k,j}(\theta, \bar \bgamma)\chi_{n,j}^\top [\mbf{B} - \mbf{B}_0] + h_{112,k,j}(\theta, \bar\gamma)\chi_{n,k}^\top[\mbf{B} - \mbf{B}_0].
\end{align*}
Then, for some large constant $c > 0$,
\begin{align*}
    & \left| \frac{1}{n - 1} \sum_{j \neq k} \left( h_{11,k,j}(\theta, \bgamma) - h_{11,k,j}(\theta_0, \bgamma_0) \right) \right| \\
    & \le \left| \frac{1}{n - 1} \sum_{j \neq k} \partial_{\theta^\top} h_{11,k,j}(\bar \theta, \bgamma) [\theta - \theta_0] \right| \\
    & \quad + \left| \frac{1}{n - 1} \sum_{j \neq k}  h_{111,k,j}(\theta, \bar \bgamma)\chi_{n,k,-1}^\top [\mbf{A}_{-1} - \mbf{A}_{0,-1}] \right| + \left| \frac{1}{n - 1} \sum_{j \neq k}  h_{112,k,j}(\theta, \bar\gamma)\chi_{n,j,-1}^\top [\mbf{A}_{-1} - \mbf{A}_{0,-1}] \right| \\
    & \quad + \left| \frac{1}{n - 1} \sum_{j \neq k}  h_{111,k,j}(\theta, \bar \bgamma)\chi_{n,j}^\top [\mbf{B} - \mbf{B}_0] \right| + \left| \frac{1}{n - 1} \sum_{j \neq k}  h_{112,k,j}(\theta, \bar\gamma)\chi_{n,k}^\top[\mbf{B} - \mbf{B}_0] \right| \\
    & \le c \left\| \theta - \theta_0 \right\| + 2c \left\| \mbf{A} - \mbf{A}_0 \right\|_\infty + 2c \left\| \mbf{B} - \mbf{B}_0 \right\|_\infty.
\end{align*}
As the right-hand side term in the last line is independent of $k$, we have $\left| \frac{1}{n - 1} \sum_{j \neq k} \left( h_{11,k,j}(\theta, \bgamma) - h_{11,k,j}(\theta_0, \bgamma_0) \right) \right| = o(1)$ for all $k$ for any $(\theta, \bgamma)$ such that $||\theta - \theta_0|| = o(1)$ and $||\bgamma - \bgamma_0||_\infty = o(1)$.

For the second term of \eqref{eq:triangle1}, note that the random components involved in $h_{11,k,j}(\theta_0, \bgamma_0)$ are only $(y_{k,j}, y_{j,k})$ and, thus, that $\{h_{11,k,j}(\theta_0, \bgamma_0)\}_{j \neq k}$ are independent by Assumption \ref{as:error}(ii).
Further, as $h_{11,k,j}(\theta_0, \bgamma_0)$ is uniformly bounded, using Hoeffding's and Boole's inequalities similarly as above, we can show that
\begin{align*}
    \max_{1 \le k \le n} \left| \frac{1}{n - 1} \sum_{j \neq k} \left( h_{11,k,j}(\theta_0, \bgamma_0) - \E h_{11,k,j}(\theta_0, \bgamma_0) \right) \right| = O_P\left(\sqrt{\frac{\ln n}{n}} \right).
\end{align*}
This completes the proof.
\end{proof}

\begin{lemma}\label{lem:A_rate1}
    \begin{enumerate}
    \item[(i)] $\left\| \tilde \bgamma_n(\theta) - \tilde \bgamma_0(\theta_0) \right\|_\infty = o_P(1)$ for any $\theta \in \Theta$ such that $||\theta - \theta_0|| = o(1)$,
    \item[(ii)] $\frac{1}{n}\sum_{i = 1}^n \left| \tilde{A}_{n,i}(\theta_0) - \tilde{A}_{0,i}(\theta_0) \right| = O_P(n^{-1/2})$,
    \item[(iii)] $\frac{1}{n}\sum_{i = 1}^n \left| \tilde{B}_{n,i}(\theta_0) - \tilde{B}_{0,i}(\theta_0) \right| = O_P(n^{-1/2})$,
    \item[(iv)]  $\left\| \tilde \bgamma_n(\theta_0) - \tilde \bgamma_0(\theta_0) \right\|_\infty = O_P(\sqrt{\ln n/n})$.
    \end{enumerate} 
\end{lemma}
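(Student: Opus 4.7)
The plan is to handle part (i) via a uniform-consistency argument that mirrors the proof of Theorem \ref{thm:consistency}, and to handle parts (ii)--(iv) by linearizing the first-order condition $\mcl{S}_{n,\bgamma}(\theta_0, \tilde\bgamma_n(\theta_0)) = \mbf{0}$ around $\bgamma_0 = \tilde\bgamma_0(\theta_0)$.

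For (i), I would first establish $\|\tilde\bgamma_n(\theta) - \tilde\bgamma_0(\theta)\|_\infty = o_P(1)$ for $\theta$ in a shrinking neighborhood of $\theta_0$ by repeating the contradiction argument in the proof of Theorem \ref{thm:consistency}(iv) with $\theta$ held fixed: the uniform convergence bound (\ref{eq:unif_conv}) applies verbatim, and Assumption \ref{as:unique} guarantees that $\tilde\bgamma_0(\theta)$ is the unique maximizer of $\E\mcl{L}_n(\theta, \cdot)$ in a neighborhood of $\theta_0$. Next I would show $\tilde\bgamma_0(\theta) \to \tilde\bgamma_0(\theta_0) = \bgamma_0$ as $\theta \to \theta_0$ by a standard theorem-of-the-maximum argument using compactness of $\mbb{C}_n$, continuity of $\E\mcl{L}_n$ in $(\theta, \bgamma)$, and uniqueness. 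The triangle inequality then completes (i).

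For (ii)--(iv), a mean-value expansion of the FOC around $\bgamma_0$ gives
\begin{align*}
    \tilde\bgamma_n(\theta_0) - \bgamma_0 = -\left[\mcl{H}_{n,\bgamma\bgamma}(\theta_0, \bar\bgamma_n)\right]^{-1} \mcl{S}_{n,\bgamma}(\theta_0, \bgamma_0),
\end{align*}
with $\bar\bgamma_n$ between $\tilde\bgamma_n(\theta_0)$ and $\bgamma_0$, so that $\|\bar\bgamma_n - \bgamma_0\|_\infty = o_P(1)$ by (i). For (ii) and (iii) I would use the $\ell_2$ operator norm. Each component of $\mcl{S}_{n,\bgamma}(\theta_0, \bgamma_0)$ takes the form $(2/N)\sum_{j \neq k} s_{1,k,j}(\theta_0, \bgamma_0)$, a scaled sum of $n-1$ mean-zero bounded random variables independent across $j$ by Assumption \ref{as:error}(ii); a direct second-moment computation yields $\E\|\mcl{S}_{n,\bgamma}(\theta_0, \bgamma_0)\|_2^2 = O(n^{-2})$, hence $\|\mcl{S}_{n,\bgamma}(\theta_0, \bgamma_0)\|_2 = O_P(n^{-1})$. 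Combining this with $\|[\mcl{H}_{n,\bgamma\bgamma}(\theta_0, \bar\bgamma_n)]^{-1}\|_{\mrm{op}} \le n/c_\bgamma$ from Assumption \ref{as:hessian}(i) gives $\|\tilde\bgamma_n(\theta_0) - \bgamma_0\|_2 = O_P(1)$, and the Cauchy--Schwarz inequality $\|\tilde\bgamma_n(\theta_0) - \bgamma_0\|_1 \le \sqrt{2n-1}\,\|\tilde\bgamma_n(\theta_0) - \bgamma_0\|_2 = O_P(\sqrt{n})$ divided by $n$ delivers (ii) and (iii).

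Part (iv) is the main obstacle. The crude bound $\|\tilde\bgamma_n - \bgamma_0\|_\infty \le \|\tilde\bgamma_n - \bgamma_0\|_2 = O_P(1)$ is far too weak, so a componentwise argument is needed. First, Hoeffding's inequality applied to each of the $2n - 1$ components of $\mcl{S}_{n,\bgamma}(\theta_0, \bgamma_0)$ together with a union bound yields $\|\mcl{S}_{n,\bgamma}(\theta_0, \bgamma_0)\|_\infty = O_P(\sqrt{\ln n / n^3})$. The hard step is to prove that the max absolute row sum (i.e., the $\ell_\infty \to \ell_\infty$ operator norm) of $[\mcl{H}_{n,\bgamma\bgamma}(\theta_0, \bar\bgamma_n)]^{-1}$ is $O(n)$. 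The key observation is that, as apparent from (\ref{eq:hessian}), the diagonal entries of $\mcl{H}_{n,\bgamma\bgamma}$ are of order $1/n$ (each a scaled sum of $n-1$ uniformly bounded terms whose expectations are strictly negative by (\ref{eq:Eh})), whereas off-diagonal entries are of order $1/n^2$. Lemma \ref{lem:h_LLN} confirms that the diagonal row sums concentrate uniformly around their expectations at $\bar\bgamma_n$. The standard route, developed for directed-dyad models by \cite{graham2017econometric} and \cite{yan2019statistical}, is to construct an explicit approximation $S$ to $[n\mcl{H}_{n,\bgamma\bgamma}]^{-1}$ and show that both $S$ and the remainder have uniformly bounded $\ell_1$ row norms; this yields $\|[\mcl{H}_{n,\bgamma\bgamma}(\theta_0, \bar\bgamma_n)]^{-1}\|_{\infty \to \infty} = O_P(n)$. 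Combining with the score bound then delivers (iv).
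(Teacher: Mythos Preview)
Your proposal for parts (i)--(iii) matches the paper's proof essentially verbatim: the same triangle-inequality/Berge-theorem split for (i), and the same linearization of the FOC combined with the $\ell_2$ Hessian bound from Assumption \ref{as:hessian}(i), the moment bound $\E\|n\cdot\mcl{S}_{n,\bgamma}\|^2 = O(1)$, and the $\ell_1$--$\ell_2$ norm inequality for (ii)--(iii).

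For part (iv), your approach is valid in principle but takes a harder road than the paper. You propose to stay with the full $(2n-1)$-dimensional expansion $\tilde\bgamma_n - \bgamma_0 = -[\mcl{H}_{n,\bgamma\bgamma}(\theta_0,\bar\bgamma_n)]^{-1}\mcl{S}_{n,\bgamma}$ and bound the $\ell_\infty\!\to\!\ell_\infty$ operator norm of the inverse Hessian via an approximate-inverse construction \`a la \cite{graham2017econometric}/\cite{yan2019statistical}. That machinery would have to be adapted to the present bivariate-response Hessian (which has nontrivial $\mbf{A}\mbf{B}$ blocks with $O(1/n)$ diagonals coming from the $h_{12}$ terms), so the step you flag as ``hard'' really is the bulk of the work under your plan. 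The paper instead avoids the $\ell_\infty$ operator-norm bound entirely by doing a \emph{single-coordinate} mean-value expansion: for each $k$, expand $\partial_{A_k}\mcl{L}_n(\theta_0,\tilde A_{n,k},\tilde\bgamma_{n,-k})=0$ separately, isolating (a) a diagonal factor $\frac{2}{n-1}\sum_{j\neq k}h_{11,k,j}$, shown to be uniformly bounded away from zero via Lemma \ref{lem:h_LLN}, (b) a score term $T_{1,n,k}=\frac{2}{n-1}\sum_{j\neq k}s_{1,k,j}$, handled by Hoeffding plus a union bound exactly as you suggest, and (c) a cross term $T_{2,n,k}$ involving $\tilde\bgamma_{n,-k}-\bgamma_{0,-k}$, whose specific structure reduces it to $\frac{c}{n-1}\sum_{j}|\tilde A_{n,j}-A_{0,j}|$ (and the analogous $B$ sum), which is $O_P(n^{-1/2})$ by the already-proved parts (ii)--(iii). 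This componentwise decoupling is more elementary and self-contained; your route buys generality (it is the standard template for such models) at the cost of importing and adapting the approximate-inverse lemma.
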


\begin{proof}
    (i) By the triangle inequality,
    \begin{align*}
        \left\| \tilde \bgamma_n(\theta) - \tilde \bgamma_0(\theta_0) \right\|_\infty 
        & \le \left\| \tilde \bgamma_n(\theta) - \tilde \bgamma_0(\theta) \right\|_\infty  + \left\| \tilde \bgamma_0(\theta) - \tilde \bgamma_0(\theta_0) \right\|_\infty.
    \end{align*}
    For the first term on the right-hand side, the same argument as in the proof of Theorem \ref{thm:consistency}(iv) achieves $|| \tilde \bgamma_n(\theta) - \tilde \bgamma_0(\theta) ||_\infty = o_P(1)$ for any $\theta$ in the neighborhood of $\theta_0$ under Assumption \ref{as:unique}.
    For the second term, Assumption \ref{as:unique} and Berge's theorem implies that every element of  $\tilde \bgamma_0(\theta)$ is continuous in the neighborhood of $\theta_0$ (see, e.g., Corollary A4.8, \citealp{kreps2012microeconomic}). 
    Thus, $|| \tilde \bgamma_0(\theta) - \tilde \bgamma_0(\theta_0) ||_\infty = o(1)$ holds.
    \bigskip

    (ii) (iii) By the first-order condition and the mean value expansion,
    \begin{align*}
        \mbf{0}_{(2n-1) \times 1} 
        & = n \cdot \mcl{S}_{n,\bgamma}(\theta_0, \tilde \bgamma_n(\theta_0)) \\
        & = n \cdot \mcl{S}_{n,\bgamma} - \left(- n \cdot \mcl{H}_{n,\bgamma\bgamma}(\theta_0, \bar \bgamma_n)\right) [\tilde \bgamma_{n,-1}(\theta_0) - \tilde \bgamma_{0,-1}(\theta_0)],
    \end{align*}
    where $\bar \bgamma_n \in [\tilde \bgamma_n(\theta_0), \tilde \bgamma_0(\theta_0)]$.
    Then, by result (i) and Assumption \ref{as:hessian}(i), we have
    \begin{align}\label{eq:expansion_A}
        \tilde \bgamma_{n,-1}(\theta_0) - \tilde \bgamma_{0,-1}(\theta_0) = \left(- n \cdot \mcl{H}_{n,\bgamma\bgamma}(\theta_0, \bar \bgamma_n)\right)^{-1} n \cdot \mcl{S}_{n,\bgamma}
    \end{align}
    w.p.a.1; thus, $||\tilde \bgamma_n(\theta_0) - \tilde \bgamma_0(\theta_0)|| = || \tilde \bgamma_{n,-1}(\theta_0) - \tilde \bgamma_{0,-1}(\theta_0) || \le O_P(1) \cdot ||n \cdot \mcl{S}_{n,\bgamma}||$.
    
    Further, observe that
    \begin{align*}
        \E ||n \cdot \mcl{S}_{n,\bgamma} ||^2 
        & = \frac{4}{(n-1)^2} \sum_{i = 1}^n \sum_{j > i}\sum_{k = 1}^n \sum_{l > k} \E\left[ s_{i,j}^\mbf{A}(\theta_0, \bgamma_0)^\top  s_{k,l}^\mbf{A}(\theta_0, \bgamma_0) + s_{i,j}^\mbf{B}(\theta_0, \bgamma_0)^\top  s_{k,l}^\mbf{B}(\theta_0, \bgamma_0) \right] \\
        & = \frac{4}{(n-1)^2} \sum_{i = 1}^n \sum_{j > i}\sum_{k = 1}^n \sum_{l > k} \E \left[ (s_{1,i,j} \chi_{n,i,-1}^\top + s_{2,i,j} \chi_{n,j,-1}^\top) (s_{1,k,l} \chi_{n,k,-1} + s_{2,k,l} \chi_{n,l,-1}) \right] \\
        & \quad + \frac{4}{(n-1)^2} \sum_{i = 1}^n \sum_{j > i}\sum_{k = 1}^n \sum_{l > k} \E \left[ (s_{1,i,j} \chi_{n,j}^\top + s_{2,i,j} \chi_{n,i}^\top) (s_{1,k,l} \chi_{n,l} + s_{2,k,l} \chi_{n,k}) \right] \\
        & = \frac{4}{(n-1)^2} \sum_{i = 1}^n \sum_{j > i} \E \left[ (s_{1,i,j} \chi_{n,i,-1}^\top + s_{2,i,j} \chi_{n,j,-1}^\top) (s_{1,i,j} \chi_{n,i,-1} + s_{2,i,j} \chi_{n,j,-1}) \right] \\
        & \quad + \frac{4}{(n-1)^2} \sum_{i = 1}^n \sum_{j > i} \E \left[ (s_{1,i,j} \chi_{n,i,-1}^\top + s_{2,i,j} \chi_{n,j,-1}^\top) (s_{1,j,i} \chi_{n,j,-1} + s_{2,j,i} \chi_{n,i,-1}) \right] \\
        & \quad + \frac{4}{(n-1)^2} \sum_{i = 1}^n \sum_{j > i} \E \left[ (s_{1,i,j} \chi_{n,j}^\top + s_{2,i,j} \chi_{n,i}^\top) (s_{1,i,j} \chi_{n,j} + s_{2,i,j} \chi_{n,i}) \right] \\
        & \quad + \frac{4}{(n-1)^2} \sum_{i = 1}^n \sum_{j > i} \E \left[ (s_{1,i,j} \chi_{n,j}^\top + s_{2,i,j} \chi_{n,i}^\top) (s_{1,j,i} \chi_{n,i} + s_{2,j,i} \chi_{n,j}) \right] = O(1)
    \end{align*}
    by Assumption \ref{as:error}(ii).
    Then, it holds that $||n \cdot \mcl{S}_{n,\bgamma}|| = O_P(1)$ by Markov's inequality; thus, we have $||\tilde \bgamma_n(\theta_0) - \tilde \bgamma_0(\theta_0)|| = O_P(1)$.
    Finally, by the basic norm inequality, it holds that 
    \begin{align*}
        \sum_{i = 1}^n|\tilde{A}_{n,i}(\theta_0) - \tilde{A}_{0,i}(\theta_0)| + \sum_{i = 1}^n|\tilde{B}_{n,i}(\theta_0) - \tilde{B}_{0,i}(\theta_0)| \le \sqrt{2n-1} ||\tilde \bgamma_n(\theta_0) - \tilde \bgamma_0(\theta_0)|| = O_P(\sqrt{n}),
    \end{align*}
    which gives the desired result.
    \bigskip

    (iv) Let $\bgamma_{-k} = (\mbf{A}_{-k}^\top, \mbf{B}^\top)^\top$ for $k \neq 1$ and write  $\mcl{L}_n(\theta, A_k, \bgamma_{-k})$ as $\mcl{L}_n(\theta, \bgamma)$.
    By the first-order condition and mean value expansion,
    \begin{align*}
        0 
        & = n \cdot \partial_{A_k} \mcl{L}_n(\theta_0, \tilde A_{n,k}(\theta_0), \tilde \bgamma_{n,-k}(\theta_0)) \\
        & = \frac{2}{n-1}\sum_{i = 1}^n \sum_{j > i} s_{i,j}^{A_k} 
        + n \cdot \partial_{A_k} \mcl{L}_n(\theta_0, \tilde A_{n,k}(\theta_0), \tilde \bgamma_{n,-k}(\theta_0))  - n \cdot \partial_{A_k} \mcl{L}_n( \theta_0, \tilde A_{0,k}(\theta_0), \tilde \bgamma_{n,-k}(\theta_0)) \\
        & \quad + n \cdot \partial_{A_k} \mcl{L}_n(\theta_0, A_{0,k}, \tilde \bgamma_{n,-k}(\theta_0)) - n \cdot \partial_{A_k} \mcl{L}_n( \theta_0, A_{0,k}, \tilde \bgamma_{0,-k}(\theta_0))  \\   
        & = \frac{2}{n-1}\sum_{i = 1}^n \sum_{j > i} s_{i,j}^{A_k}
        + \frac{2}{n-1}\sum_{j \neq k} h_{11,k,j}(\theta_0, \bar A_{n,k}, \tilde \bgamma_{n,-k}(\theta_0)) [\tilde A_{n,k}(\theta_0) - \tilde A_{0,k}(\theta_0)] \\
        & \quad + \frac{2}{n-1}\sum_{i = 1}^n \sum_{j > i} \partial_{\bgamma_{-k}^\top} s_{i,j}^{A_k}(\theta_0, A_{0,k}, \bar \bgamma_{n,-k}) [\tilde \bgamma_{n,-k}(\theta_0) - \tilde \bgamma_{0,-k}(\theta_0)],
    \end{align*}
    where $\bar A_{n,k} \in [\tilde A_{n,k}(\theta_0), \tilde A_{0,k}(\theta_0)]$, and $\bar \bgamma_{n,-k}\in [\tilde \bgamma_{n,-k}(\theta_0), \tilde \bgamma_{0,-k}(\theta_0)]$.
    In view of \eqref{eq:Eh}, Lemma \ref{lem:h_LLN}(i), and result (i) imply that $\frac{2}{n-1}\sum_{j \neq k} h_{11,k,j}(\theta_0, \bar A_{n,k}, \tilde \bgamma_{n,-k}(\theta_0))$ is bounded and away from zero w.p.a.1 uniformly in $k$.
    Then,
    \begin{align*}
        \left| \tilde{A}_{n,k}(\theta_0) - \tilde{A}_{0,k}(\theta_0) \right| \le (c + o_p(1))\left\{ \left| T_{1,n,k} \right| + \left| T_{2,n,k} \right| \right\}
    \end{align*}
    for some $c > 0$, where
    \begin{align*}
        T_{1,n,k} 
        & \equiv \frac{2}{n-1}\sum_{i = 1}^n \sum_{j > i} s_{i,j}^{A_k}, \\
        T_{2,n,k}
        & \equiv \frac{2}{n-1}\sum_{i = 1}^n \sum_{j > i} \partial_{\bgamma_{-k}^\top} s_{i,j}^{A_k}(\theta_0, A_{0,k}, \bar \bgamma_{n,-k}) [\tilde \bgamma_{n,-k}(\theta_0) - \tilde \bgamma_{0,-k}(\theta_0)]\\
        & = \frac{2}{n-1}\sum_{i = 1}^n \sum_{j > i} \partial_{\mbf{A}_{-k}^\top} s_{i,j}^{A_k}(\theta_0, A_{0,k}, \bar \bgamma_{n,-k}) [\tilde{\mbf{A}}_{n,-k}(\theta_0) - \tilde{\mbf{A}}_{0,-k}(\theta_0)] \\
        & \quad + \frac{2}{n-1}\sum_{i = 1}^n \sum_{j > i} \partial_{\mbf{B}^\top} s_{i,j}^{A_k}(\theta_0, A_{0,k}, \bar \bgamma_{n,-k}) [\tilde{\mbf{B}}_n(\theta_0) - \tilde{\mbf{B}}_0(\theta_0)] \\
        &  \equiv T_{21,n,k} + T_{22,n,k}, \;\; \text{say.}
    \end{align*}

    First, observe that 
    \begin{align*}
        T_{1,n,k}
        & = \frac{2}{n-1}\sum_{i = 1}^n \sum_{j > i} s_{1,i,j}\mbf{1}\{ i = k \} + \frac{2}{n-1}\sum_{i = 1}^n \sum_{j > i}  s_{2,i,j}\mbf{1}\{ j = k \}  \\
        & = \frac{2}{n-1}\sum_{j > k} s_{1,k,j} + \frac{2}{n-1} \sum_{j < k} s_{2,j,k} \\
        & = \frac{2}{n-1} \sum_{j \neq k} s_{1,k,j},
    \end{align*}
    where the last equality holds because $s_{1,k,j} = s_{2,j,k}$.
    Clearly, $s_{1,k,j}$ is uniformly bounded and $\E s_{1,k,j} = 0$.
    Then, with Assumption \ref{as:error}(ii), 
    we can show that
    \begin{align}\label{eq:T_1}
        \max_{1 \le k \le n}|T_{1,n,k}| = O_P(\sqrt{\ln n / n})
    \end{align}
    by Hoeffding's and Boole's inequalities.

    Next, observe that there exist bounded functions $s_{11,i,j}(\theta, \bgamma) \equiv \partial_{\pi_{i,j}} s_{1,i,j}(\theta, \bgamma)$, $s_{12,i,j}(\theta, \bgamma) \equiv \partial_{\pi_{j,i}} s_{1,i,j}(\theta, \bgamma)$, $s_{21,i,j}(\theta, \bgamma) \equiv \partial_{\pi_{i,j}} s_{2,i,j}(\theta, \bgamma)$, and $s_{22,i,j}(\theta, \bgamma) \equiv \partial_{\pi_{j,i}} s_{2,i,j}(\theta, \bgamma)$, satisfying
    \begin{align*}
        \sum_{i = 1}^n \sum_{j > i} \partial_{\mbf{A}_{-k}^\top} s_{i,j}^{A_k}(\theta, \bgamma) 
        & = \sum_{i = 1}^n \sum_{j >i} s_{11,i,j}(\theta, \bgamma) \mbf{1}\{i = k\}\chi_{n,i,-k}^\top + \sum_{i = 1}^n \sum_{j > i} s_{12,i,j}(\theta, \bgamma) \mbf{1}\{i = k\}\chi_{n,j,-k}^\top \\
        & \quad + \sum_{i = 1}^n \sum_{j > i} s_{21,i,j}(\theta, \bgamma) \mbf{1}\{j = k\}\chi_{n,i,-k}^\top + \sum_{i = 1}^n \sum_{j > i} s_{22,i,j}(\theta, \bgamma) \mbf{1}\{j = k\}\chi_{n,j,-k}^\top \\
        & = \sum_{j > k} s_{12,k,j}(\theta, \bgamma) \chi_{n,j,-k}^\top  + \sum_{j < k} s_{21,j,k}(\theta, \bgamma) \chi_{n,j,-k}^\top \\
        & = \sum_{j \neq k} s_{12,k,j}(\theta, \bgamma) \chi_{n,j,-k}^\top,
    \end{align*}
    where the second equality follows since $ \mbf{1}\{i = k\}\chi_{n,i,-k}$ is only a vector of zeros.
    Hence, for some $c > 0$, the triangle inequality gives
    \begin{align*}
        \left|  T_{21,n,k} \right| 
        & = \left| \frac{2}{n-1}    \sum_{j \neq k} s_{12,k,j}(\theta_0, A_{0,k}, \bar \bgamma_{n,-k}) \chi_{n,j,-k}^\top [\tilde{\mbf{A}}_{n,-k}(\theta_0) - \tilde{\mbf{A}}_{0,-k}(\theta_0)] \right| \\
        & \le \frac{c}{n - 1}\sum_{j \neq k} \left| \chi_{n,j,-k}^\top [\tilde{\mbf{A}}_{n,-k}(\theta_0) - \tilde{\mbf{A}}_{0,-k}(\theta_0)]\right| \\
        & = \frac{c}{n - 1}\sum_{j \neq k} \left| \tilde A_{n,j}(\theta_0) - \tilde A_{0,j}(\theta_0) \right| \le \frac{c}{n-1}\sum_{j = 1}^n\left| \tilde A_{n,j}(\theta_0) - \tilde A_{0,j}(\theta_0)  \right| = O_P(n^{-1/2})
    \end{align*}
    by result (ii).
    Note that the last inequality is independent of $k$.
    Analogously, we can show that $| T_{22,n,k} | = O_P(n^{-1/2})$ uniformly in $k$ by result (iii).
    Thus, we have
    \begin{align}\label{eq:T_2}
        \max_{1 \le k \le n}|T_{2,n,k}| = O_P(n^{-1/2}).
    \end{align}
    Combining \eqref{eq:T_1} and \eqref{eq:T_2} yields $\max_{1 \le k \le n}|\tilde{A}_{n,k}(\theta_0) - \tilde{A}_{0,k}(\theta_0)| = O_P(\sqrt{\ln n / n}) +  O_P(n^{-1/2})$.

    Similarly as above, using Lemma \ref{lem:h_LLN}(ii), we can also show that $\max_{1 \le k \le n}|\tilde{B}_{n,k}(\theta_0) - \tilde{B}_{0,k}(\theta_0)| = O_P(\sqrt{\ln n / n}) +  O_P(n^{-1/2})$.
    This completes the proof.
\end{proof}


\begin{lemma}\label{lem:theta_rate}
    $|| \hat \theta_n - \theta_0 || = O_P(n^{-1/2})$. 
\end{lemma}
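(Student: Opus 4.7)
The plan is to linearise the joint first-order conditions $\mcl{S}_{n,\theta}(\hat \theta_n, \hat \bgamma_n) = 0$ and $\mcl{S}_{n,\bgamma}(\hat \theta_n, \hat \bgamma_n) = 0$ around $(\theta_0, \bgamma_0)$ and profile out the nuisance block. By Theorem~\ref{thm:consistency} and Assumption~\ref{as:hessian}, the intermediate-value Hessians produced by the mean-value expansions are close to their truth-evaluated counterparts w.p.a.1, so their deviation will only contribute an $o_P(||\hat \theta_n - \theta_0||)$ remainder. Solving the $\bgamma$-block for $\hat \bgamma_{n,-1} - \bgamma_{0,-1} = -\mcl{H}_{n,\bgamma\bgamma}^{-1}\bigl[\mcl{S}_{n,\bgamma}(\theta_0,\bgamma_0) + \mcl{H}_{n,\bgamma\theta}(\hat \theta_n - \theta_0)\bigr]$ (valid by Assumption~\ref{as:hessian}(i)) and substituting into the $\theta$-block, I would obtain the identity
\[
\mcl{I}_{n,\theta\theta}(\hat \theta_n - \theta_0) = -\bigl[\mcl{S}_{n,\theta}(\theta_0, \bgamma_0) - \mcl{H}_{n,\theta\bgamma}\mcl{H}_{n,\bgamma\bgamma}^{-1}\mcl{S}_{n,\bgamma}(\theta_0, \bgamma_0)\bigr] + o_P(||\hat \theta_n - \theta_0||).
\]
Assumption~\ref{as:hessian}(ii) supplies a bounded operator-norm inverse for $\mcl{I}_{n,\theta\theta}$, so the lemma reduces to bounding the norm of the right-hand side.

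Four rate estimates then suffice. (a) $||\mcl{S}_{n,\theta}(\theta_0, \bgamma_0)|| = O_P(n^{-1})$, since it is an average of $N\asymp n^2$ mean-zero dyadic scores that are independent under Assumption~\ref{as:error}(ii). (b) $||\mcl{S}_{n,\bgamma}(\theta_0, \bgamma_0)|| = O_P(n^{-1})$, via the identity $\E||n \cdot \mcl{S}_{n,\bgamma}||^2 = O(1)$ already shown inside the proof of Lemma~\ref{lem:A_rate1}(ii)(iii). (c) The operator norm $||\mcl{H}_{n,\bgamma\bgamma}^{-1}|| = O(n)$, by Assumption~\ref{as:hessian}(i). (d) Crucially, each entry of the $(d_z+2)\times(2n-1)$ cross-Hessian $\mcl{H}_{n,\theta\bgamma}$ is of order $1/n$, because only the $O(n)$ dyads containing a given agent contribute to the column of $\mcl{H}_{n,\theta\bgamma}$ indexed by that agent's fixed effect, and this sum is normalised by $N\asymp n^2$. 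Hence the Frobenius (and therefore operator) norm of $\mcl{H}_{n,\theta\bgamma}$ is at most $O\bigl(\sqrt{(d_z+2)(2n-1)/n^2}\bigr) = O(n^{-1/2})$.

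Combining (b)--(d) yields $||\mcl{H}_{n,\theta\bgamma}\mcl{H}_{n,\bgamma\bgamma}^{-1}\mcl{S}_{n,\bgamma}|| \le O(n^{-1/2}) \cdot O(n) \cdot O_P(n^{-1}) = O_P(n^{-1/2})$, which dominates the direct-score contribution from (a). Putting everything together and absorbing the $o_P(||\hat \theta_n - \theta_0||)$ remainder delivers $||\hat \theta_n - \theta_0|| = O_P(n^{-1/2})$. The fact that this rate is slower than the oracle rate of $N^{-1/2} = O(n^{-1})$ from Theorem~\ref{thm:normality} is precisely the incidental-parameter effect: the dominant contribution comes from the cross term coupling $\mcl{H}_{n,\theta\bgamma}$ with the inverse of the $(2n-1)$-dimensional nuisance Hessian block, not from the $\theta$-score itself.

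The main technical hurdle I expect is making step~(d) rigorous: the elementwise $O(1/n)$ bound on $\mcl{H}_{n,\theta\bgamma}$ must hold uniformly in the intermediate arguments produced by the mean-value expansions, not only at $(\theta_0, \bgamma_0)$. Establishing this requires extending the uniform law-of-large-numbers argument of Lemma~\ref{lem:h_LLN} from the diagonal blocks to the $\theta$-$\bgamma$ cross-derivative processes, combining Assumption~\ref{as:error}(iii) for smoothness, Assumption~\ref{as:Z_bound} for boundedness of covariates, and Theorem~\ref{thm:consistency}(iv) for the uniform consistency of the fixed-effect estimates that absorbs the intermediate-point drift.
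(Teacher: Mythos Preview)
Your proposal is correct and arrives at the same destination as the paper, but the route differs in presentation. The paper works directly with the \emph{profile} likelihood: it applies the implicit function theorem to $\mcl{S}_{n,\bgamma}(\theta,\tilde\bgamma_n(\theta))=0$ to compute $\partial_{\theta^\top}\tilde\bgamma_n(\theta)$, then uses a second-order Taylor expansion of $\theta\mapsto\mcl{L}_n(\theta,\tilde\bgamma_n(\theta))$ to obtain the quadratic inequality
\[
\|\hat\theta_n-\theta_0\|\le \frac{2\,\|\mcl{S}_{n,\theta}(\theta_0,\tilde\bgamma_n(\theta_0))\|}{\lambda_{\min}\bigl(-\mcl{I}_{n,\theta\theta}(\bar\theta_n,\tilde\bgamma_n(\bar\theta_n))\bigr)}.
\]
The denominator is handled by Assumption~\ref{as:hessian}(ii) directly, and the numerator is split as $\mcl{S}_{n,\theta}(\theta_0,\bgamma_0)+\bigl[\mcl{S}_{n,\theta}(\theta_0,\tilde\bgamma_n(\theta_0))-\mcl{S}_{n,\theta}(\theta_0,\bgamma_0)\bigr]$, with the second piece bounded via a mean-value expansion and the $\ell_1$ rates $\frac{1}{n}\sum_i|\tilde A_{n,i}(\theta_0)-A_{0,i}|=O_P(n^{-1/2})$ from Lemma~\ref{lem:A_rate1}(ii)(iii).

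Your linearisation of the joint first-order conditions followed by block elimination is the standard dual to this argument. Your operator-norm product $\|\mcl{H}_{n,\theta\bgamma}\|\cdot\|\mcl{H}_{n,\bgamma\bgamma}^{-1}\|\cdot\|\mcl{S}_{n,\bgamma}\|=O(n^{-1/2})\cdot O(n)\cdot O_P(n^{-1})$ is exactly the Frobenius-norm rephrasing of the paper's $\ell_1$ bound on $s^\theta_{2,n}$: both exploit that each column of $\mcl{H}_{n,\theta\bgamma}$ has entries of order $1/n$ because only $O(n)$ dyads touch a given agent. The paper's profile-likelihood route has the mild advantage that Assumption~\ref{as:hessian}(ii) is stated precisely for $\mcl{I}_{n,\theta\theta}(\theta,\bgamma)$ at a single argument, so the quadratic inequality applies verbatim and you avoid the mixed intermediate points that arise when separate mean-value expansions are spliced together; this is the minor issue you flag in your last paragraph, and the paper sidesteps it rather than confronting it. Conversely, your approach is slightly more self-contained in that it does not require invoking Lemma~\ref{lem:A_rate1}(ii)(iii) as a separate input.
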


\begin{proof}
    Applying the implicit function theorem to $\mcl{S}_{n,\bgamma}(\theta, \tilde\bgamma_n(\theta)) = \partial_{\bgamma} \mcl{L}_n(\theta, \tilde\bgamma_n(\theta)) = \mbf{0}_{(2n-1) \times 1}$ for $\theta \in \Theta$ yields
    \begin{align*}
        \mbf{0}_{(2n - 1) \times (d_z + 2) } 
        & = \partial_{\theta^\top} \mcl{S}_{n,\bgamma}(\theta, \tilde\bgamma_n(\theta)) \\
        & = \mcl{H}_{n, \bgamma \theta } (\theta, \tilde \bgamma_n(\theta)) + \mcl{H}_{n,\bgamma\bgamma} (\theta, \tilde \bgamma_n(\theta))\partial_{\theta^\top} \tilde \bgamma_n(\theta) \\
         \Longrightarrow \partial_{\theta^\top} \tilde \bgamma_n(\theta)
         & = - \left[ n \cdot \mcl{H}_{n,\bgamma\bgamma} (\theta, \tilde \bgamma_n(\theta)) \right]^{-1} n \cdot \mcl{H}_{n, \bgamma \theta } (\theta, \tilde \bgamma_n(\theta)),
    \end{align*}
    where the right-hand side exists w.p.a.1 for $\theta$ in a neighborhood of $\theta_0$ by Lemma \ref{lem:A_rate1}(i) and Assumption \ref{as:hessian}(i).
    Then, by the second-order Taylor expansion,
    \begin{align*}
        0 
        & \le \mcl{L}_n(\hat \theta_n, \tilde \bgamma_n(\hat \theta_n)) - \mcl{L}_n(\theta_0, \tilde \bgamma_n(\theta_0)) \\
        & =  \mcl{S}_{n, \theta}(\theta_0, \tilde \bgamma_n(\theta_0))^\top (\hat \theta_n - \theta_0) \\
        & \quad + \frac{1}{2} (\hat \theta_n - \theta_0)^\top \left[ \mcl{H}_{n, \theta \theta}(\bar \theta_n, \tilde\bgamma_n(\bar \theta_n)) + \mcl{H}_{n, \theta \bgamma} (\bar \theta_n, \tilde\bgamma_n(\bar \theta_n)) \{ \partial_{\theta^\top} \tilde \bgamma_n(\bar \theta_n) \} \right] (\hat \theta_n - \theta_0) \\
        & \le \left\| \mcl{S}_{n, \theta}(\theta_0, \tilde \bgamma_n(\theta_0)) \right\| \cdot \left\| \hat \theta_n - \theta_0 \right\| - \frac{1}{2}\lambda_\text{min} \left( - \mcl{I}_{n, \theta \theta}(\bar \theta_n, \tilde \bgamma_n(\bar \theta_n)) \right)  \cdot \left\| \hat \theta_n - \theta_0 \right\|^2 \\
        & \Longrightarrow \left\| \hat \theta_n - \theta_0 \right\| \le \frac{2 \left\| \mcl{S}_{n, \theta}(\theta_0, \tilde \bgamma_n(\theta_0))  \right\|}{\lambda_\text{min} \left( - \mcl{I}_{n, \theta \theta}(\bar \theta_n, \tilde \bgamma_n(\bar \theta_n)) \right)},
    \end{align*}
    where $\bar \theta_n \in [\hat \theta_n, \theta_0]$.
    Here, since Theorem \ref{thm:consistency}(i) and Lemma \ref{lem:A_rate1}(i) imply that $\tilde \bgamma_n(\bar \theta_n)$ is uniformly consistent for $\bgamma_0$, $\lambda_\text{min} \left(- \mcl{I}_{n, \theta \theta}(\bar \theta_n, \tilde \bgamma_n(\bar \theta_n))  \right) > c_\theta$ w.p.a.1 by Assumption \ref{as:hessian}(ii).
    Thus, it suffices to show that $||\mcl{S}_{n, \theta}(\theta_0, \tilde \bgamma_n(\theta_0))|| = O_P(n^{-1/2})$.
    
    We decompose $\mcl{S}_{n, \theta}(\theta_0, \tilde \bgamma_n(\theta_0))$ into the following two terms:
    \begin{align*}
        \mcl{S}_{n, \theta}(\theta_0, \tilde \bgamma_n(\theta_0))
        & = \frac{2}{N} \sum_{i = 1}^n \sum_{j > i} s^\theta_{i,j}(\theta_0, \bgamma_0) + \frac{2}{N} \sum_{i = 1}^n \sum_{j > i} [s^\theta_{i,j}(\theta_0, \tilde \bgamma_n(\theta_0)) - s^\theta_{i,j}(\theta_0, \tilde \bgamma_0(\theta_0))] \\
        & \equiv  s^\theta_{1,n} + s^\theta_{2, n}, \;\; \text{say.}
    \end{align*}
    Since $\E s^\theta_{1,n} = \mbf{0}_{(d_z + 2) \times 1}$, by Assumption \ref{as:error}(ii),
    \begin{align*}
        \text{Var} \left[ s^\theta_{1,n} \right] 
        & = \frac{4}{N^2} \sum_{i = 1}^n \sum_{j > i}\sum_{k = 1}^n \sum_{l > k}\E[ s^\theta_{i,j} s^{\theta\top}_{k,l}] \\
        & = \frac{4}{N^2} \sum_{i = 1}^n \sum_{j > i}\E[ s^\theta_{i,j} s^{\theta\top}_{i,j} + s^\theta_{i,j} s^{\theta\top}_{j,i}] = O(n^{-2}).
    \end{align*}
    Thus, by Chebyshev's inequality, $||s^\theta_{1,n}|| = O_P(n^{-1})$.
    For $s^\theta_{2,n}$, observe that there exist bounded functions $s_{1,i,j}^\theta (\theta, \bgamma) \equiv \partial_{\pi_{i,j}} s^\theta_{i,j}(\theta, \bgamma)$ and $s_{2,i,j}^\theta(\theta, \bgamma) \equiv \partial_{\pi_{j,i}} s^\theta_{i,j}(\theta, \bgamma)$, such that 
    \begin{align}\label{eq:s_theta_A}
        \begin{split}
        & s^\theta_{i,j}(\theta_0, \tilde \bgamma_n(\theta_0)) - s^\theta_{i,j}(\theta_0, \tilde \bgamma_0(\theta_0)) \\
        & = \partial_{\mbf{A}_{-1}^\top} s^\theta_{i,j}(\theta_0, \bar \bgamma_n) [\tilde{\mbf{A}}_{n,-1}(\theta_0) - \tilde{\mbf{A}}_{0,-1}(\theta_0)] + \partial_{\mbf{B}^\top} s^\theta_{i,j}(\theta_0, \bar \bgamma_n) [\tilde{\mbf{B}}_n(\theta_0) - \tilde{\mbf{B}}_0(\theta_0)] \\
        & = s_{1,i,j}^\theta(\theta_0, \bar \bgamma_n)\chi_{n,i,-1}^\top [\tilde{\mbf{A}}_{n,-1}(\theta_0) - \tilde{\mbf{A}}_{0,-1}(\theta_0)] + s_{2,i,j}^\theta(\theta_0, \bar \bgamma_n)\chi_{n,j,-1}^\top [\tilde{\mbf{A}}_{n,-1}(\theta_0) - \tilde{\mbf{A}}_{0,-1}(\theta_0)] \\
        & \quad + s_{1,i,j}^\theta(\theta_0, \bar \bgamma_n)\chi_{n,j}^\top [\tilde{\mbf{B}}_n(\theta_0) - \tilde{\mbf{B}}_0(\theta_0)] + s_{2,i,j}^\theta(\theta_0, \bar \bgamma_n)\chi_{n,i}^\top [\tilde{\mbf{B}}_n(\theta_0) - \tilde{\mbf{B}}_0(\theta_0)] \\ 
        & \equiv t_{1,i,j} + t_{2,i,j} + t_{3,i,j} + t_{4,i,j}, \;\; \text{say,}
        \end{split}
    \end{align}
    where $\bar \bgamma_n \in [\tilde \bgamma_n(\theta_0), \tilde \bgamma_0(\theta_0)]$.
    Thus, for some $c > 0$, 
    \begin{align*}
        \left\| \frac{2}{N} \sum_{i = 1}^n \sum_{j > i}  t_{1,i,j} \right\| 
        & \le \frac{c}{N}\sum_{i=1}^n \sum_{j > i} \left| \chi_{n,i,-1}^\top [\tilde{\mbf{A}}_{n,-1}(\theta_0) - \tilde{\mbf{A}}_{0,-1}(\theta_0)] \right| \\
        & = \frac{c}{n}\sum_{i=2}^n \left| \tilde{A}_{n,i}(\theta_0) - \tilde{A}_{0,i}(\theta_0) \right| = O_P(n^{-1/2})
    \end{align*}
    by Lemma \ref{lem:A_rate1}(ii).
    Similarly, it is straightforward to see that $|| \frac{2}{N} \sum_{i = 1}^n \sum_{j > i}  t_{2,i,j} || = O_P(n^{-1/2})$ and that $|| \frac{2}{N} \sum_{i = 1}^n \sum_{j > i}  t_{3,i,j} || = O_P(n^{-1/2})$ and $|| \frac{2}{N} \sum_{i = 1}^n \sum_{j > i}  t_{4,i,j} || = O_P(n^{-1/2})$ by Lemma \ref{lem:A_rate1}(iii).
    Hence, we have $||s_{2,n}^\theta|| = O_P(n^{-1/2})$, and this completes the proof.
\end{proof}

\bigskip
\textbf{Proof of Theorem \ref{thm:conv_rate}}

(i) (ii) By the first-order condition and the mean value expansion,
\begin{align*}
    \mbf{0}_{(2n -1) \times 1} 
    & = n \cdot \mcl{S}_{n,\bgamma}(\hat \theta_n, \hat \bgamma_n) \\
    & = n \cdot \mcl{S}_{n,\bgamma} + n \cdot \mcl{S}_{n,\bgamma} (\hat \theta_n, \hat \bgamma_n)  - n \cdot \mcl{S}_{n, \bgamma} ( \theta_0, \hat \bgamma_n) + n \cdot \mcl{S}_{n,\bgamma} ( \theta_0, \hat \bgamma_n) - n \cdot \mcl{S}_{n, \bgamma} ( \theta_0, \bgamma_0) \\   
    & = n \cdot \mcl{S}_{n,\bgamma} + n \cdot \partial_{ \theta^\top } \mcl{S}_{n, \bgamma}(\bar \theta_n, \hat \bgamma_n) [\hat \theta_n - \theta_0] - \left( - n \cdot \mcl{H}_{n, \bgamma \bgamma } (\theta_0, \bar \bgamma_n) \right) [\hat \bgamma_{n,-1} - \bgamma_{0,-1}],
\end{align*}
where $\bar \theta_n \in [\hat \theta_n, \theta_0]$, and $\bar \bgamma_n \in [\hat \bgamma_n, \bgamma_0]$.
Thus, under Assumption \ref{as:hessian}(i),
\begin{align*}
    \hat \bgamma_{n,-1} - \bgamma_{0,-1}
    & = \left( - n \cdot \mcl{H}_{n, \bgamma \bgamma } (\theta_0, \bar \bgamma_n) \right)^{-1} n \cdot \mcl{S}_{n,\bgamma} + \left( - n \cdot \mcl{H}_{n, \bgamma \bgamma } (\theta_0, \bar\bgamma_n) \right)^{-1} n \cdot \partial_{ \theta^\top } \mcl{S}_{n, \bgamma}(\bar \theta_n, \hat \bgamma_n) [\hat \theta_n - \theta_0].
\end{align*}
As shown in the proof of Lemma \ref{lem:A_rate1}(ii)--(iii), $||n \cdot \mcl{S}_{n,\bgamma}|| = O_P(1)$.
For the second term, noting that $\partial_{ \theta^\top } \mcl{S}_{n, \bgamma}(\theta, \bgamma) = \frac{2}{N} \sum_{i = 1}^n \sum_{j > i} [\partial_{\mbf{A}_{-1}^\top} s_{i,j}^\theta (\theta, \bgamma),  \: \partial_{\mbf{B}^\top} s_{i,j}^\theta (\theta, \bgamma) ]^\top$.
Hence,
\begin{align*}
    \left\| n \cdot \partial_{ \theta^\top } \mcl{S}_{n, \bgamma}(\theta, \bgamma)  \right\|^2 
    & = n^2 \cdot \text{tr} \left\{ \left[ \partial_{ \theta^\top } \mcl{S}_{n, \bgamma}(\theta, \bgamma) \right]^\top \partial_{ \theta^\top } \mcl{S}_{n, \bgamma}(\theta, \bgamma) \right\} \\
    & = \text{tr} \left\{ \frac{4}{(n-1)^2} \sum_{i = 1}^n \sum_{j > i} \sum_{k = 1}^n \sum_{l > k} \left[\partial_{\mbf{A}_{-1}^\top} s_{i,j}^\theta (\theta, \bgamma) \right] \left[\partial_{\mbf{A}_{-1}^\top} s_{k,l}^\theta (\theta, \bgamma) \right]^\top \right\} \\
    & \quad + \text{tr} \left\{ \frac{4}{(n-1)^2} \sum_{i = 1}^n \sum_{j > i} \sum_{k = 1}^n \sum_{l > k} \left[\partial_{\mbf{B}^\top} s_{i,j}^\theta (\theta, \bgamma) \right] \left[\partial_{\mbf{B}^\top} s_{k,l}^\theta (\theta, \bgamma) \right]^\top \right\} \\
    & \equiv u^\mbf{A}_n(\theta, \bgamma) + u^\mbf{B}_n(\theta, \bgamma), \;\; \text{say.}
\end{align*}
Further,
\begin{align*}
    u^\mbf{A}_n(\theta, \bgamma) 
    & = \frac{4}{(n-1)^2} \sum_{i = 1}^n \sum_{j > i} \sum_{k = 1}^n \sum_{l > k} \text{tr}\left\{ \left[ s^\theta_{1,i,j}(\theta, \bgamma)\chi_{n,i,-1}^\top + s^\theta_{2,i,j}(\theta, \bgamma)\chi_{n,j,-1}^\top \right] \left[ \chi_{n,k,-1} s^\theta_{1,k,l}(\theta, \bgamma)^\top + \chi_{n,l,-1} s^\theta_{2,k,l}(\theta, \bgamma)^\top \right]  \right\} \\
    & = \frac{4}{(n-1)^2} \sum_{i = 1}^n \sum_{j > i} \sum_{k = 1}^n \sum_{l > k} \left[ \chi_{n,i,-1}^\top \chi_{n,k,-1} \cdot s^\theta_{1,k,l}(\theta, \bgamma)^\top s^\theta_{1,i,j}(\theta, \bgamma) + \chi_{n,j,-1}^\top \chi_{n,k,-1} \cdot s^\theta_{1,k,l}(\theta, \bgamma)^\top s^\theta_{2,i,j}(\theta, \bgamma)  \right. \\
    & \hspace{130pt}  + \left.  \chi_{n,i,-1}^\top\chi_{n,l,-1} \cdot s^\theta_{2,k,l}(\theta, \bgamma)^\top s^\theta_{1,i,j}(\theta, \bgamma) + \chi_{n,j,-1}^\top \chi_{n,l,-1} \cdot s^\theta_{2,k,l}(\theta, \bgamma)^\top s^\theta_{2,i,j}(\theta, \bgamma)  \right].
\end{align*}
Noting that $\chi_{n,i,-1}^\top\chi_{n,k,-1} = \mbf{1}\{i = k > 1\}$, we have
\begin{align*}
    \frac{4}{(n-1)^2} \sum_{i = 1}^n \sum_{j > i} \sum_{k = 1}^n \sum_{l > k} \chi_{n,i,-1}^\top \chi_{n,k,-1} \cdot s^\theta_{1,k,l}(\theta, \bgamma)^\top s^\theta_{1,i,j}(\theta, \bgamma)
    & = \frac{4}{(n-1)^2} \sum_{i = 2}^n \sum_{j > i} \sum_{l > i} s^\theta_{1,i,l}(\theta, \bgamma)^\top s^\theta_{1,i,j}(\theta, \bgamma) \\
    & = O(n)
\end{align*}
for any $(\theta, \bgamma) \in \Theta \times \mbb{C}_n$.
Applying the same discussion to the other terms, we obtain $ u^\mbf{A}_n(\theta, \bgamma) = O(n)$.
By the same argument, we can easily show that $ u^\mbf{B}_n(\theta, \bgamma) = O(n)$ for any $(\theta, \bgamma) \in \Theta \times \mbb{C}_n$.
Then, combined with Lemma \ref{lem:theta_rate}, we obtain $||n \cdot \partial_{ \theta^\top } \mcl{S}_{n, \bgamma}(\bar \theta_n, \hat \bgamma_n) [\hat \theta_n - \theta_0]|| \le ||n \cdot \partial_{ \theta^\top } \mcl{S}_{n, \bgamma}(\bar \theta_n, \hat \bgamma_n)|| \cdot || \hat \theta_n - \theta_0|| = O_P(1)$.

From these results, under Assumption \ref{as:hessian}(i), it holds that $||\hat \bgamma_n - \bgamma_0 || = O_P(1)$.
Finally, we obtain the desired result by the basic norm inequality, as in the proof of Lemma \ref{lem:A_rate1}(ii)--(iii).
\bigskip

(iii) Recall that, as in the proof of Lemma \ref{lem:A_rate1}(iv), we write $\mcl{L}_n(\theta, A_k, \bgamma_{-k}) = \mcl{L}_n(\theta, \bgamma)$ for $k \neq 1$.
By the first-order condition and mean value expansion, 
\begin{align*}
    0 
    & = n \cdot \partial_{A_k} \mcl{L}_n(\hat \theta_n, \hat A_{n,k}, \hat \bgamma_{n,-k}) \\
    & = \frac{2}{n-1}\sum_{i = 1}^n \sum_{j > i} s_{i,j}^{A_k}+ n \cdot \partial_{A_k} \mcl{L}_n(\hat \theta_n, \hat A_{n,k}, \hat \bgamma_{n,-k}) - n \cdot \partial_{A_k} \mcl{L}_n(\hat \theta_n, A_{0,k}, \hat \bgamma_{n,-k}) \\
    & \quad + n \cdot \partial_{A_k} \mcl{L}_n(\hat \theta_n, A_{0,k}, \hat \bgamma_{n,-k}) - n \cdot \partial_{A_k} \mcl{L}_n(\theta_0, A_{0,k}, \hat \bgamma_{n,-k}) + n \cdot \partial_{A_k} \mcl{L}_n(\theta_0, A_{0,k}, \hat \bgamma_{n,-k}) - n \cdot \partial_{A_k} \mcl{L}_n( \theta_0, A_{0,k}, \bgamma_{0,-k})  \\   
    & = \frac{2}{n-1}\sum_{i = 1}^n \sum_{j > i} s_{i,j}^{A_k} + \frac{2}{n-1}\sum_{j \neq k} h_{11,k,j} (\hat \theta_n, \bar A_{n,k}, \hat \bgamma_{n,-k}) [\hat A_{n,k} - A_{0,k}] \\
    & \quad + \frac{2}{n-1}\sum_{i = 1}^n \sum_{j > i} \partial_{\theta^\top} s_{i,j}^{A_k}(\bar \theta_n, A_{0,k}, \hat \bgamma_{n,-k}) [\hat \theta_n - \theta_0] + \frac{2}{n-1}\sum_{i = 1}^n \sum_{j > i} \partial_{\bgamma_{-k}^\top} s_{i,j}^{A_k}(\theta_0, A_{0,k}, \bar \bgamma_{n,-k}) [\hat \bgamma_{n,-k} - \bgamma_{0,-k}],
\end{align*}
where $\bar \theta_n \in [\hat \theta_n, \theta_0]$, $\bar A_{n,k} \in [\hat A_{n,k}, A_{0,k}]$, and $\bar \bgamma_{n,-k} \in [\hat \bgamma_{n,-k}, \bgamma_{0,-k}]$.
Then, similar to the proof of Lemma \ref{lem:A_rate1}(iv), we have
\begin{align*}
    \left| \hat A_{n,k} - A_{0,k} \right| \le (c + o_p(1))\left\{ \left| T_{1,n,k} \right| + \left| T_{2,n,k} \right| + \left| T_{3,n,k} \right| \right\}
\end{align*}
for some $c > 0$, where
\begin{align*}
    T_{1,n,k} 
    & \equiv \frac{2}{n-1}\sum_{i = 1}^n \sum_{j > i} s_{i,j}^{A_k}, \quad
    T_{2,n,k} \equiv \frac{2}{n-1}\sum_{i = 1}^n \sum_{j > i} \partial_{\bgamma_{-k}^\top} s_{i,j}^{A_k}(\theta_0, A_{0,k}, \bar \bgamma_{n,-k}) [\hat \bgamma_{n,-k} - \bgamma_{0,-k}], \\
    T_{3,n,k} & \equiv \frac{2}{n-1}\sum_{i = 1}^n \sum_{j > i} \partial_{\theta^\top} s_{i,j}^{A_k}(\bar \theta_n, A_{0,k}, \hat \bgamma_{n,-k}) [\hat \theta_n - \theta_0].
\end{align*}
As shown in \eqref{eq:T_1} and \eqref{eq:T_2}, $\max_{1 \le k \le n} |T_{1,n,k}| = O_P(\sqrt{\ln n/n})$ and $\max_{1 \le k \le n} |T_{2,n,k}| = O_P(n^{-1/2})$.
For $T_{3,n,k}$, observe that
\begin{align*}
    \frac{2}{n-1}\sum_{i = 1}^n \sum_{j > i} \partial_{\theta^\top} s_{i,j}^{A_k}(\theta, \bgamma) 
    & = \frac{2}{n-1} \sum_{i=1}^n \sum_{j > i} \left[ \mbf{1}\{i = k\} s_{1,i,j}^\theta(\theta, \bgamma)^\top + \mbf{1}\{j = k\} s_{2,i,j}^\theta(\theta, \bgamma)^\top \right] \\
    & = \frac{2}{n-1} \sum_{j > k} s_{1,k,j}^\theta(\theta, \bgamma)^\top + \frac{2}{n-1} \sum_{i < k} s_{2,i,k}^\theta(\theta, \bgamma)^\top.
\end{align*}
Hence, clearly, $\max_{1\le k \le n}||\frac{2}{n-1}\sum_{i = 1}^n \sum_{j > i} \partial_{\theta^\top} s_{i,j}^{A_k}(\theta, \bgamma)|| = O(1)$ for any $(\theta, \bgamma) \in \Theta \times \mbb{C}_n$.
Then, with Lemma \ref{lem:theta_rate},
\begin{align*}
    \max_{1 \le k \le n}|T_{3,n,k}| = O_P(n^{-1/2}).
\end{align*}
Combining these results yields $\max_{1 \le k \le n}|\hat A_{n,k} - A_{0,k}| = O_P(\sqrt{\ln n / n}) + O_P(n^{-1/2})$.

Similarly as above, we can also show that $\max_{1 \le k \le n}|\hat B_{n,k} - B_{0,k}| = O_P(\sqrt{\ln n / n}) + O_P(n^{-1/2})$.
This completes the proof.
\qed

\bigskip
\textbf{Proof of Theorem \ref{thm:grouping}}

To simplify the discussion, we focus on the estimation of $\mcl{C}_0^A$ with $K^A = 3$ only.
The other cases can be proved analogously.
In addition, for notational simplicity, we omit the superscript $A$ in this proof.

Now, let
\begin{align*}
    u_{n,i} \equiv \hat A_{n,i} - A_{0,i} \;\; \text{for} \;\; i = 1, \ldots, n.
\end{align*}
In particular, $u_{n,1} = 0$ holds by the normalization.
In accordance with the ordering $\hat A_{n,(1)} \le \cdots \le \hat A_{n,(n)}$, we permutate $A_{0,i}$'s and obtain $\{A_{0,(i)}\}$.
By Theorem \ref{thm:conv_rate}(iii), we have $\max_{1 \le i \le n}|u_{n,i}| = O_P(\sqrt{\ln n / n})$.
Hence, w.p.a.1, the sequence $\{ A_{0,(i)} \}$ contains two ``true'' break points $(t^0_1, t^0_2) \equiv (t^0_{n,1}, t^0_{n,2})$ in the following manner:
\begin{align*}
    A_{0,(i)}
    = \begin{cases}
    a_{0,1} \;\; \text{if} \;\; 1 \le i \le t^0_1 \\
    a_{0,2} \;\; \text{if} \;\; t^0_1 + 1 \le i \le t^0_2  \\
    a_{0,3} \;\; \text{if} \;\; t^0_2 + 1 \le i \le n.
    \end{cases}
\end{align*}
We can assume, without loss of generality, that $\hat S_{1,n}(t_1^0) < \hat S_{1,n}(t_2^0)$.
If $\hat S_{1,n}(t_1^0) > \hat S_{1,n}(t_2^0)$, by reversing the order of $\{\hat A_{n,(i)}\}$ and re-labeling the break points appropriately, we can prove the theorem completely analogously.
Recall that $\hat t_1 = \argmin_{1 \le \kappa < n} \hat S_{1,n}(\kappa)$.
We first show that $\Pr(\hat t_1 = t^0_1) \to 1$ by demonstrating that (i) $\Pr(\hat t_1 < t^0_1) \to 0$, (ii) $\Pr(t^0_1 < \hat t_1 \le t^0_2) \to 0$, and (iii) $\Pr(t^0_2 < \hat t_1) \to 0$.
\bigskip

(i) For a given $m < t_1^0$, we have
\begin{align*}
    \bar A_{n,1,m} 
    & = \frac{1}{m}\sum_{l = 1}^m (A_{0,(l)} + u_{n,(l)}) = a_{0,1} + \bar u_{n,1,m} \\
    \bar A_{n,m + 1,n}
    & = \frac{1}{n - m}\sum_{l = m + 1}^n  (A_{0,(l)} + u_{n,(l)}) = \frac{(t_1^0 - m) a_{0,1}}{n - m} + \frac{(t^0_2 - t^0_1) a_{0,2}}{n - m} + \frac{(n - t^0_2) a_{0,3}}{n - m} + \bar u_{n,m + 1,n},
\end{align*}
where $\bar u_{n,1,m} = m^{-1}\sum_{l = 1}^m u_{n,(l)}$, and $\bar u_{n,m + 1,n} = (n - m)^{-1}\sum_{l = m + 1}^n u_{n,(l)}$.
Hence, since $\hat A_{n,(l)} - \bar A_{n,1,m} = u_{n,(l)} - \bar u_{n,1,m}$ for $l \le m$, we have
\begin{align*}
    \hat \Delta(1,m) = \sum_{l=1}^m (u_{n,(l)} - \bar u_{n,1,m})^2.
\end{align*}
Similarly, since
\begin{align*} 
    \hat A_{n,(l)} - \bar A_{n,m + 1,n} = \begin{cases}
        a_{1m} + u_{n,(l)} - \bar u_{n,m + 1,n} & \text{if} \;\; m + 1 \le l \le t_1^0 \\
        a_{2m} + u_{n,(l)} - \bar u_{n,m + 1,n} & \text{if} \;\; t_1^0 + 1 \le l \le t_2^0 \\
        a_{3m} + u_{n,(l)} - \bar u_{n,m + 1,n} & \text{if} \;\; t_2^0 + 1 \le l \le n,
    \end{cases}
\end{align*}
where $a_{1m} \equiv  \frac{(t^0_2 - t^0_1) (a_{0,1} - a_{0,2})}{n - m} + \frac{(n - t^0_2) (a_{0,1} - a_{0,3})}{n - m}$, $a_{2m} \equiv  \frac{(t^0_1 - m) (a_{0,2} - a_{0,1})}{n - m} + \frac{(n - t^0_2) (a_{0,2} - a_{0,3})}{n - m}$, and $a_{3m} \equiv \frac{(t^0_1 - m) (a_{0,3} - a_{0,1})}{n - m} + \frac{(t^0_2 - t^0_1) (a_{0,3} - a_{0,2})}{n - m}$, we have
\begin{align*}
    \hat \Delta(m + 1, n)
    & = (t_1^0 - m) a_{1m}^2 + (t_2^0 - t_1^0) a_{2m}^2 + (n - t_2^0) a_{3m}^2 + \sum_{l= m + 1}^n (u_{n,(l)} - \bar u_{n,m + 1,n})^2 \\
    & \quad + 2  a_{1m} \sum_{l= m + 1}^{t_1^0} (u_{n,(l)} - \bar u_{n,m + 1,n}) + 2 a_{2m} \sum_{l= t_1^0 + 1}^{t_2^0} (u_{n,(l)} - \bar u_{n,m + 1,n}) + 2 a_{3m} \sum_{l= t_2^0 + 1}^n (u_{n,(l)} - \bar u_{n,m + 1,n}).
\end{align*}
Then, it holds that $\hat S_{1,n}(m) = \frac{1}{n}\left( \hat \Delta(1,m) + \hat \Delta(m + 1, n) \right) = \mu_1(m) + r_1(m)$, where
\begin{align*}
    \mu_1(m) 
    & \equiv \frac{t_1^0 - m}{n} a_{1m}^2 + \frac{t_2^0 - t_1^0}{n} a_{2m}^2 + \frac{n - t_2^0}{n} a_{3m}^2 \\
    r_1(m)
    & \equiv \frac{1}{n}\left[ \sum_{l=1}^m (u_{n,(l)} - \bar u_{n,1,m})^2 + \sum_{l= m + 1}^n (u_{n,(l)} - \bar u_{n,m + 1,n})^2 \right] \\
    & \quad + \frac{2  a_{1m}}{n} \sum_{l= m + 1}^{t_1^0} (u_{n,(l)} - \bar u_{n,m + 1,n}) + \frac{2 a_{2m}}{n} \sum_{l= t_1^0 + 1}^{t_2^0} (u_{n,(l)} - \bar u_{n,m + 1,n}) + \frac{2 a_{3m}}{n} \sum_{l= t_2^0 + 1}^n (u_{n,(l)} - \bar u_{n,m + 1,n}).
\end{align*}
By similar calculation, we can observe $\hat S_{1,n}(t_1^0) = \frac{1}{n}\left( \hat \Delta(1,t_1^0) + \hat \Delta(t_1^0 + 1, n) \right) = \mu_1(t_1^0) + r_1(t_1^0)$, where
\begin{align*}
    \hat \Delta(1,t_1^0) 
    & = \sum_{l=1}^{t_1^0} (u_{n,(l)} - \bar u_{n,1,t_1^0})^2 \\
    \hat \Delta(t_1^0 + 1, n)
    & =  (t_2^0 - t_1^0) a_{2 t_1^0}^2 + (n - t_2^0) a_{3 t_1^0}^2 + \sum_{l= t_1^0 + 1}^n (u_{n,(l)} - \bar u_{n, t_1^0 + 1,n})^2 \\
    & \quad + 2 a_{2 t_1^0} \sum_{l= t_1^0 + 1}^{t_2^0} (u_{n,(l)} - \bar u_{n,t_1^0 + 1,n}) + 2 a_{3 t_1^0} \sum_{l= t_2^0 + 1}^n (u_{n,(l)} - \bar u_{n,t_1^0 + 1,n}),
\end{align*}
with $a_{2t_1^0} \equiv  \frac{(n - t^0_2) (a_{0,2} - a_{0,3})}{n - t_1^0}$, and $a_{3t_1^0} \equiv  \frac{(t^0_2 - t_1^0) (a_{0,3} - a_{0,2})}{n - t_1^0}$, and
\begin{align*}
    \mu_1(t_1^0) 
    & \equiv \frac{t_2^0 - t_1^0}{n} a_{2 t_1^0}^2  + \frac{n - t_2^0}{n} a_{3 t_1^0}^2 \\
    r_1(t_1^0)
    & \equiv \frac{1}{n}\left[ \sum_{l=1}^{t_1^0} (u_{n,(l)} - \bar u_{n,1,t_1^0})^2 + \sum_{l= t_1^0 + 1}^n (u_{n,(l)} - \bar u_{n,t_1^0 + 1,n})^2 \right] \\
    & \quad + \frac{2 a_{2 t_1^0}}{n} \sum_{l= t_1^0 + 1}^{t_2^0} (u_{n,(l)} - \bar u_{n,t_1^0 + 1,n}) + \frac{2 a_{3t_1^0}}{n} \sum_{l= t_2^0 + 1}^n (u_{n,(l)} - \bar u_{n, t_1^0 + 1,n}).
\end{align*}
By tedious calculation, we can find that
\begin{align*}
  \mu_1(m) - \mu_1(t_1^0) 
   = \frac{(t_1^0 - m)}{n(1 - m/n)(1 - t_1^0/n)} \left[ \left(1 - \frac{t_1^0}{n}\right) (a_{0,1} - a_{0,2}) + \left(1 - \frac{t_2^0}{n} \right) (a_{0,2} - a_{0,3}) \right]^2.
\end{align*}
Note that $\mu_1(m) - \mu_1(t_1^0)  > 0$ for any $m < t_1^0$ under Assumption \ref{as:bs}(i).
Meanwhile,
\begin{align*}
   r_1(m) - r_1(t_1^0) 
   & = \frac{1}{n}\left[ \sum_{l=1}^m (u_{n,(l)} - \bar u_{n,1,m})^2 - \sum_{l=1}^{t_1^0} (u_{n,(l)} - \bar u_{n,1,t_1^0})^2 \right] \\
   & \quad + \frac{1}{n}\left[ \sum_{l= m + 1}^n (u_{n,(l)} - \bar u_{n,m + 1,n})^2 - \sum_{l= t_1^0 + 1}^n (u_{n,(l)} - \bar u_{n,t_1^0 + 1,n})^2 \right] \\
   & \quad + \frac{2 a_{1m}}{n} \sum_{l= m + 1}^{t_1^0} (u_{n,(l)} - \bar u_{n,m + 1,n}) \\
   & \quad + \frac{2}{n} \left( a_{2m}\sum_{l= t_1^0 + 1}^{t_2^0} (u_{n,(l)} - \bar u_{n,m + 1,n}) - a_{2 t_1^0} \sum_{l= t_1^0 + 1}^{t_2^0} (u_{n,(l)} - \bar u_{n,t_1^0 + 1,n}) \right) \\
   & \quad + \frac{2}{n} \left( a_{3m}\sum_{l= t_2^0 + 1}^n (u_{n,(l)} - \bar u_{n,m + 1,n}) - a_{3t_1^0}{n} \sum_{l= t_2^0 + 1}^n (u_{n,(l)} - \bar u_{n, t_1^0 + 1,n}) \right).
\end{align*}
Then, by carefully examining each term of the right hand,\footnote{
        For this calculation, see pp. 8--9 in Supplementary Material for \cite{wang2020identifying}.
    }
we can find that $r_1(m) - r_1(t_1^0) = \frac{t_1^0 - m}{n} \cdot \{O_P(\ln n/ n) + O_P(\sqrt{\ln n/ n})\}$.
Hence, we have 
\begin{align*}
    \hat S_{1,n}(m) - \hat S_{1,n}(t_1^0) 
    & = \underbrace{\mu_1(m) - \mu_1(t_1^0)}_{= \: n^{-1}(t_1^0 - m) \cdot c_n \; (c_n > 0)} + \underbrace{r_1(m) - r_1(t_1^0)}_{n^{-1}(t_1^0 - m) \cdot o_P(1)} \\
    & \ge 0
\end{align*}
w.p.a.1 for any $m < t_1^0$.
Thus, since $\hat t_1$ is a minimizer of $\hat S_{1,n}(m)$,
\begin{align*}
    \Pr( \hat t_1 < t_1^0) 
    & = \Pr(\hat S_{1,n}(\hat t_1) - \hat S_{1,n}(t_1^0) \ge 0, \: \hat t_1 < t_1^0) + \Pr(\hat S_{1,n}(\hat t_1) - \hat S_{1,n}(t_1^0) < 0, \: \hat t_1 < t_1^0) \\
    & = \Pr(\hat S_{1,n}(\hat t_1) - \hat S_{1,n}(t_1^0) < 0, \: \hat t_1 < t_1^0) \\
    & \le \Pr(\hat S_{1,n}(m) - \hat S_{1,n}(t_1^0) < 0, \: m < t_1^0 \;\; \text{for some $m$}) \to 0
\end{align*}
as $n \to \infty$.
Thus, (i) follows.

(ii) For a given $t_1^0 < m \le t_2^0$, we have
\begin{align*}
    \bar A_{n,1,m} 
    & = \frac{1}{m}\sum_{l = 1}^m (A_{0,(l)} + u_{n,(l)}) = \frac{t_1^0}{m} a_{0,1} + \frac{m - t_1^0}{m} a_{0,2} + \bar u_{n,1,m} \\
    \bar A_{n,m + 1,n}
    & = \frac{1}{n - m}\sum_{l = m + 1}^n  (A_{0,(l)} + u_{n,(l)}) =  \frac{(t^0_2 - m) a_{0,2}}{n - m} + \frac{(n - t^0_2) a_{0,3}}{n - m} + \bar u_{n,m + 1,n}.
\end{align*}
Hence, since
\begin{align*} 
    \hat A_{n,(l)} - \bar A_{n,1,m} = \begin{cases}
        a_{1m} + u_{n,(l)} - \bar u_{n,1,m} & \text{if} \;\; 1 \le l \le t_1^0 \\
        a_{2m} + u_{n,(l)} - \bar u_{n,1,m} & \text{if} \;\; t_1^0 + 1 \le l \le m,
    \end{cases}
\end{align*}
where $a_{1m} \equiv  \frac{(m - t^0_1) (a_{0,1} - a_{0,2})}{m}$, and $a_{2m} \equiv \frac{t^0_1 (a_{0,2} - a_{0,1})}{m}$, we have
\begin{align*}
    \hat \Delta(1,m) 
    & = t_1^0 a_{1m}^2 + (m-t_1^0) a_{2m}^2 + \sum_{l=1}^m (u_{n,(l)} - \bar u_{n,1,m})^2\\
    & \quad + 2 a_{1m} \sum_{l=1}^{t_1^0} (u_{n,(l)} - \bar u_{n,1,m}) + 2 a_{2m} \sum_{l=t_1^0 + 1}^m (u_{n,(l)} - \bar u_{n,1,m}).
\end{align*}
Similarly, since
\begin{align*} 
    \hat A_{n,(l)} - \bar A_{n,m + 1,n} = \begin{cases}
        a_{3m} + u_{n,(l)} - \bar u_{n,m + 1,n} & \text{if} \;\; m + 1 \le l \le t_2^0 \\
        a_{4m} + u_{n,(l)} - \bar u_{n,m + 1,n} & \text{if} \;\; t_2^0 + 1 \le l \le n,
    \end{cases}
\end{align*}
where $a_{3m} \equiv  \frac{(n - t^0_2) (a_{0,2} - a_{0,3})}{n - m}$, and $a_{4m} \equiv  \frac{(t^0_2 - m) (a_{0,3} - a_{0,2})}{n - m}$, we have
\begin{align*}
    \hat \Delta(m + 1, n)
    & = (t_2^0 - m) a_{3m}^2 + (n - t_2^0) a_{4m}^2 + \sum_{l= m + 1}^n (u_{n,(l)} - \bar u_{n,m + 1,n})^2 \\
    & \quad + 2  a_{3m} \sum_{l= m + 1}^{t_2^0} (u_{n,(l)} - \bar u_{n,m + 1,n}) + 2 a_{4m} \sum_{l= t_2^0 + 1}^n (u_{n,(l)} - \bar u_{n,m + 1,n}).
\end{align*}
Then, it holds that $\hat S_{1,n}(m) = \frac{1}{n}\left( \hat \Delta(1,m) + \hat \Delta(m + 1, n) \right) = \mu_2(m) + r_2(m)$, where
\begin{align*}
    \mu_2(m) 
    & \equiv \frac{t_1^0}{n} a_{1m}^2 + \frac{m - t_1^0}{n} a_{2m}^2 + \frac{t_2^0 - m}{n} a_{3m}^2 + \frac{n - t_2^0}{n} a_{4m}^2 \\
    r_2(m)
    & \equiv \frac{1}{n}\left[ \sum_{l=1}^m (u_{n,(l)} - \bar u_{n,1,m})^2 + \sum_{l= m + 1}^n (u_{n,(l)} - \bar u_{n,m + 1,n})^2 \right] \\
    & \quad + \frac{2  a_{1m}}{n} \sum_{l=1}^{t_1^0} (u_{n,(l)} - \bar u_{n,1,m}) + \frac{2 a_{2m}}{n} \sum_{l=t_1^0 + 1}^m (u_{n,(l)} - \bar u_{n,1,m}) \\
    & \quad + \frac{2 a_{3m}}{n} \sum_{l= m + 1}^{t_2^0} (u_{n,(l)} - \bar u_{n,m + 1,n}) + \frac{2 a_{4m}}{n} \sum_{l= t_2^0 + 1}^n (u_{n,(l)} - \bar u_{n,m + 1,n}).
\end{align*}
To be consistent with the above notations, we re-write $\hat S_{1,n}(t_1^0) = \frac{1}{n}\left( \hat \Delta(1,t_1^0) + \hat \Delta(t_1^0 + 1, n) \right) = \mu_2(t_1^0) + r_2(t_1^0)$, where
\begin{align*}
    \mu_2(t_1^0) 
    & \equiv \frac{t_2^0 - t_1^0}{n} a_{3 t_1^0}^2  + \frac{n - t_2^0}{n} a_{4 t_1^0}^2 \\
    r_2(t_1^0)
    & \equiv \frac{1}{n}\left[ \sum_{l=1}^{t_1^0} (u_{n,(l)} - \bar u_{n,1,t_1^0})^2 + \sum_{l= t_1^0 + 1}^n (u_{n,(l)} - \bar u_{n,t_1^0 + 1,n})^2 \right] \\
    & \quad + \frac{2 a_{3 t_1^0}}{n} \sum_{l= t_1^0 + 1}^{t_2^0} (u_{n,(l)} - \bar u_{n,t_1^0 + 1,n}) + \frac{2 a_{4 t_1^0}}{n} \sum_{l= t_2^0 + 1}^n (u_{n,(l)} - \bar u_{n, t_1^0 + 1,n}),
\end{align*}
with $a_{3t_1^0} \equiv  \frac{(n - t^0_2) (a_{0,2} - a_{0,3})}{n - t_1^0}$, and $a_{4t_1^0} \equiv  \frac{(t^0_2 - t_1^0) (a_{0,3} - a_{0,2})}{n - t_1^0}$.
By tedious calculation, we can find that
\begin{align*}
    \mu_2(m) - \mu_2(t_1^0) 
    & = \frac{m - t_1^0}{n}\left[ \frac{t_1^0}{m}(a_{0,1} - a_{0,2})^2 - \frac{(n - t_2^0)^2}{(n - m)(n - t_1^0)}(a_{0,2} - a_{0,3})^2 \right]\\
    & \ge \frac{m - t_1^0}{n}\left[ \frac{t_1^0}{m}(a_{0,1} - a_{0,2})^2 - \frac{t_2^0 (n - t_2^0)}{m(n - t_1^0)}(a_{0,2} - a_{0,3})^2 \right]\\
    & = \frac{m - t_1^0}{n}\frac{t_2^0}{m}\left[ \frac{t_1^0}{t_2^0}(a_{0,1} - a_{0,2})^2 - \frac{n - t_2^0}{n - t_1^0}(a_{0,2} - a_{0,3})^2 \right],
\end{align*}
where the inequality follows because $(n-t_2^0)/(n - m) \le t_2^0/m$.
Further, it can be shown that $r_2(m) - r_2(t_1^0) = n^{-1}(m - t_1^0) \cdot o_P(1)$ uniformly in $t_1^0 < m \le t_2^0$.
Here, recall that we have assumed $\hat S_{1,n}(t_1^0) < \hat S_{1,n}(t_2^0)$.
Through straightforward calculation, we can find that 
\begin{align*}
    0 < \hat S_{1,n}(t_2^0) - \hat S_{1,n}(t_1^0)
    & = \frac{t_1^0}{n} a_{1 t_2^0}^2  + \frac{t_2^0 - t_1^0}{n} a_{2 t_2^0}^2 - \frac{t_2^0 - t_1^0}{n} a_{3 t_1^0}^2 - \frac{n - t_2^0}{n} a_{4 t_1^0}^2 + \frac{t_2^0 - t_1^0}{n} \cdot o_P(1)\\
    & = \frac{t^0_2 - t_1^0}{n} \left[ \frac{ t_1^0 }{t_2^0} (a_{0,1} - a_{0,2})^2 - \frac{n - t^0_2}{n - t_1^0} (a_{0,2} - a_{0,3})^2 \right] + \frac{t_2^0 - t_1^0}{n} \cdot o_P(1),
\end{align*}
where $a_{1t_2^0} \equiv  \frac{(t^0_2 - t_1^0) (a_{0,1} - a_{0,2})}{t_2^0}$, and $a_{2t_2^0} \equiv  \frac{t_1^0 (a_{0,2} - a_{0,1})}{t_2^0}$.
This implies that
\begin{align*}
    \frac{ t_1^0 }{t_2^0} (a_{0,1} - a_{0,2})^2 - \frac{n - t^0_2}{n - t_1^0} (a_{0,2} - a_{0,3})^2 
    \to \frac{ \tau_1 }{\tau_1 + \tau_2} (a_{0,1} - a_{0,2})^2 - \frac{\tau_3}{\tau_2 + \tau_3} (a_{0,2} - a_{0,3})^2 > 0
\end{align*}
by Assumption \ref{as:bs}(ii).
Hence, $\mu_2(m) - \mu_2(t_2^0) > 0$ holds for any $t_1^0 < m \le t_2^0$ for sufficiently large $n$.
Then, combining these results implies that $\hat S_{1,n}(m) - \hat S_{1,n}(t_1^0) > 0$ w.p.a.1 uniformly in $t_1^0 < m \le t_2^0$, leading to the desired result:
\begin{align*}
    \Pr( t_1^0 < \hat t_1 \le t_2^0) 
    & = \Pr(\hat S_{1,n}(\hat t_1) - \hat S_{1,n}(t_1^0) \ge 0, \: t_1^0 < \hat t_1 \le t_2^0) + \Pr(\hat S_{1,n}(\hat t_1) - \hat S_{1,n}(t_1^0) < 0, \: t_1^0 < \hat t_1 \le t_2^0) \\
    & = \Pr(\hat S_{1,n}(\hat t_1) - \hat S_{1,n}(t_1^0) < 0, \: t_1^0 < \hat t_1 \le t_2^0) \\
    & \le \Pr(\hat S_{1,n}(m) - \hat S_{1,n}(t_1^0) < 0, \: t_1^0 < m \le t_2^0 \;\; \text{for some $m$}) \to 0.
\end{align*}
Then, (ii) also follows. 
For case (iii), since this case is symmetric to (i), we can safely omit the proof.
\bigskip

Once the first break point is obtained, we can partition $\{\hat A_{n,(i)}\}$ into two subregions $\{\hat A_{n,(i)}\}_{i=1}^{\hat t_1}$ and $\{\hat A_{n,(i)}\}_{i = \hat t_1 + 1}^n$.
We next estimate in which group the second break point exists.
Given the above consistency result, w.p.a.1, we have $\hat S_{1,\hat t_1}(\hat t_1) = \hat S_{1, t_1^0}(t_1^0)$ and $\hat S_{\hat t_1 + 1, n}(n) = \hat S_{t_1^0 + 1, n}(n)$.
Since $t_1^0 < t_2^0$, it suffices to show that $\Pr(\hat S_{1, t_1^0}(t_1^0) < \hat S_{t_1^0 + 1, n}(n)) \to 1$.
We can observe that $\hat S_{1, t_1^0}(t_1^0) = \frac{1}{t_1^0} \hat \Delta(1, t_1^0) = r^*_{t_1^0}$ and $\hat S_{t_1^0 + 1, n}(n) = \frac{1}{n - t_1^0} \hat \Delta(t_1^0 + 1, n) = \mu^*_n + r^*_n$, where
\begin{align*}
    r_{t_1^0}^*
    & \equiv \frac{1}{t_1^0}\sum_{l = 1}^{t_1^0} (u_{n,(l)} - \bar u_{n,1,t_1^0})^2 \\
    \mu_n^*
    & \equiv \frac{t_2^0 - t_1^0}{n - t_1^0} (a_{1n}^*)^2 + \frac{n - t_2^0}{n - t_1^0} (a_{2n}^*)^2 \\
    r_n^*
    & \equiv \frac{1}{n - t_1^0}\sum_{l = t_1^0 + 1}^n (u_{n,(l)} - \bar u_{n,t_1^0 + 1, n})^2 + \frac{2 a_{1n}^*}{n - t_1^0} \sum_{l= t_1^0 + 1}^{t_2^0} (u_{n,(l)} - \bar u_{n,t_1^0 + 1,n}) + \frac{2 a_{2n}^*}{n - t_1^0} \sum_{l= t_2^0 + 1}^n (u_{n,(l)} - \bar u_{n,t_1^0 + 1,n}),   
\end{align*}
with $a^*_{1n} \equiv \frac{(n - t_2^0)(a_{0,2} - a_{0,3})}{n - t_1^0}$, and $a^*_{2n} \equiv \frac{(t_2^0 - t_1^0)(a_{0,3} - a_{0,2})}{n - t_1^0}$.
Through some calculation, we can find that
\begin{align*}
    \mu_n^* = \frac{(t_2^0 - t_1^0) (n - t_2^0) }{(n - t_1^0)^2} (a_{0,2} - a_{0,3})^2 \to \frac{\tau_2 \tau_3 }{(\tau_2 + \tau_3)^2} (a_{0,2} - a_{0,3})^2 > 0
\end{align*}
under Assumptions \ref{as:bs}(i) and (ii).
It can be easily seen that $r_n^* - r_{t_1^0}^* = o_P(1)$.
Hence, $\hat S_{t_1^0 + 1, n}(n) - \hat S_{1, t_1^0}(t_1^0) > 0$ holds w.p.a.1, as desired.
Then, once the subset $\{\hat A_{n,(i)}\}_{i = \hat t_1 + 1}^n$ is selected for the detection of the second break point, it can be estimated by $\hat t_2 = \argmin_{\hat t_1 + 1 \le \kappa < n} \hat S_{\hat t_1 + 1,n}(\kappa)$.
The consistency of $\hat t_2$ for $t_2^0$ follows from the same argument as above.
\qed

\bigskip
\textbf{Proof of Theorem \ref{thm:normality}}

The consistency of $\hat \delta_n^\mathrm{oracle}$ is straightforward from Theorem \ref{thm:consistency}.
By the first-order condition, Taylor expansion, and the law of large numbers, we have
\begin{align*}
    \sqrt{\frac{N}{2}}(\hat \delta_n^\mathrm{oracle} - \delta_0) = - \E\left[ \partial^2_{\delta \delta^\top} \mcl{L}_n(\delta_0) \right]^{-1} \sqrt{\frac{2}{N}} \sum_{i = 1}^n \sum_{j > i} s^\delta_{i,j}(\delta_0) + o_P(1). 
\end{align*}
Note that, under the assumptions made, $\{s^\delta_{i,j}(\delta_0)\}$ are uniformly bounded, and $\sum_{i = 1}^n \sum_{j > i} s^\delta_{i,j}(\delta_0)$ is a sum of independent random variables.
Thus, the asymptotic normality result follows from the central limit theorem for bounded random variables (see, e.g., Example 27.4 in \cite{billingsley2012probability}).

The asymptotic distributional equivalence result is straightforward from
\begin{align*}
    \Pr\left( \sqrt{\frac{N}{2}}(\hat \delta_n - \delta_0) \in C \right)
    & = \Pr\left( \sqrt{\frac{N}{2}}(\hat \delta_n - \delta_0) \in C, \; (\hat{\mcl{C}}^A_n , \hat{\mcl{C}}^B_n) = (\mcl{C}_0^A, \mcl{C}_0^B) \right) \\
    & \quad + \Pr\left( \sqrt{\frac{N}{2}}(\hat \delta_n - \delta_0) \in C, \; (\hat{\mcl{C}}^A_n , \hat{\mcl{C}}^B_n) \neq (\mcl{C}_0^A, \mcl{C}_0^B) \right) \\
    & = \Pr\left( \sqrt{\frac{N}{2}}(\hat \delta_n^\mathrm{oracle} - \delta_0) \in C \right) + o(1)
\end{align*}
for any $C \subseteq \mbb{R}^{d_z + K^A + K^B + 1}$.
\qed


\section{Supplementary Materials}
\subsection{Explicit form of $\partial^2 \mcl{L}_n^*(\rho)/(\partial \rho)^2$}\label{subsec:rho_unique}

For notational simplicity, let $\mbf{C} = (\beta^\top, \alpha, \bgamma^\top)^\top$ and $\tilde{\mbf{C}}_0(\rho) = (\tilde \beta_0(\rho)^\top, \tilde \alpha_0(\rho), \tilde \bgamma_0(\rho)^\top)^\top$, and write $\mcl{L}_n(\theta, \bgamma)$ equivalently as $\mcl{L}_n(\rho, \mbf{C})$, so that $\tilde{\mbf{C}}_0(\rho) = \argmax_{\mbf{C} \in \mcl{B} \times \mcl{A} \times \mbb{C}_n}\mcl{L}_n(\rho, \mbf{C})$ and $\mcl{L}_n^*(\rho) = \E \mcl{L}_n(\rho, \tilde{\mbf{C}}_0(\rho))$.
By the chain rule, 
\begin{align*}
    \partial_\rho \mcl{L}_n^*(\rho) = \partial_{\rho_1} \E \mcl{L}_n(\rho, \tilde{\mbf{C}}_0(\rho)) + \partial_\mbf{C^\top} \E\mcl{L}_n(\rho, \tilde{\mbf{C}}_0(\rho)) \partial_\rho \tilde{\mbf{C}}_0(\rho) = \partial_{\rho_1} \E \mcl{L}_n(\rho, \tilde{\mbf{C}}_0(\rho)),
\end{align*}
where the second equality holds for any $\rho \in \mcl{R}$ by the first-order condition for $\tilde{\mbf{C}}_0(\rho)$, and we have denoted $\partial_{\rho_1}$ to stand for the partial derivative with respect to the ``first'' $\rho$.
Then, we also have
\begin{align*}
    \partial^2_{\rho\rho} \mcl{L}_n^*(\rho)
    & = \partial^2_{\rho_1 \rho_1} \E \mcl{L}_n(\rho, \tilde{\mbf{C}}_0(\rho)) + \partial^2_{ \rho_1 \mbf{C^\top}} \E \mcl{L}_n(\rho, \tilde{\mbf{C}}_0(\rho)) \partial_\rho \tilde{\mbf{C}}_0(\rho).
\end{align*}
In addition, applying the implicit function theorem to $\partial_\mbf{C} \E\mcl{L}_n(\rho, \tilde{\mbf{C}}_0(\rho)) = \mbf{0}_{(d_z + 2n) \times 1}$ for $\rho \in \mcl{R}$ yields
\begin{align*}
    & \mbf{0}_{(d_z + 2n) \times 1} = \partial^2_{\mbf{C} \rho } \E\mcl{L}_n(\rho, \tilde{\mbf{C}}_0(\rho)) = \partial^2_{ \mbf{C} \rho_1} \E\mcl{L}_n(\rho, \tilde{\mbf{C}}_0(\rho)) + \partial^2_{\mbf{C} \mbf{C}^\top} \E\mcl{L}_n(\rho, \tilde{\mbf{C}}_0(\rho)) \partial_\rho \tilde{\mbf{C}}_0(\rho)\\
    & \Longrightarrow \partial_\rho \tilde{\mbf{C}}_0(\rho) = -\left[ \partial^2_{\mbf{C} \mbf{C}^\top} \E\mcl{L}_n(\rho, \tilde{\mbf{C}}_0(\rho)) \right]^{-1} \partial^2_{ \mbf{C} \rho_1} \E\mcl{L}_n(\rho, \tilde{\mbf{C}}_0(\rho)).
\end{align*}
Therefore, we obtain
\begin{align*}
     \partial^2_{\rho \rho} \mcl{L}_n^*(\rho) = \partial^2_{\rho_1 \rho_1} \E \mcl{L}_n(\rho, \tilde{\mbf{C}}_0(\rho)) - \partial^2_{ \rho_1 \mbf{C}^\top} \E \mcl{L}_n(\rho, \tilde{\mbf{C}}_0(\rho)) \left[ \partial^2_{\mbf{C} \mbf{C}^\top} \E\mcl{L}_n(\rho, \tilde{\mbf{C}}_0(\rho)) \right]^{-1} \partial^2_{ \mbf{C} \rho_1 } \E\mcl{L}_n(\rho, \tilde{\mbf{C}}_0(\rho)).
\end{align*}
Note that since $\partial^2_{\mbf{C} \mbf{C}^\top} \E\mcl{L}_n(\rho, \tilde{\mbf{C}}_0(\rho))$ is negative semidefinite by the second-order condition of maximization, the second term on the right-hand side is non-positive.
Thus, to ensure that $\partial^2_{\rho \rho} \mcl{L}_n^*(\rho)$ is strictly negative, $\partial^2_{\rho_1 \rho_1} \E \mcl{L}_n(\rho, \tilde{\mbf{C}}_0(\rho))$ must also be negative and  larger in magnitude than the second term.

\subsection{Supplementary tables for Section \ref{sec:empiric}}\label{subsec:empiric}

\begin{table}[!h]
    \begin{center}
        \caption{In-degree and Out-degree}
    \begin{tabular}{lccllccllcc}
        \hline\hline
    Country & In-degree & Out-degree &  &  &  &  &  &  &  & \\
        \cline{1-3}
        Armenia & 21 & 46 &  & (cont') &  &  &  & (cont') &  &  \\
        Australia & 43 & 6 &  & Japan & 48 & 14 &  & Papua New Guinea & 18 & 21 \\
        Azerbaijan & 24 & 24 &  & Jordan & 13 & 41 &  & Philippines & 22 & 38 \\
        Bahrain & 33 & 22 &  & Kazakhstan & 29 & 34 &  & Qatar & 29 & 31 \\
        Bangladesh & 6 & 39 &  & Kiribati & 22 & 15 &  & Russia & 37 & 18 \\
        Belarus & 26 & 33 &  & Kuwait & 35 & 23 &  & Saudi Arabia & 31 & 19 \\
        Bhutan & 15 & 2 &  & Kyrgyzstan & 24 & 33 &  & Singapore & 47 & 34 \\
        Brunei & 43 & 26 &  & Laos & 17 & 49 &  & South Korea & 49 & 25 \\
        Cambodia & 15 & 54 & & Latvia & 40 & 19 &   & Sri Lanka & 7 & 53 \\
        China & 24 & 9 &  & Lebanon & 13 & 34 &  & Tajikistan & 23 & 36 \\
        Cyprus & 41 & 19 &  & Lithuania & 31 & 18 &  & Thailand & 31 & 37 \\
        Estonia & 42 & 19 &  & Malaysia & 46 & 46 &  & Tonga & 23 & 21 \\
        Fiji & 23 & 29 &  & Moldova & 26 & 28 &  & Turkey & 35 & 27 \\
        Georgia & 29 & 32 &  & Mongolia & 21 & 20 &  & UAE & 44 & 20 \\
        Hong Kong & 43 & 33 &  & Myanmar & 13 & 16 &  & Ukraine & 36 & 29 \\
        India & 13 & 3 &  & Nauru & 21 & 7 &  & Uzbekistan & 22 & 28 \\
        Indonesia & 25 & 51 &  & Nepal & 9 & 55 &  & Vanuatu & 27 & 33 \\
        Iran & 10 & 48 &  & New Zealand & 45 & 18 &  & Viet Nam & 16 & 14 \\
        Iraq & 5 & 5 &  & Oman & 34 & 5 &  & Yemen & 7 & 9 \\
        Israel & 33 & 21 &  & Pakistan & 5 & 21 &  &  &  &  \\
        \hline\hline
    \end{tabular}
    \label{table:degree}
    \end{center}
\end{table}

\begin{table}[!h]
    \begin{center}
    \caption{Summary Statistics}
    \begin{tabular}{lccccc}
            \hline \hline
    & Mean & Std. Dev. & Min. & Max. & \# Observations \\
            \hline
    $\textit{gdp\_pc}_i$ & 14.4409  & 17.3406   & 0.8071  & 67.5714  & 57 \\
    $\textit{free}_i$ & 4.0789  & 1.9291  & 1  & 7  & 57 \\
    $\textit{export}_{ij}$ & 2.8562  & 2.9874  & 0  & 12.6580  & 3,192 \\
    $\textit{import}_{ij}$ & 2.9219  & 2.9773  & 0  & 12.5461  & 3,192 \\
            \hline \hline
    \end{tabular}
    \label{table:summary}
    \end{center}
\end{table}

\begin{table}[!h]
    \begin{center}
    \caption{BIC Model Selection}
    \begin{tabular}{lc|cccccc}
        \hline \hline
        &  & $K^B$ & & &   &   &  \\
        &  & 2 & 3 & 4 & 5 & 6 & 7 \\
        \hline
        $K^A$ & 2 & 2276.659  & 2078.159  & 2122.551  & 2102.056  & 2104.882  & 2117.091  \\
        & 3 & 2033.604  & 1935.613  & 1909.670  & 1903.809  & 1899.412  & 1909.794  \\
        & 4 & 1985.267  & 1883.387  & 1837.424  & 1834.551  & 1823.058  & 1848.088  \\
        & 5 & 1931.314  & 1837.387  & 1798.743  & 1789.363  & 1787.481  & 1795.286  \\
        & 6 & 1918.471  & 1808.645  & 1775.658  & 1769.636  & 1765.887  & 1768.170  \\
        & 7 & 1897.173  & 1807.099  & 1777.586  & 1766.433  & 1761.725  & 1764.802  \\       
        \hline \hline
    \end{tabular}
    \label{table:BIC}
\end{center}
\end{table}

\bibliography{ref}

\end{document}